\def\cluster#1{\ensuremath{[#1]_c}}
 \newcommand{\stkout}[1]{\ifmmode\text{\sout{\ensuremath{#1}}}\else\sout{#1}\fi}
\begin{document}
\setcounter{page}{1}

\title{Free-Choice Nets With Home Clusters Are Lucent\footnote{This paper was submitted  to Fundamenta Informaticae on 9-8-2020 and accepted on 2-6-2021.}}

\author{Wil M.P. van der Aalst \\
Process and Data Science (PADS)\\ 
RWTH Aachen University, Germany\\
wvdaalst{@}pads.rwth-aachen.de}

\runninghead{W.M.P. van der Aalst}{Free-Choice Nets With Home Clusters Are Lucent}
\maketitle

\begin{abstract}
A marked Petri net is \emph{lucent} if there are no two different reachable markings enabling the same set of transitions, i.e.,
states are fully characterized by the transitions they enable.
Characterizing the class of systems that are lucent is a foundational and also challenging question.
However, little research has been done on the topic.
In this paper, it is shown that all \emph{free-choice nets having a home cluster} are lucent. 
These nets have a so-called home marking such that it is always possible to reach this marking again.
Such a home marking can serve as a regeneration point or 
as an end-point.
The result is highly relevant because 
in many applications, we want the system to be lucent and 
many ``well-behaved'' process models fall into the class identified in this paper.
Unlike previous work, we do not require the marked Petri net to be live and strongly-connected.
Most of the analysis techniques for free-choice nets are tailored towards well-formed nets.
The approach presented in this paper provides a novel perspective
enabling new analysis techniques for free-choice nets that do not need to be well-formed.
Therefore, we can also model systems and processes that are terminating and/or have an initialization phase.
\end{abstract}

\begin{keywords}
Petri nets, Free-Choice Nets, Lucent Process Models
\end{keywords}

\section{Introduction}
\label{sec:intro}

Petri nets can be used to model systems and processes.
Many properties have been defined for Petri nets
that describe desirable characteristics of the modeled system or process \cite{lopn1,structure-theory-ToPNoC-advanced-course2010,murata}.
Examples include 
deadlock-freeness (the system is always able to perform an action),
liveness (actions cannot get disabled permanently),
boundedness (the number is states is finite),
safeness (objects cannot be at the same location at the same time),
soundness (a case can always terminate properly) \cite{soundness-FACS}, etc.
In this paper, we investigate another foundational property: \emph{lucency}. 
A system is lucent if it does not have different reachable states that enable the same actions, i.e., the set of enabled actions
uniquely characterizes the state of the system \cite{lucent-PN2018}.
Think of an information system that has a user interface showing what the user can do. In this example, lucency implies that the offered actions fully determine the internal state
and the system will behave consistently from the user's viewpoint.
If the information system would not be lucent, the user could encounter situations where the set of offered actions is the same, but the behavior is very different.
Another example is the worklist of a workflow management system that shows the workitems that can or should be executed. 
Lucency implies that the state of a case can be derived based on the workitems offered for it.

In a Petri net setting, lucency can be defined as follows.
\emph{A marked Petri net is lucent if there are no two different reachable markings enabling the same set of transitions, i.e.,
markings are fully characterized by the transitions they enable.}
\begin{figure}[thb!]
\centering
\includegraphics[width=9.0cm]{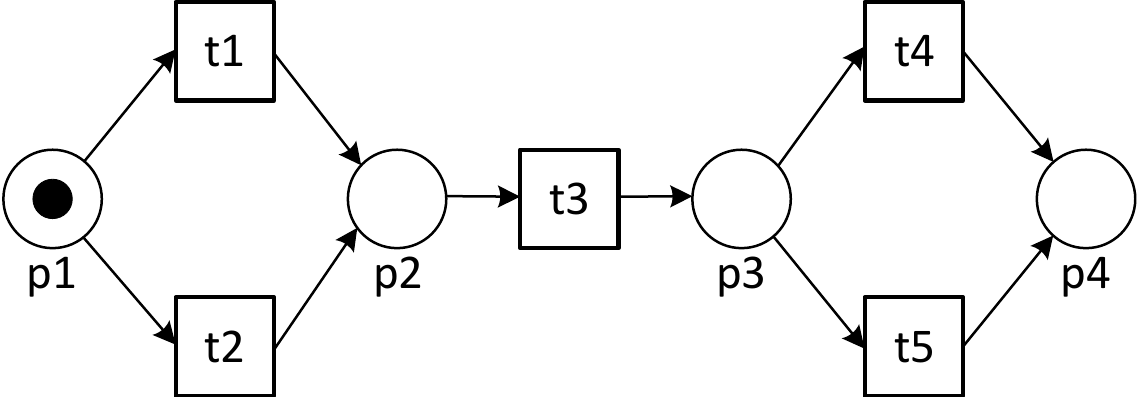}
\caption{$(N_1,M_1)$ is a free-choice net that is lucent, has a home cluster, but is not perpetual.}
\label{fig-intro-fc}
\end{figure}

Figure~\ref{fig-intro-fc} shows a marked Petri net that is lucent.
Each of the four reachable markings has a different set of enabled transitions.
Figure~\ref{fig-intro-nfc} shows a marked Petri net that is not lucent. Initially, one of the transitions $t1$ or $t2$ can occur, leading to two different states (the markings $[p2,p5]$ and $[p2,p6]$) that cannot be distinguished.
Only transition $t3$ is enabled, but the internal state matters.
$t1$ is always followed by $t4$ and $t2$ is always followed by $t5$.
\begin{figure}[thb!]
\centering
\includegraphics[width=9.0cm]{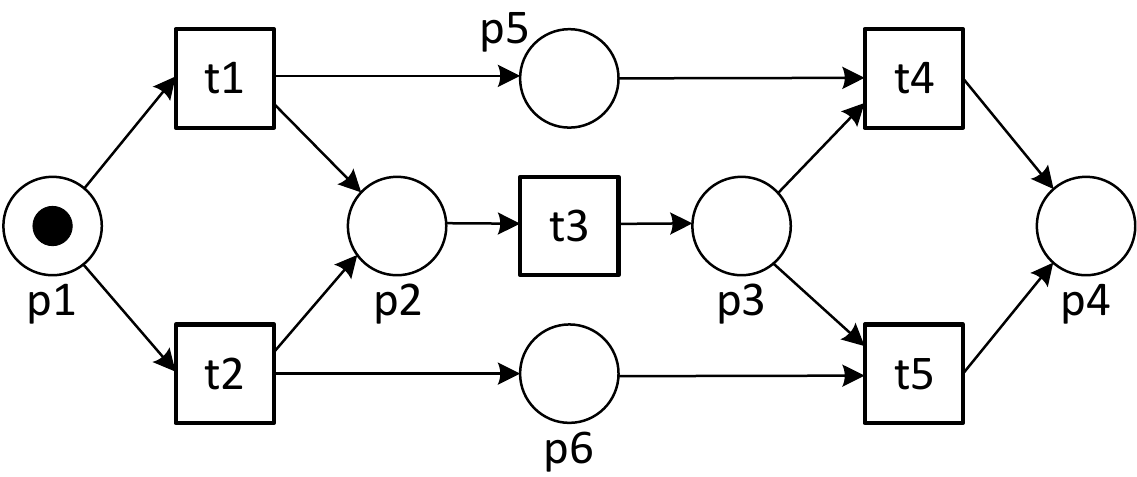}
\caption{$(N_2,M_2)$ is a non-free-choice net that is not lucent because the markings $[p2,p5]$ and $[p2,p6]$ enable the same set of transitions (just $t3$), thereby hiding the internal state.}
\label{fig-intro-nfc}
\end{figure}

Although we focus on Petri nets, lucency is a general notion that is independent of the modeling language used.
Even though lucency is an easy to define and foundational property, 
it was not investigated until recently \cite{lucent-PN2018,lucent-translucent-fi-2019}. As described in \cite{lucent-translucent-fi-2019},
lucent process models are easier to discover from event data.
When the underlying process has states that are different, 
but that enable the same set of activities, 
then it is obviously not easy to learn these ``hidden'' states.
Commercial process mining systems mostly use the so-called Directly-Follows Graph (DFG) as a process model. Here the ``state'' is considered to be the last activity executed.
DFGs have problems dealing with concurrent processes and tend to produce imprecise and ``Spaghetti-like'' models because of that. More advanced process discovery techniques are able to discover concurrent process models \cite{process-mining-book-2016}, but need to ``guess'' the state of the process after each event. When using, for example, region theory, the state is often assumed to be the prefix of activities (or the multiset of activities already executed), leading to overfitting and incompleteness problems (one needs to see all possible prefixes).
For lucent process models, this problem is slightly easier because the state is fully determined by the set of enabled activities. See \cite{lucent-translucent-fi-2019} for more details about the discovery of lucent process models using translucent event logs.

Given the examples in Figures~\ref{fig-intro-fc} and \ref{fig-intro-nfc}, there seems a natural connection between the well-known free-choice property \cite{deselesparza} and lucency. 
In a free-choice net, choice and synchronization can be separated.
However, as illustrated by Figure~\ref{fig-locally-safe-not-perpetual}, it is not enough to require that the net is free-choice. $(N_3,M_3)$ shown in Figure~\ref{fig-locally-safe-not-perpetual} is free-choice. It is actually a marked graph since there are no choices (i.e., places with multiple output arcs). The model in Figure~\ref{fig-locally-safe-not-perpetual} satisfies most of the (often considered desirable) properties defined for Petri nets.
$(N_3,M_3)$ is deadlock-free, live, bounded, safe, well-formed, free-choice, all markings are home markings, etc.
However, surprisingly $(N_3,M_3)$ is not lucent because the two reachable markings $[p1,p3,p6]$ and $[p1,p4,p6]$ enable the same set of transitions ($t1$ and $t4$).
This example shows that lucency does not coincide with any (or a combination) of the properties normally considered.
\begin{figure}[thb!]
\centering
\includegraphics[width=8.0cm]{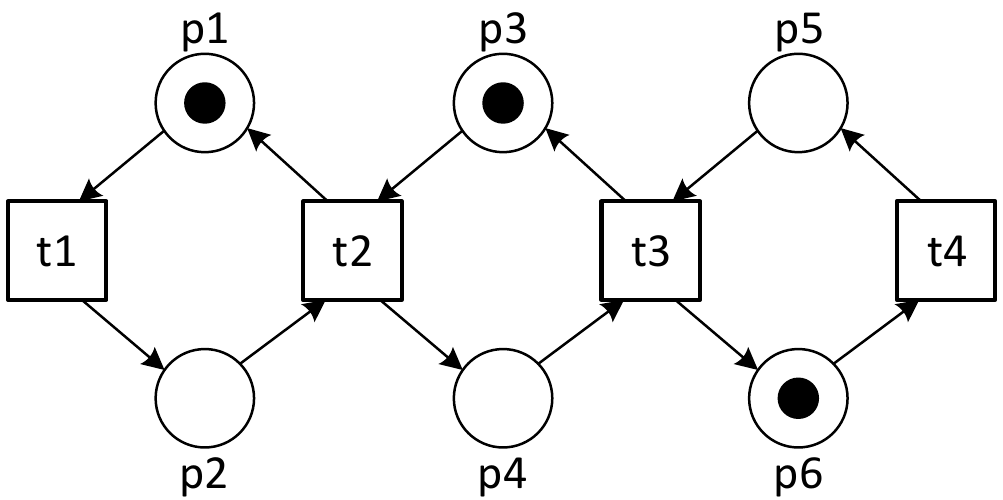}
\caption{$(N_3,M_3)$ is a marked graph that is not lucent because the markings $[p1,p3,p6]$ and $[p1,p4,p6]$ enable the same set of transitions ($t1$ and $t4$), thereby hiding the internal state.}
\label{fig-locally-safe-not-perpetual}
\end{figure}

The notion of lucency was first introduced in \cite{lucent-PN2018}. The paper uses the example shown in Figure~\ref{fig-locally-safe-not-perpetual}
to demonstrate that even nets that are free-choice, live, and safe may not be lucent. Therefore, an additional requirement was added.
In \cite{lucent-PN2018}, the class of \emph{perpetual} nets is introduced in an attempt to relate well-known Petri net properties to lucency. Perpetual free-choice nets are free-choice Petri nets that are live and bounded and have a home cluster, i.e., there is a cluster such that from any reachable state,
there is a reachable state marking the places of this cluster. 
Such a home cluster in a perpetual net 
serves as a ``regeneration point'' of the process, e.g., to start a new process instance (case, job, cycle, etc.).
Any perpetual marked free-choice net is lucent.
However, there are many lucent systems that are
not perpetual because they are terminating or have an initialization phase (and are therefore not live).

This paper extends the work presented in \cite{lucent-PN2018}
which focused exclusively on perpetual marked free-choice nets.
For example, $(N_1,M_1)$ in Figure~\ref{fig-intro-fc} 
is not perpetual. Actually, most of the work done on free-choice nets is limited to well-formed nets, i.e., nets that have a marking that is live and bounded. This is a structural property allowing for many interesting and advanced forms of analysis and reasoning \cite{bestfcn,structure-theory-ToPNoC-advanced-course2010,deselesparza}.
Such nets are automatically strongly-connected and do not have source and sink places to model 
the start and the end of the process.

However, in many applications, such nets are not suitable.
For example, it is impossible to model systems and processes that can terminate.
In some cases, one can apply a trick and ``short-circuit'' 
the actual net to make it well-formed (see, for example, the analysis of soundness for workflow nets \cite{aaljcsc}).
However, this distracts from the essence of the property being analyzed. 
This paper proves this point by showing that liveness is \emph{irrelevant} for ensuring lucency.
For example, the Petri net in Figure~\ref{fig-intro-fc} is lucent, but not well-formed.

In this paper, we show that all \emph{free-choice nets having a home cluster are lucent}. This significantly extends the 
class perpetual marked free-choice nets and also includes 
non-well-formed nets such as $(N_1,M_1)$ in Figure~\ref{fig-intro-fc}.

To do this, we provide a direct proof 
that is \emph{not} building on
the traditional stack of results for well-formed free-choice nets.
In \cite{lucent-PN2018}, we need to use the 
coverability theorem and the blocking marking theorem.
Moreover, the proof in \cite{lucent-PN2018} turned out to be incomplete and the repaired proof is even more involved.
\emph{The approach used to prove the correctness of the main result provides a novel perspective
enabling new analysis techniques for free-choice nets that do not need to be well-formed.} 
Novel concepts like ``expediting transitions'', ``rooted disentangled paths'', and ``conflict-pairs'' can  be used to prove many other properties free-choice nets having a home cluster.
This paper also relates the novel concepts and techniques presented in this paper to results based on 
short-circuiting nets that are non-live and not strongly-connected (Section~\ref{sec:perpmarknets}).
This relation is used to show that we can check whether there is a home cluster in polynomial time for free-choice nets (whether they are live and strongly-connected or not).

The remainder is organized as follows.
Section~\ref{sec:rw} briefly discusses related work.
Section~\ref{sec:prelim} introduces Petri nets and some of the basic notations.
Lucent Petri nets are defined in Section~\ref{sec:lucency}.
In Section~\ref{sec:arelucent} we show that free-choice nets having a home cluster are indeed lucent.
To do this, we introduce new notions such as (rooted)
disentangled paths and conflict-pairs.
Section~\ref{sec:perpmarknets} relates the work to perpetual marked free-choice nets and our earlier paper \cite{lucent-PN2018}.
Section~\ref{sec:concl} concludes the paper.

\section{Related Work}
\label{sec:rw}

This paper extends the work presented in \cite{lucent-PN2018}.
There are no other papers on the analysis of lucency (which is surprising).
Hence, we can only point to more indirectly related work.

For more information about Petri nets, we refer to \cite{mbp-aal-stahl-2011,murata,reisig-book-2013,lopn1,lopn2}.
Within the field of Petri nets ``structure theory'' plays an important role \cite{bestfcn,structure-theory-ToPNoC-advanced-course2010,deselesparza}.
Free-choice nets are well studied \cite{bdefcn,structure-theory-ToPNoC-advanced-course2010,Esparza98TCS,thiavoss}.
The definite book on the structure theory of free-choice nets is \cite{deselesparza}.
Also, see \cite{structure-theory-ToPNoC-advanced-course2010} for pointers to literature.
Therefore, it is surprising that the question of whether markings are uniquely identified by the set of enabled transitions (i.e., lucency)
has not been explored in literature.
Lucency is unrelated to the so-called ``frozen tokens'' \cite{frozen-tokens-wehler2010}.
A Petri net has a frozen token if there exists an infinite occurrence sequence never using the token.
It is possible to construct live and bounded free-choice nets that are lucent while having frozen tokens.
Conversely, there are live and bounded free-choice nets that do not have frozen tokens and are not lucent.

The results presented in this paper are also related to the  \emph{blocking theorem}
\cite{blocking-theorem-gaujala2003,blocking-theorem-wehler2010}.
Blocking markings are reachable markings that enable transitions from only a single cluster. Removing the cluster yields a dead marking.
The blocking theorem states that in a bounded and live free-choice net each cluster has a unique blocking marking.
Lucency is broader than blocking markings since multiple clusters and concurrent transitions are considered.
Actually, lucency can be seen as a generalization of unique blocking markings.
Moreover, \cite{blocking-theorem-gaujala2003,blocking-theorem-wehler2010} only consider live Petri nets.

In \cite{reduction-wvda-PN2021}, we propose a framework based on sequences of $t$-induced T-nets and 
$p$-induced P-nets to convert free-choice nets into T-nets and P-nets while preserving properties such as
well-formedness, liveness, lucency, pc-safety, and perpetuality. The framework allows for systematic
proofs that ``peel off'' non-trivial parts while retaining the essence of the problem
(e.g., lifting properties from T-nets and P-nets to free-choice nets).

A major difference between the work reported in this paper and 
the extensive body of knowledge just mentioned is that we do \emph{not} require the Petri net to be well-formed. Liveness assumes that the system is cyclic and actions are always still possible in the future. This does not align well with the standard ``case notion'' used
in Business Process Management (BPM), Workflow Management (WFM), and Process Mining (PM) \cite{aaljcsc,process-mining-book-2016,soundness-FACS}. 
Process instances have a clear start and end.
For example, process discovery algorithms from the field of PM all generate process models close to the workflow nets.
The languages used for BPM and WFM, e.g., BPMN and UML Activity Diagrams, are very different from well-formed Petri nets and closer to workflow nets.
The work presented in this paper supports both views.
The process models may be well-formed or not. 
Therefore, we significantly generalize over the work presented in \cite{lucent-PN2018} and also present results that could be used for other questions.

\section{Preliminaries}
\label{sec:prelim}

This section introduces concepts related to Petri nets and some basic notations.

\subsection{Multisets, Sequences, and Functions}
\label{sec:basics}

$\bag(A)$ is the set of all \emph{multisets} over some set $A$.
For some multiset $b\in \bag(A)$, $b(a)$ denotes the number of times element $a\in A$ appears in $b$.
Some examples: $b_1 = [~]$, $b_2 = [x,x,y]$, $b_3 = [x,y,z]$, $b_4 = [x,x,y,x,y,z]$, and $b_5 = [x^3,y^2,z]$
are multisets over $A=\{x,y,z\}$.
$b_1$ is the empty multiset, $b_2$ and $b_3$ both consist of three elements, and
$b_4 = b_5$, i.e., the ordering of elements is irrelevant and a more compact notation may be used for repeating elements.
The standard set operators can be extended to multisets, e.g., $x\in b_2$, $b_2 \bplus b_3 = b_4$, $b_5 \setminus b_2 = b_3$, $|b_5|=6$, etc.
$\{a \in b\}$ denotes the set with all elements $a$ for which $b(a) \geq 1$.
$b(X) = \sum_{a \in X}\ b(x)$ is the number of elements in $b$ belonging to set $X$, e.g., $b_5(\{x,y\}) = 3+2=5$.
$b \leq b'$ if $b(a) \leq b'(a)$ for all $a \in A$. Hence, $b_3 \leq b_4$ and $b_2 \not\leq b_3$ (because $b_2$ has two $x$'s).
$b < b'$ if $b \leq b'$ and $b \neq b'$.
Hence, $b_3 < b_4$ and $b_4 \not< b_5$ (because $b_4 = b_5$).

$\sigma = \langle a_1,a_2, \ldots, a_n\rangle \in X^*$ denotes a \emph{sequence} over $X$ of length $\card{\sigma} = n$.
$\sigma_i = a_i$ for $1 \leq i \leq \card{\sigma}$.
$\langle~\rangle$ is the empty sequence. 
$\sigma_1 \cdot \sigma_2$ is the concatenation of two sentences, e.g., $\langle x,x,y\rangle \cdot \langle x,z\rangle = \langle x,x,y,x,z\rangle$.
The notation $[a \in \sigma]$ can be used to convert a sequence into a multiset. $[a \in \langle x,x,y,x,z\rangle] = [x^3,y^2,z]$.

\subsection{Petri Nets}
\label{sec:petrinets}

Figures~\ref{fig-intro-fc}, \ref{fig-intro-nfc},  and \ref{fig-locally-safe-not-perpetual} already showed examples of marked Petri nets. To reason about such processes and to formalize lucency, we now provide the basic formalizations \cite{mbp-aal-stahl-2011,murata,reisig-book-2013,lopn1,lopn2}.

\begin{definition}[Petri Net]\label{def:pn}
A \emph{Petri net} is a tuple $N=(P,T,F)$ with $P$ the non-empty set of places, $T$ the non-empty set of transitions such that
$P \cap T = \emptyset$, and $F\subseteq (P \times T) \cup (T \times P)$ the flow relation such that the graph $(P \cup T, F)$ is (weakly) connected. 
\end{definition}

Figure~\ref{fig-intro-fc} has 
four places ($p1,p2,p3,p4$),
five transitions ($t1,t2,t3,t4,t5$),
and ten arcs. 
The initial marking contains just one token located in place $p1$.

\begin{definition}[Marking]\label{def:mrk}
Let $N=(P,T,F)$ be a Petri net.
A \emph{marking} $M$ is a multiset of places, i.e., $M \in \bag(P)$.
$(N,M)$ is a marked net.
\end{definition}

A Petri net $N=(P,T,F)$ defines a directed graph with nodes $P\cup T$ and edges $F$.
For any $x\in P\cup T$, $\pre{x} = \{y\mid (y,x)\in F\}$ denotes the set of input nodes and
$\post{x} = \{y\mid (x,y)\in F\}$ denotes the set of output nodes.
The notation can be generalized to sets: $\pre{X}=\{y\mid \exists_{x\in X} \ (y,x)\in F\}$ and $\post{X} = \{y\mid \exists_{x\in X} \ (x,y)\in F\}$
for any $X \subseteq P\cup T$. 

A transition $t \in T$ is \emph{enabled} in marking $M$ of net $N$, denoted as $(N,M)[t\rangle$, if each of its input places ${\pre{t}}$ contains at least one token.
$\mi{en}(N,M) = \{ t \in T \mid (N,M)[t\rangle \}$ is the set of enabled transitions.

An enabled transition $t$ may \emph{fire}, i.e., one token is removed from each of the input places ${\pre{t}}$ and
one token is produced for each of the output places ${\post{t}}$.
Formally: $M' = (M\bmin {\pre{t}})\bplus {\post{t}}$ is the marking resulting from firing enabled transition $t$ in marking $M$ of Petri net $N$.
$(N,M)[t\rangle (N,M')$ denotes that $t$ is enabled in $M$ and firing $t$ results in marking $M'$.

Let $\sigma = \langle t_1,t_2, \ldots, t_n \rangle \in T^*$ be a sequence of transitions.
$(N,M)[\sigma\rangle (N,M')$ denotes that there is a set of markings $M_1, M_2, \ldots, M_{n+1}$ ($n \geq 0$)
such that $M_1 = M$, $M_{n+1} = M'$, and $(N,M_i)[t_i\rangle (N,M_{i+1})$ for $1 \leq i \leq n$.
A marking $M'$ is \emph{reachable} from $M$ if there exists a \emph{firing sequence} $\sigma$ such that $(N,M)[\sigma\rangle (N,M')$.
$R(N,M) = \{ M' \in \bag(P) \mid \exists_{\sigma \in T^*} \ (N,M)[\sigma\rangle (N,M') \}$ is the set of all reachable markings.
$(N,M)[\sigma\rangle$ denotes that the sequence $\sigma$ is enabled when starting in marking $M$ (without specifying the resulting marking).

For the marked net in Figure~\ref{fig-intro-nfc}: $R(N_2,M_2)= \{ [p1],[p2,p5],[p2,p6],[p3,p5],[p3,p6],[p4]\}$.
Note that $\mi{en}(N_2,[p2,p5]) = \mi{en}(N_2,[p2,p6]) = \{t3\}$.

\subsection{Liveness, Boundedness, and Home Markings}
\label{sec:livenessetc}

Next, we define some of the standard behavioral properties for Petri nets: liveness, boundedness, and the presence of home markings.

\begin{definition}[Live, Bounded, Safe, Dead, Deadlock-free, Well-Formed]\label{def:lb}
A marked net $(N,M)$ is \emph{live} if for every reachable marking $M' \in R(N,M)$ and for every transition $t\in T$ there exists a marking $M'' \in R(N,M')$ that enables $t$.
A marked net $(N,M)$ is $k$-bounded if for every reachable marking $M' \in R(N,M)$ and every $p \in P$: $M'(p) \leq k$.
A marked net $(N,M)$ is \emph{bounded} if there exists a $k$ such that $(N,M)$ is $k$-bounded.
A 1-bounded marked net is called \emph{safe}.
A place $p\in P$ is \emph{dead} in $(N,M)$ when it can never be marked (no reachable marking marks $p$).
A transition $t\in T$ is \emph{dead} in $(N,M)$ when it can never be enabled (no reachable marking enables $t$).
A marked net $(N,M)$ is \emph{deadlock-free} if each reachable marking enables at least one transition.
A Petri net $N$ is \emph{structurally bounded} if $(N,M)$ is bounded for any marking $M$.
A Petri net $N$ is \emph{structurally live} if there exists a marking $M$ such that $(N,M)$ is live.
A Petri net $N$ is \emph{well-formed} if there exists a marking $M$ such that $(N,M)$ is live and bounded.
\end{definition}

\begin{definition}[Home Marking]\label{def:hm}
Let $(N,M)$ be a marked net.
A marking $M_H$ is a \emph{home marking} if for every reachable marking $M' \in R(N,M)$: $M_H \in R(N,M')$.
\end{definition}

Note that home markings do not imply liveness or boundedness, i.e., a Petri net may be non-well-formed and still have home markings. $(N_1,M_1)$ in Figure~\ref{fig-intro-fc} is not live and has one home marking $[p4]$. $(N_3,M_3)$ in Figure~\ref{fig-locally-safe-not-perpetual} is live and all of its reachable markings are home markings.

\subsection{Clusters}
\label{sec:clusters}

Clusters play a major role in this paper.
A cluster is a maximal set of connected nodes, only considering
arcs connecting places to transitions.

\begin{definition}[Cluster]\label{def:clust}
Let $N=(P,T,F)$ be a Petri net and $x \in P \cup T$.
The \emph{cluster} of node $x$, denoted $\cluster{x}$ is the smallest set such that
(1) $x \in \cluster{x}$,
(2) if $p \in \cluster{x} \cap P$, then ${\post{p}} \subseteq \cluster{x}$, and
(3) if $t \in \cluster{x} \cap T$, then ${\pre{t}} \subseteq \cluster{x}$.
$\cluster{N}= \{ \cluster{x} \mid x \in P \cup T\}$ is the set of clusters of $N$.
\end{definition}

Note that $\cluster{N}$ partitions the nodes in $N$.
The Petri net in Figure~\ref{fig-intro-fc} has four clusters:
$C_1 = \{p1,t1,t2\}$, 
$C_2 = \{p2,t3\}$, 
$C_3 = \{p3,t4,t5\}$, and
$C_4 = \{p4\}$.
The Petri net in Figure~\ref{fig-locally-safe-not-perpetual} 
also has four clusters:
$C_1 = \{p1,t1\}$, 
$C_2 = \{p2,p3,t2\}$, 
$C_3 = \{p4,p5,t3\}$, and
$C_4 = \{p6,t4\}$.

\begin{definition}[Cluster Notations]\label{def:clustnot}
Let $N=(P,T,F)$ be a Petri net and $C \in \cluster{N}$ a cluster.
$\mi{Pl}(C) = P \cap C$ are the places in $C$, $\mi{Tr}(C) = T \cap C$ are the transitions in $C$, and $\mi{Mrk}(C) = [p \in \mi{Pl}(C)]$ is the smallest marking fully enabling the cluster.
\end{definition}

\subsection{Structural Properties}
\label{sec:structprop}

As defined before, 
we require Petri nets to be \emph{weakly} connected.
$N$ is \emph{strongly} connected if the graph $(P \cup T,F)$ is strongly-connected, i.e., for any two nodes $x$ and $y$ there is a path leading from $x$ to $y$.

Various subclasses of Petri nets have been defined
based on the network structures they allow.
State machines, also called P-nets, do not allow for transitions with multiple input or output places.
Marked graphs, also called T-nets, do not allow for places with multiple input or output transitions.
In this paper, we focus on \emph{free-choice} nets that are \emph{proper}.

\begin{definition}[Free-choice Net]\label{def:fcne}
Let $N=(P,T,F)$ be a Petri net.
$N$ is \emph{free-choice net} if for any $t_1,t_2 \in T$: ${\pre{t_1}} = {\pre{t_2}}$ or ${\pre{t_1}} \cap {\pre{t_2}} = \emptyset$.
\end{definition}

In free-choice nets, choice and synchronization can be separated.
$(N_2,M_2)$ in Figure~\ref{fig-intro-nfc} is not a free-choice net, because the choice between $t4$ and $t5$ is controlled by the places $p5$ and $p6$.

\begin{definition}[Proper Petri Net]\label{def:proppn}
A Petri net $N=(P,T,F)$ is \emph{proper} if all transitions have input and output places, i.e., for all $t \in T$: $\pre{t} \neq \emptyset$ and $\post{t} \neq \emptyset$.
\end{definition}

Well-formed Petri nets are strongly-connected and therefore also proper. Workflow nets are not strongly-connected, but by definition proper. For the main results in this paper, 
we consider proper Petri nets
instead of enforcing stronger structural or behavioral requirements such as strongly-connectedness, liveness, and boundedness.

\section{Lucent Petri Nets}
\label{sec:lucency}

This paper focuses on \emph{lucent} process models whose states are uniquely identified based on the activities they enable.
Lucency is a generic property that can be formulated in the context of Petri nets.
Given a marked Petri net, we would like to know whether each reachable marking has a unique ``footprint'' in terms of the transitions it enables.
If this is the case, then the Petri net is \emph{lucent}.

\begin{definition}[Lucent Petri Nets]\label{def:lucent}
Let $(N,M)$ be a marked Petri net. $(N,M)$ is \emph{lucent} if and only if for any $M_1,M_2 \in R(N,M)$: $\mi{en}(N,M_1)=\mi{en}(N,M_2)$ implies $M_1 = M_2$.
\end{definition}

$(N_1,M_1)$ depicted in Figure~\ref{fig-intro-fc} is lucent.
$(N_2,M_2)$ and $(N_3,M_3)$ in Figures~\ref{fig-intro-nfc} and \ref{fig-locally-safe-not-perpetual} are not lucent.
$(N_4,M_4)$ depicted in Figure~\ref{fig-fc-nonlucid} is also not lucent. Both $[p3,p5,p7]$ and $[p3,p7,p8]$ are reachable from the initial marking and enable the same set of transitions.
\begin{figure}[thb!]
\centering
\includegraphics[width=10.0cm]{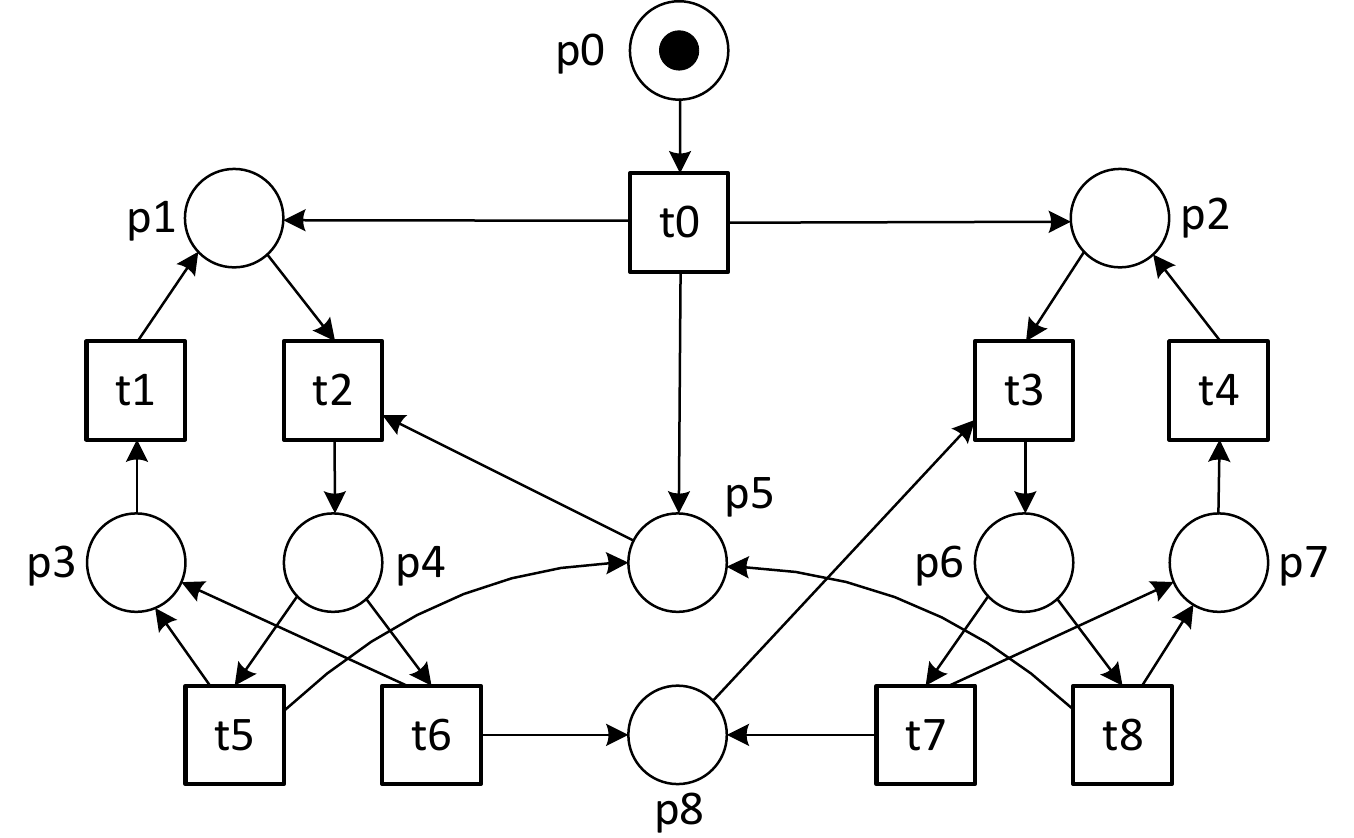}
\caption{$(N_4,M_4)$ is a marked free-choice Petri net that is not lucent: $[p3,p5,p7]$ and $[p3,p7,p8]$ enable $t1$ and $t4$.}
\label{fig-fc-nonlucid}
\end{figure}

Unbounded marked Petri nets are, by definition, not lucent.
However, the examples illustrate that the reverse does not hold.

\begin{proposition}[Boundedness of Lucent Petri Nets]
Any lucent marked Petri net is bounded.
\end{proposition}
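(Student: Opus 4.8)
The plan is to prove the contrapositive: if a marked Petri net $(N,M)$ is unbounded, then it is not lucent. This directly yields that every lucent net is bounded. The key tool I would invoke is the classical observation underlying the coverability (Karp--Miller) construction, namely that unboundedness manifests as a \emph{self-covering firing sequence}: there exist reachable markings $M_1, M_2 \in R(N,M)$ and a firing sequence $\sigma$ with $(N,M_1)[\sigma\rangle (N,M_2)$ and $M_1 < M_2$ (strictly, so $M_2 = M_1 \bplus \Delta$ with $\Delta$ a nonempty multiset of places). Since $M_1$ is reachable, $M_1 \in R(N,M)$, and since $M_2$ is reachable from $M_1$, we also have $M_2 \in R(N,M)$.

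First I would establish this self-covering pair. By Definition of boundedness, if $(N,M)$ is not bounded then for every $k$ some reachable marking places more than $k$ tokens on some place; since $R(N,M)$ is infinite and markings are multisets over the finite set $P$, Dickson's lemma (the well-quasi-ordering of $\bag(P)$ under $\leq$) guarantees an infinite reachable sequence containing a strictly increasing comparable pair $M_1 < M_2$ along a common firing path, giving the desired $\sigma$ with $(N,M_1)[\sigma\rangle (N,M_2)$ and $M_1 < M_2$.

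The central step is to exploit monotonicity of token games to show $M_1$ and $M_2$ enable the same set of transitions, which would contradict lucency since $M_1 \neq M_2$. The monotonicity lemma states that if $(N,M_1)[t\rangle$ then $(N,M_1 \bplus \Delta)[t\rangle$ for any $\Delta$, because adding tokens never disables a transition; hence $\mi{en}(N,M_1) \subseteq \mi{en}(N,M_2)$. The harder inclusion is $\mi{en}(N,M_2) \subseteq \mi{en}(N,M_1)$, which need \emph{not} hold for an arbitrary covering pair. To force equality I would iterate $\sigma$: since $(N,M_1)[\sigma\rangle (N,M_2)$ with $M_2 = M_1 \bplus \Delta$, monotonicity gives $(N,M_2)[\sigma\rangle (N,M_3)$ with $M_3 = M_2 \bplus \Delta = M_1 \bplus 2\Delta$, and more generally $(N,M_1)[\sigma^n\rangle (N, M_1 \bplus n\Delta)$ for all $n$. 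This produces infinitely many reachable markings $M_1 \bplus n\Delta$, so by pigeonholing the enabled-set map (there are only finitely many subsets of the finite set $T$), two distinct values $n_1 < n_2$ yield $\mi{en}(N, M_1 \bplus n_1\Delta) = \mi{en}(N, M_1 \bplus n_2\Delta)$ while $M_1 \bplus n_1\Delta \neq M_1 \bplus n_2\Delta$.

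The main obstacle I anticipate is precisely the asymmetry of monotonicity noted above: covering alone gives only one inclusion of enabled sets, so the pigeonhole-over-the-infinite-chain argument is essential to recover a genuine pair with \emph{equal} enabled sets. Both markings in that pair lie in $R(N,M)$ and are distinct, directly violating Definition~\ref{def:lucent}. This completes the contrapositive and hence the proposition. I would keep the write-up light on the Karp--Miller/Dickson machinery, simply citing well-quasi-ordering of multisets and the token-monotonicity of Petri net firing, both of which are standard and require no appeal to the free-choice or home-cluster hypotheses.
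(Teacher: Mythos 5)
Your proof is correct, but it takes a genuinely different and considerably heavier route than the paper. The paper's proof is a direct two-line counting argument: lucency says the map $M' \mapsto \mi{en}(N,M')$ is injective on $R(N,M)$, and since there are at most $2^{\card{T}}$ possible enabled-sets, there are at most $2^{\card{T}}$ reachable markings, which immediately gives boundedness. You instead prove the contrapositive via the coverability machinery: extract a self-covering pair $M_1 < M_2$ with $(N,M_1)[\sigma\rangle(N,M_2)$ using K\H{o}nig's lemma and Dickson's lemma, pump it by monotonicity to obtain the infinite chain $M_1 \bplus n\Delta$, and then pigeonhole on enabled-sets. Note that your final pigeonhole step is exactly the paper's counting argument, just applied to the chain you constructed rather than to all of $R(N,M)$ --- so the self-covering construction is a detour: once you observe that a non-lucent-violating net can have at most $2^{\card{T}}$ reachable markings, finiteness (hence boundedness) follows without ever producing an explicit unbounded witness. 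What your version buys is an explicit pumpable pair of indistinguishable markings, which is more constructive; what the paper's version buys is brevity, no appeal to well-quasi-orderings, and an explicit quantitative bound of $2^{\card{T}}$ on the number of reachable markings. You correctly flag the asymmetry of monotonicity (covering gives only one inclusion of enabled-sets) and handle it properly, so there is no gap --- just an avoidable amount of machinery.
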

\begin{proof}
A marked net with $n= \card{T}$ transitions cannot have more than $2^n$ possible sets of enabled transitions. 
Lucency implies that each set of enabled transitions corresponds to a unique marking. 
Hence, there cannot be more than $2^n$ reachable markings (implying boundedness).
\end{proof}

We would like to find subclasses of nets that are guaranteed to be lucent based on their structure.
At first, one is tempted to think that bounded free-choice nets are lucent.
However, as Figure~\ref{fig-locally-safe-not-perpetual} and Figure~\ref{fig-fc-nonlucid} show, this is not sufficient.

Lucency is related to the notion of transparency, i.e., 
all tokens are in the input places of enabled transitions and therefore not ``hidden''.

\begin{definition}[Transparent Marking]\label{def:trabsp}
Let $(N,M)$ be a marked Petri net. Marking $M$ is a \emph{transparent} marking of $N$ if and only if $M = 
[ p \in P \mid \exists_{t \in \mi{en}(N,M)} \ p \in \pre{t}]$.
$(N,M)$ is fully transparent if and only if each reachable marking is transparent. 
\end{definition}

Full transparency implies lucency, but the reverse does not hold.
Actually, full transparency does not allow for synchronization and concurrency and is therefore very limiting.

\begin{proposition}
Let $(N,M)$ be a marked Petri net.
If $(N,M)$ is fully transparent, then $(N,M)$ is lucent.
The reverse does not hold.
\end{proposition}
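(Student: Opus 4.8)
The plan is to prove the two claims separately, and the forward implication is essentially immediate once one observes that \emph{transparency turns the marking into a function of its enabled set}. First I would record this observation: if $M$ is a transparent marking, then by Definition~\ref{def:trabsp} we have $M = [p \in P \mid \exists_{t \in \mi{en}(N,M)} \ p \in \pre{t}]$, so $M$ is completely determined by the set $\mi{en}(N,M)$ together with the fixed flow relation of $N$. In particular, knowing which transitions are enabled is enough to reconstruct $M$ exactly.

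Using this, the implication follows in one step. Suppose $(N,M)$ is fully transparent and take $M_1, M_2 \in R(N,M)$ with $\mi{en}(N,M_1) = \mi{en}(N,M_2)$. Both $M_1$ and $M_2$ are reachable, hence transparent, so applying the observation to each gives
\[
M_1 = [p \in P \mid \exists_{t \in \mi{en}(N,M_1)} \ p \in \pre{t}] = [p \in P \mid \exists_{t \in \mi{en}(N,M_2)} \ p \in \pre{t}] = M_2,
\]
where the middle equality uses $\mi{en}(N,M_1) = \mi{en}(N,M_2)$. By Definition~\ref{def:lucent}, $(N,M)$ is lucent.

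For the reverse, I would exhibit a lucent marked net that is not fully transparent, and $(N_1,M_1)$ of Figure~\ref{fig-intro-fc} already serves. It was noted above that $(N_1,M_1)$ is lucent. However, its home marking $[p4]$ is reachable, and since the cluster $C_4 = \{p4\}$ contains no transitions, $p4$ has no output transition; thus $\mi{en}(N_1,[p4]) = \emptyset$ and the right-hand side of the transparency equation evaluates to the empty multiset $[~]$. Since $[p4] \neq [~]$, the reachable marking $[p4]$ is not transparent, so $(N_1,M_1)$ is not fully transparent. This shows that lucency does not imply full transparency.

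I do not expect a genuine obstacle here: the forward direction is mechanical, and the only substantive content is verifying the counterexample --- specifically that the token resting in the sink place $p4$ is ``hidden'' because it enables nothing, which is exactly the situation transparency forbids but lucency tolerates.
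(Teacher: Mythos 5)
Your proof is correct, and the forward direction is the only natural argument: a transparent marking is by definition reconstructible from its enabled set, so equal enabled sets force equal markings (the paper states this proposition without a formal proof). For the reverse you use $(N_1,M_1)$ with the dead marking $[p4]$, whereas the paper illustrates it with $(N_5,M_5)$ and the non-dead marking $[p4,p7]$ in which the token in $p7$ is hidden behind a pending synchronization; both counterexamples are valid, yours being slightly simpler while the paper's better illustrates that concurrency (not just termination) already breaks transparency.
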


Figure~\ref{fig-home-lucent}
shows a marked free-choice Petri net that is lucent but not fully transparent.
Consider, for example, the reachable marking $[p4,p7]$ enabling $t5$. There is only one reachable marking which enables only $t5$.
However, marking $[p4,p7]$ is not transparent since the token in $p7$ is ``hidden''.
\begin{figure}[thb!]
\centering
\includegraphics[width=9.0cm]{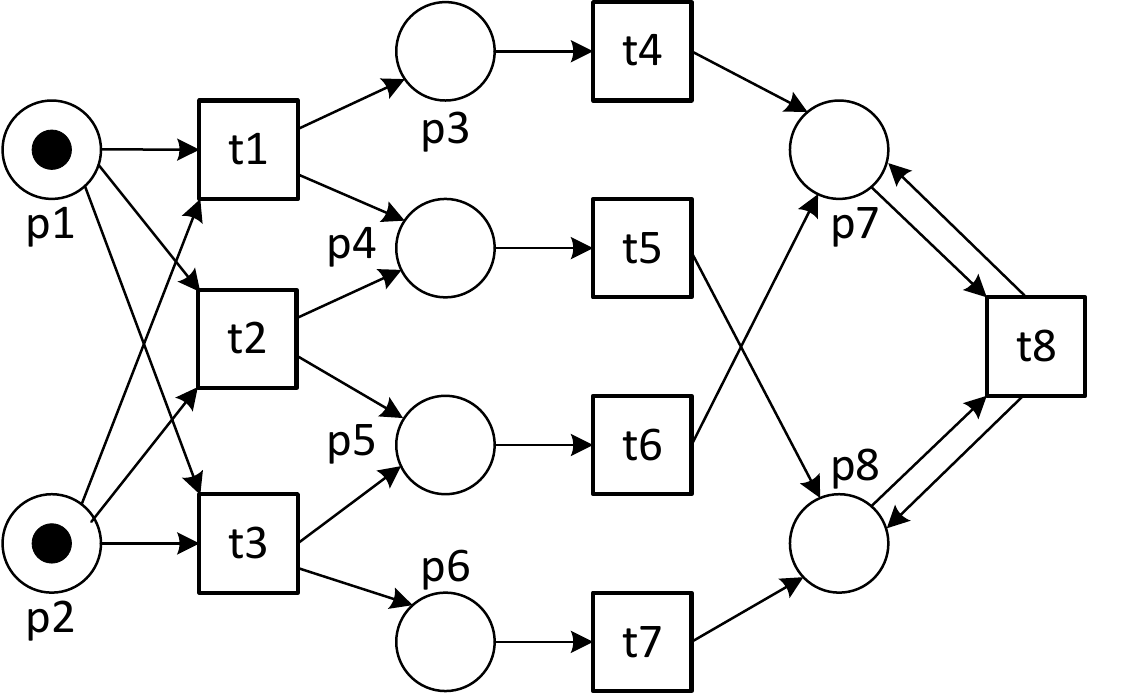}
\caption{$(N_5,M_5)$ is a marked free-choice Petri net that is lucent but not fully transparent.}
\label{fig-home-lucent}
\end{figure}

\section{Free-Choice Nets With Home Clusters Are Lucent}
\label{sec:arelucent}

In \cite{lucent-PN2018}, we defined the class of perpetual nets in an attempt to identify a large class of lucent Petri nets. 
Here, we aim to substantially extend the class of Petri nets 
that is guaranteed to be lucent.
Like in \cite{lucent-PN2018} we use the notion of \emph{home clusters},
but drop the liveness and boundedness requirements.
Actually, none of the Petri nets shown in this paper is perpetual, including the two lucent nets $(N_1,M_1)$ and $(N_5,M_5)$.

\begin{definition}[Home Clusters]\label{def:homeclust}
Let $(N,M)$ be marked Petri net. $C$ is a \emph{home cluster} of $(N,M)$  if and only if $C \in \cluster{N}$ (i.e., $C$ is a cluster) and 
$\mi{Mrk}(C)$ is a home marking of $(N,M)$. If such a $C$ exists, we say that $(N,M)$ has a home cluster.
\end{definition}

Note that a home marking may be dead, but then it should be unique, i.e., a clear \emph{termination point}.
If the initial marking is a home marking, it can be seen as a \emph{regeneration point}.

A mentioned before, 
the key results in this paper apply only to proper Petri nets where all
transitions have input and output places. 
It is always possible to add a self-loop place to ensure this (without changing the behavior).
Moreover, a Petri net having a transition without any input places 
and at least one output place, is unbounded for any initial marking 
and therefore non-lucent.
Transitions without output places make most sense in unbounded nets (which are non-lucent).
Adding a self-loop place to make the Petri net proper, typically results in a model that has no home cluster. However, such models tend to be unbounded and therefore non-lucent anyway.

Note that in literature most authors consider well-formed Petri nets. These are strongly-connected and therefore also proper. Here, we consider a substantially larger class of models.
For example, 
the marked nets $(N_1,M_1)$, $(N_2,M_3)$, 
$(N_4,M_4)$, and $(N_5,M_5)$ are not well-formed, but proper. 
Actually, $(N_3,M_3)$ in Figure~\ref{fig-locally-safe-not-perpetual} is the only well-formed net in this paper (and therefore also proper).
This paper shows that we can drop the well-formedness requirement
and still ensure lucency.

\subsection{Properties of Home Clusters}
\label{sec:prophc}

We first explore some of the essential properties of home clusters
in marked proper free-choice nets.
First, we show that there are two types of clusters: (1) just an isolated end place or (2) a set of places sharing one or more output transitions.

\begin{proposition}[Two Types Of Clusters]\label{prop:cluschar}
Let $(N,M)$ be a marked proper Petri net having a home cluster $C$.
If there is a reachable marking $M' \in R(N,M)$ that is dead, 
then $M' = \mi{Mrk}(C)$, 
$\card{\mi{Pl}(C)} = 1$, and $\mi{Tr}(C) = \emptyset$.
If $(N,M)$ is deadlock-free, then $\mi{Tr}(C) \neq \emptyset$.
\end{proposition}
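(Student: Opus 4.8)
The plan is to treat the three equalities in the first assertion and the inequality in the second separately, but both hinge on two observations: (i) because $\mi{Mrk}(C)$ is a home marking, it lies in $R(N,M'')$ for every reachable $M''$, and (ii) because $N$ is proper, every transition has a nonempty preset. Throughout I read ``$M'$ is dead'' as $\mi{en}(N,M') = \emptyset$, matching the complement of the deadlock-freeness condition in Definition~\ref{def:lb}.

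For the first assertion, suppose $M' \in R(N,M)$ is dead. Since no transition is enabled in $M'$, the only firing sequence enabled there is the empty one, so $R(N,M') = \{M'\}$. The home cluster property then gives $\mi{Mrk}(C) \in R(N,M') = \{M'\}$, hence $M' = \mi{Mrk}(C)$, which is the first claim. To obtain $\mi{Tr}(C) = \emptyset$ I would argue by contradiction: pick $t \in \mi{Tr}(C)$. By the cluster closure rule for transitions, $\pre{t} \subseteq C \cap P = \mi{Pl}(C)$, and properness gives $\pre{t} \neq \emptyset$. Since $\mi{Mrk}(C)$ marks every place of $\mi{Pl}(C)$, each input place of $t$ carries a token, so $t$ is enabled in $\mi{Mrk}(C) = M'$, contradicting deadness. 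Hence $\mi{Tr}(C) = \emptyset$.

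It then remains to deduce $\card{\mi{Pl}(C)} = 1$ from $\mi{Tr}(C) = \emptyset$, and here I would invoke connectivity of clusters. Fix any $p \in \mi{Pl}(C)$; since $\cluster{N}$ partitions $P \cup T$, we have $C = \cluster{p}$. Because $C$ contains no transitions, the closure rule $\post{p} \subseteq C$ forces $\post{p} = \emptyset$, so the smallest set containing $p$ and closed under the cluster rules is just $\{p\}$. Thus $C = \{p\}$ and $\card{\mi{Pl}(C)} = 1$. This step is reusable: \emph{whenever} $\mi{Tr}(C) = \emptyset$, the cluster collapses to a single sink place. For the second assertion I would then argue contrapositively. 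Assuming $\mi{Tr}(C) = \emptyset$, the reusable observation gives $C = \{p\}$ with $\post{p} = \emptyset$ and $\mi{Mrk}(C) = [p]$. A transition $t$ enabled in $[p]$ would need $\pre{t} \subseteq \{p\}$, and properness forces $\pre{t} = \{p\}$, i.e.\ $t \in \post{p} = \emptyset$, a contradiction; so $[p]$ is dead. As $\mi{Mrk}(C)$ is a home marking it is reachable from $M$, so $(N,M)$ has a reachable dead marking and is not deadlock-free. Contrapositively, deadlock-freeness yields $\mi{Tr}(C) \neq \emptyset$.

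The computations are elementary; the only point needing care is the joint use of the cluster closure rules and properness, in particular the structural observation that $\mi{Tr}(C) = \emptyset$ collapses the home cluster to a single sink place. That structural step is precisely what links the behavioral hypotheses (existence of a dead marking, resp.\ deadlock-freeness) to the claimed shape of the home cluster, and I expect it to be the crux of the argument.
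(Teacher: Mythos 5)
Your proposal is correct and follows essentially the same route as the paper's proof: the dead marking must equal $\mi{Mrk}(C)$ because the home marking is reachable from it, the home marking fully enables $\mi{Tr}(C)$ so that set must be empty, a transition-free cluster collapses to a single place by the closure rules, and deadlock-freeness forces $\mi{Tr}(C)\neq\emptyset$ since $\mi{Mrk}(C)$ is reachable and must not be dead. Your version merely spells out the use of properness and phrases the last part contrapositively, which are presentational rather than substantive differences.
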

\begin{proof}
From any reachable marking, one can reach $\mi{Mrk}(C)$.
Hence, if there is a dead marking, then $\mi{Mrk}(C)$ can be the only reachable dead marking.
If not, $\mi{Mrk}(C)$ would not be reachable from this alternative dead marking.
If all places in $\mi{Pl}(C)$ are marked, all transitions $\mi{Tr}(C)$ must be enabled. Hence, $\mi{Tr}(C) = \emptyset$ (otherwise $\mi{Pl}(C)$ cannot be dead).
If $\mi{Tr}(C) = \emptyset$, then the cluster must be a singleton, i.e., $C = \{p_C\}$ (transitions are needed to enlarge the cluster, see Definition~\ref{def:clust}).
If $(N,M)$ is deadlock-free, $\mi{Mrk}(C)$ can be reached and should not be dead. Hence, $\mi{Tr}(C) \neq \emptyset$.
\end{proof}

Most of the results for home markings only apply to \emph{well-formed} free-choice nets,
e.g., 
S-Coverability Theorem,
T-Coverability Theorem,
Rank Theorem,
Duality Theorem, 
Completeness of Reduction Rules Theorem,
Existence of Home Markings Theorem,
Blocking Marking Theorem,
and Home Marking Theorem
\cite{bestfcn,structure-theory-ToPNoC-advanced-course2010,deselesparza}.
We focus on proper free-choice nets
and do \emph{not} require liveness to ensure boundedness, as is shown next.

\begin{definition}[Expedite a Transition in a Transition Sequence]\label{def:expedite}
Let $N=(P,T,F)$ be a free-choice net, $M \in \bag(P)$, 
$\sigma = \langle t_1,t_2, \ldots ,t_i, \ldots ,t_j,\allowbreak \ldots ,t_n \rangle \in T^*$, 
$(N,M)[\sigma\rangle$ (i.e., the sequence $\sigma$ is enabled),
and $1 \leq i < j \leq n$. 
$\mi{exp}_{(N,M)}(\sigma,i,j) = \mi{true}$ if and only if 
\begin{itemize}[noitemsep,topsep=2pt]
\item $(N,M)\allowbreak[\langle t_1,t_2, \ldots ,t_{i-1},t_j\rangle \rangle$ (i.e., it is possible to execute the prefix involving the first $i-1$ transitions followed by $t_j$),  and
\item  $\cluster{t_k} \neq \cluster{t_j}$ for all $k \in \{i, \ldots, j-1\}$ (i.e.,  $t_j$ is the first transition of the respective cluster after $t_{i-1}$).
\end{itemize}
$\mi{exp}_{(N,M)}(\sigma,i,j)$ denotes that the $j$-th transition can be \emph{expedited} by moving $t_j$ to position $i$.
$\sigma_{i \leftarrow j} = \langle t_1,t_2, \ldots ,t_{i-1},t_j,t_i, \ldots ,t_{j-1},t_{j+1} \ldots ,t_n \rangle$ is the corresponding transition sequence where the 
$j$-th transition is moved to the $i$-th position. 

$\mi{Exp}_{(N,M)}(\sigma) \subseteq T^*$ is the subset of all transition sequences that can be obtained by repeatedly expediting transitions, i.e.,
$\mi{Exp}_{(N,M)}(\sigma)$ is the smallest set such that:
\begin{itemize}[noitemsep,topsep=2pt]
\item $\sigma \in \mi{Exp}_{(N,M)}(\sigma)$ and
\item $\sigma'_{i \leftarrow j} \in \mi{Exp}_{(N,M)}(\sigma)$ if $\sigma' \in \mi{Exp}_{(N,M)}(\sigma)$, $1 \leq i < j \leq \card{\sigma'}$, and $\mi{exp}_{(N,M)}(\sigma',i,j)$.
\end{itemize}
\end{definition}

Any $\sigma' \in \mi{Exp}_{(N,M)}(\sigma)$ is a permutation of $\sigma$ and, as we will show next, is enabled if $\sigma$ is enabled.
$\sigma_{i \leftarrow j}$ moves the $j$-th transition to the $i$-th position and is enabled at that position.
Consider $(N_5,M_5)$ in Figure~\ref{fig-home-lucent} and $\sigma = 
\langle t_2,t_5,t_6,t_8,t_8 \rangle$, $\sigma_{2 \leftarrow 3} = \langle t_2,t_6,t_5,t_8,t_8 \rangle$ and $\sigma_{2 \leftarrow 4} = \sigma_{2 \leftarrow 5} = \langle t_2,t_8,t_5,t_6,t_8 \rangle$.
$\sigma_{2 \leftarrow 3} \in \mi{Exp}_{(N_5,M_5)}(\sigma)$, because $\langle t_2,t_6\rangle$ is possible and $t_6$ is the first transition of the respective cluster. 
$\sigma_{2 \leftarrow 4} \not\in \mi{Exp}_{(N_5,M_5)}(\sigma)$, because $\langle t_2,t_8\rangle$ is not possible ($t_8$ is not enabled yet).
Next, we show that expediting transitions is possible and leads to the same marking.

\begin{lemma}[Expediting Transitions Is Safe]\label{lemma:reord}
Let $N=(P,T,F)$ be a free-choice net, 
$M,M' \in \bag(P)$, and
$\sigma \in T^*$ such that
$(N,M)[\sigma \rangle (N,M')$.
For any $\sigma' \in \mi{Exp}_{(N,M)}(\sigma)$:
$(N,M)[\sigma' \rangle (N,M')$.
\end{lemma}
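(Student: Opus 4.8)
The plan is to reduce the statement to a single expedite step and then lift it to the whole set $\mi{Exp}_{(N,M)}(\sigma)$ by induction on its inductive construction. Concretely, I would first prove the key claim: if $(N,M)[\sigma\rangle(N,M')$ and $\mi{exp}_{(N,M)}(\sigma,i,j)$ holds, then $(N,M)[\sigma_{i \leftarrow j}\rangle(N,M')$, i.e., one expedite preserves both enabledness \emph{and} the resulting marking. Granting this claim, the lemma follows directly from the definition of $\mi{Exp}_{(N,M)}(\sigma)$: the base case $\sigma \in \mi{Exp}_{(N,M)}(\sigma)$ is immediate, and for the inductive step any $\sigma'_{i \leftarrow j}$ added to $\mi{Exp}_{(N,M)}(\sigma)$ arises from some $\sigma' \in \mi{Exp}_{(N,M)}(\sigma)$ with, by induction hypothesis, $(N,M)[\sigma'\rangle(N,M')$; applying the claim to $\sigma'$ yields $(N,M)[\sigma'_{i \leftarrow j}\rangle(N,M')$, reaching the \emph{same} $M'$.

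For the single-expedite claim, let $M_l$ denote the marking after firing $t_1,\ldots,t_l$ from $M$, so $M_{i-1}$ is the marking just before $t_i$. The crucial structural observation is that different clusters have disjoint presets: if $p \in \pre{t_k} \cap \pre{t_j}$, then $p \in \cluster{t_k}$ and $p \in \cluster{t_j}$ by Definition~\ref{def:clust}, whence $\cluster{t_k} = \cluster{t_j}$ because $\cluster{N}$ partitions the nodes. The second bullet of $\mi{exp}_{(N,M)}(\sigma,i,j)$ therefore guarantees $\pre{t_j} \cap \pre{t_k} = \emptyset$ for all $k \in \{i,\ldots,j-1\}$. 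Two consequences follow. First, since none of $t_i,\ldots,t_{j-1}$ consumes a token from $\pre{t_j}$, firing the block can only add tokens on $\pre{t_j}$; combined with the first bullet ($\pre{t_j} \leq M_{i-1}$), transition $t_j$ stays enabled at every marking $M_{i-1},M_i,\ldots,M_{j-1}$. Second, I would establish the standard diamond property: whenever two distinct transitions $a,b$ are both enabled at a marking $L$ with $\pre{a} \cap \pre{b} = \emptyset$, then $\pre{a}\bplus\pre{b} \leq L$ and $(N,L)[\langle a,b\rangle\rangle(N,L'')$ as well as $(N,L)[\langle b,a\rangle\rangle(N,L'')$ reach the same $L'' = (L \bmin (\pre{a}\bplus\pre{b}))\bplus(\post{a}\bplus\post{b})$.

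With these ingredients, $\sigma_{i \leftarrow j}$ is obtained from $\sigma$ by transposing $t_j$ leftwards past $t_{j-1},t_{j-2},\ldots,t_i$ one step at a time. Each adjacent swap is an instance of the diamond property: at the relevant marking both transitions are enabled ($t_j$ by the enabledness-throughout argument, its neighbour by the original firing of $\sigma$) and their presets are disjoint, so the swap is legal and leaves the marking reached after both transitions unchanged. After $j-i$ swaps $t_j$ sits at position $i$, and the marking reached after the former position $j$ is again $M_j$; the suffix $\langle t_{j+1},\ldots,t_n\rangle$ then fires identically from $M_j$, so $\sigma_{i \leftarrow j}$ is enabled from $M$ and reaches $M'=M_n$. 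I expect the main obstacle to be exactly the bookkeeping that $t_j$ remains enabled at each intermediate marking: the first bullet of $\mi{exp}_{(N,M)}$ only asserts enabledness at $M_{i-1}$, and it is precisely the disjoint-preset property (so the intervening block never drains $\pre{t_j}$) that propagates this to every position inside the block. It is worth noting that free-choiceness is not actually used here; the partition property of clusters alone delivers the disjoint presets on which the whole argument rests.
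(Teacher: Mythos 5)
Your proof is correct and follows essentially the same route as the paper's: induction on the inductive construction of $\mi{Exp}_{(N,M)}(\sigma)$, reduced to the single-expedite claim, which is justified by the observation that transitions in distinct clusters have disjoint presets, so that $t_j$ stays enabled throughout the block $t_i,\ldots,t_{j-1}$ and can be commuted past it without changing the marking reached afterwards. Your version merely makes the commutation explicit via adjacent diamond swaps (and correctly notes that only the cluster-partition property, not free-choiceness, is actually used), where the paper argues the block permutation in one step.
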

\begin{proof}
Assume $N=(P,T,F)$ is a free-choice net and $M$, $M'$, and $\sigma$ are such that $(N,M)[\sigma \rangle (N,M')$.
$\mi{Exp}_{(N,M)}(\sigma)$ is defined as the smallest set such that (1) $\sigma \in \mi{Exp}_{(N,M)}(\sigma)$ and (2)
$\sigma'_{i \leftarrow j} \in \mi{Exp}_{(N,M)}(\sigma)$ if $\sigma' \in \mi{Exp}_{(N,M)}(\sigma)$, $1 \leq i < j \leq \card{\sigma}$, and $\mi{exp}_{(N,M)}(\sigma',i,j)$.
We provide a proof using induction based on the iterative construction of $\mi{Exp}_{(N,M)}(\sigma)$.

(1) The base step  $\sigma'= \sigma$  obviously holds, because $\sigma \in \mi{Exp}_{(N,M)}(\sigma)$ and $(N,M)[\sigma \rangle (N,M')$.

(2) For the inductive step, it suffices to prove that
$(N,M)[\sigma'_{i \leftarrow j} \rangle (N,M')$ assuming that $\sigma' = \langle t_1, \ldots ,t_n \rangle \in \mi{Exp}_{(N,M)}(\sigma)$, $1 \leq i < j \leq n$, $\mi{exp}_{(N,M)}(\sigma',i,j)$, and $(N,M)[\sigma' \rangle (N,M')$. 
We need to prove that $\sigma'_{i \leftarrow j} = \langle t_1,t_2, \ldots ,t_{i-1},\allowbreak t_j,\allowbreak t_i, \ldots ,t_{j-1},t_{j+1} \ldots ,t_n \rangle$ is indeed enabled
and leads to the same final marking, i.e.,  $(N,M)[\sigma'_{i \leftarrow j} \rangle (N,M')$. 
Let $M''$ be the marking after firing the first $i-1$ transitions, i.e., $(N,M)\allowbreak[\langle t_1,t_2, \ldots ,t_{i-1}\rangle \rangle (N,M'')$.
$t_j \in \mi{en}(N,M'')$ because $\mi{exp}_{(N,M)}(\sigma',i,j)$ (see first condition).
The transitions $t_i, \ldots ,t_{j-1}$ do not consume any tokens from $\cluster{t_j}$ (use the second condition in $\mi{exp}_{(N,M)}(\sigma',i,j)$ stating that $\cluster{t_k} \neq \cluster{t_j}$ for all $k \in \{i, \ldots, j-1\}$) 
and therefore can still be executed ($t_j$ only consumed tokens from places in $\cluster{t_j}$).
The marking reached after $\langle t_1,t_2, \ldots ,t_{i-1},\allowbreak t_j,\allowbreak t_i, \ldots ,t_{j-1} \rangle$ is the same as
reached after prefix $\langle t_1,t_2, \ldots ,t_{i-1},\allowbreak t_i, \ldots ,t_{j-1},t_j \rangle$. 
Moreover, the same subsequence of transitions $\langle t_{j+1} \ldots ,t_n \rangle$ remains.
Hence, $(N,M)[\sigma'_{i \leftarrow j} \rangle (N,M')$ thus completing the proof.
\end{proof}

Note that for any $\sigma' \in \mi{Exp}_{(N,M)}(\sigma)$: $[t \in \sigma'] = [t \in \sigma]$ (i.e., $\sigma'$ and $\sigma$ are permutations of the same multiset) and the order per cluster does not change, i.e., transitions can only ``overtake'' transitions of other clusters.
Lemma~\ref{lemma:reord} shows that expediting transitions does not jeopardize the ability to execute the remainder of an enabled firing sequence. 
This can be used to show that it is impossible to have a marking dominating the home marking (i.e., one cannot reach a strictly larger marking). 

\begin{theorem}[No Dominating Markings in Free-Choice Nets With a Home Cluster]\label{theo:dommark}
Let $(N,M)$ be a marked proper free-choice net having a home cluster $C$.
For all $M' \in R(N,M)$: if $M' \geq \mi{Mrk}(C)$, 
then $M' = \mi{Mrk}(C)$.
\end{theorem}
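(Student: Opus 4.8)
The plan is to argue by contradiction and to work with a minimal counterexample. Suppose some reachable marking strictly dominates $\mi{Mrk}(C)$; among all such markings $M'$, together with a witnessing sequence $\sigma$ with $(N,M')[\sigma\rangle(N,\mi{Mrk}(C))$ (which exists because $\mi{Mrk}(C)$ is a home marking by Definition~\ref{def:homeclust}), I pick one for which $\card{\sigma}$ is minimal. Write $R = M' \bmin \mi{Mrk}(C)$ for the surplus, a non-empty multiset. The engine of the proof is the free-choice structure of clusters: by Definition~\ref{def:fcne} every transition of a cluster $D$ has preset exactly $\mi{Pl}(D)$, so (i) any transition that consumes a token from a place of $\mi{Pl}(C)$ must itself belong to $\mi{Tr}(C)$, and (ii) the transitions of a cluster are enabled all together, precisely when all of its places are marked. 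In particular, since $M' \geq \mi{Mrk}(C)$, the whole of $\mi{Tr}(C)$ is enabled in $M'$. At the very start I would also use Proposition~\ref{prop:cluschar} to dispatch the degenerate shape in which $C$ is a single sink place: a sink can never lose tokens, so the surplus is forced off $C$, handing the problem to the generic argument below.

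I would then examine the first transition $t_1$ of $\sigma$ and the marking $M'_1$ with $(N,M')[t_1\rangle(N,M'_1)$; note that $M'_1$ returns to $\mi{Mrk}(C)$ in $\card{\sigma}-1$ steps. If $t_1 \notin \mi{Tr}(C)$, then by fact (i) it consumes nothing from $\mi{Pl}(C)$, so firing it cannot destroy the domination and $M'_1 \geq \mi{Mrk}(C)$. Minimality of $\card{\sigma}$ then forces $M'_1 = \mi{Mrk}(C)$, i.e. $\sigma = \langle t_1\rangle$ and $M'[t_1\rangle\mi{Mrk}(C)$ in a single step with $t_1$ outside $C$. Here a direct token count finishes the job: $\pre{t_1}$ is disjoint from $\mi{Pl}(C)$, so the equation $\mi{Mrk}(C) = M' \bmin \pre{t_1} \bplus \post{t_1}$ pins $M'$ to exactly one token per place of $C$, pushes the surplus off $C$, and ultimately forces an output place of $t_1$ to carry a self-loop whose input token is already absent in $M'$, contradicting that $t_1$ is enabled in $M'$. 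More generally, any cluster other than $C$ that has an enabled transition in $M'$ has its \emph{first} representative in $\sigma$ enabled in $M'$ as well (presets within a cluster coincide), so Lemma~\ref{lemma:reord} lets me expedite that non-$C$ transition to the front and reduce to the case just handled.

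The remaining, and genuinely hard, case is when the surplus enables nothing new, i.e. $\mi{en}(N,M') = \mi{Tr}(C)$, so every firing out of $M'$ must begin with a home-cluster transition $t_1 \in \mi{Tr}(C)$. By fact (ii) such a $t_1$ empties every place of $\mi{Pl}(C)$ at once, producing $M'_1 = R \bplus [\,p \in \post{t_1}\,]$, which in general no longer dominates $\mi{Mrk}(C)$, so length-minimality does not apply to it directly. The intended resolution is an inner induction in which the surplus $R$ is tracked as it is carried through successive cluster firings until $\sigma$ returns to $\mi{Mrk}(C)$: since $\mi{Mrk}(C)$ carries no surplus, $R$ must be consumed \emph{exactly}, and the all-or-nothing firing of clusters (fact (ii)) together with repeated use of Lemma~\ref{lemma:reord} to relocate the surplus-consuming transitions is what I expect to rule this out. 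I anticipate this absorption step — proving that in a free-choice net the extra tokens of $R$ cannot be drained precisely to $\mi{Mrk}(C)$ — to be the main obstacle; the bookkeeping must simultaneously keep the expediting/reordering well-founded and maintain the relation between the run from $M'$ and the corresponding run replayed from the strictly smaller marking $\mi{Mrk}(C)$.
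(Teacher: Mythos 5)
Your overall strategy (contradiction, minimal-length witness $\sigma$, case split on whether a non-$C$ cluster fires first, expediting via Lemma~\ref{lemma:reord}) is the same as the paper's, and your easy cases essentially go through: the degenerate sink-cluster case, and the case where some cluster $D\neq C$ enabled in $M'$ occurs in $\sigma$ (expedite its first occurrence to the front; then $M'_1\geq \mi{Mrk}(C)$, minimality forces $M'_1=\mi{Mrk}(C)$, and the token count closes it --- though your ``self-loop'' phrasing is garbled; the clean contradiction is that $\post{t_1}$ must lie inside $\mi{Pl}(C)$, whence some place of $C$ ends with two tokens, or else a token survives outside $C$. You should also dispose of the sub-case where an enabled cluster $D\neq C$ has \emph{no} occurrence in $\sigma$: its tokens can only be consumed by $\mi{Tr}(D)$, so they survive to the final marking, which is impossible.)

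The genuine gap is your last case, $\mi{en}(N,M')=\mi{Tr}(C)$, which you explicitly leave as ``the main obstacle'' with only a sketch of an inner induction on the surplus $R$. No such induction is needed, and tracking how $R$ is ``absorbed'' is the wrong lever. The paper's resolution is to replay the witness $\sigma_s$ \emph{from the home marking itself}: let $\sigma_1\cdot\sigma_2\in\mi{Exp}_{(N,M')}(\sigma_s)$ with $\sigma_1$ a maximal expedited prefix executable from $\mi{Mrk}(C)$, giving $(N,\mi{Mrk}(C))[\sigma_1\rangle(N,M'')$ and $(N,M')[\sigma_1\rangle(N,M''')$ with $M'''=M''\bplus R$ and $\mi{en}(N,M'')\cap\{t\in\sigma_2\}=\emptyset$. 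In your hard case $\sigma_1\neq\langle\,\rangle$ (the first transition of $\sigma_s$ is in $\mi{Tr}(C)$, which is enabled in $\mi{Mrk}(C)$), so minimality of $\card{\sigma_s}$ forces $M'''\not>\mi{Mrk}(C)$, since otherwise $\sigma_2$ would be a strictly shorter witness. Now $M''$ is not dead, so some cluster $C_e$ is fully enabled in $M''$ and none of its transitions appear in $\sigma_2$. If $C_e=C$ then $M''\geq\mi{Mrk}(C)$ and $M'''=M''\bplus R>\mi{Mrk}(C)$, a contradiction; if $C_e\neq C$ then a marked place $p_e\in\mi{Pl}(C_e)\setminus\mi{Pl}(C)$ of $M''$ (hence of $M'''$) can only be emptied by transitions of $C_e$, none of which occur in $\sigma_2$, yet $(N,M''')[\sigma_2\rangle(N,\mi{Mrk}(C))$ leaves $p_e$ empty --- again a contradiction. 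Without this argument (or an equivalent one), your proof does not establish the theorem.
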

\begin{proof}
Consider a marked proper free-choice net $(N,M)$ having a home cluster $C$.
Assume there exists a reachable marking $M' \in R(N,M)$ such that $M' > \mi{Mrk}(C)$.
We show that this is \emph{impossible}, thereby proving the theorem.

First, we assume that $(N,M)$ has a deadlock and show that this leads to a contradiction.
Using Proposition~\ref{prop:cluschar}, 
we know that $\mi{Mrk}(C)$ is the only reachable dead marking 
and $\mi{Tr}(C) = \emptyset$. 
However, $M' > \mi{Mrk}(C)$ is reachable and the token in $C$ cannot be removed anymore if $\mi{Tr}(C) = \emptyset$.
Since the net is proper, any marking reachable from $M'$ will have at least one extra token next to the token in $\mi{Mrk}(C)$.
Therefore, $\mi{Mrk}(C)$ cannot be reached, contradicting that $C$ is a home cluster. Hence, $(N,M)$ must be deadlock-free. 

Since $(N,M)$ is deadlock-free, $\mi{Tr}(C) \neq \emptyset$ (use Proposition~\ref{prop:cluschar}), i.e., the home cluster has at least one transition.
All transitions in $\mi{Tr}(C)$ live, because we can always reach the home marking  $\mi{Mrk}(C)$ again and again.

Without loss of generality, we can assume that $M' \in R(N,M)$ is such that 
the distance to the home marking $\mi{Mrk}(C)$ is \emph{minimal}.
Let $\sigma_s$ be a \emph{shortest} path from $M'$ to $\mi{Mrk}(C)$ having length $ \card{\sigma_s}$.
In other words, $(N,M')[\sigma_s\rangle (N,\mi{Mrk}(C))$, and for all $M_{\mi{alt}} \in R(N,M)$ and $\sigma_{\mi{alt}} \in T^*$ such that $M_{\mi{alt}} > \mi{Mrk}(C)$ and 
$(N,M_{\mi{alt}})[\sigma_{\mi{alt}}\rangle (N,\allowbreak \mi{Mrk}(C))$: $\card{\sigma_{\mi{alt}}} \geq  \card{\sigma_s}$. Obviously, $\card{\sigma_s} \geq 1$ (otherwise $M' = \mi{Mrk}(C)$ contradicting our initial assumption).

$\mi{Exp}_{(N,M')}(\sigma_s)$ contains all permutations 
of the shortest sequence $\sigma_s$ that are obtained by expediting transitions.
Let $\sigma_1$ and $\sigma_2$ be such that $\sigma_1 \cdot \sigma_2 \in \mi{Exp}_{(N,M')}(\sigma_s)$, $(N,\mi{Mrk}(C))[\sigma_1\rangle (N,M'')$, and
$\mi{en}(N,M'') \cap \{t \in \sigma_2\} = \emptyset$.
$\sigma_1$ contains the transitions in $\sigma_s$ that can also be executed
starting from the home marking. This leads to marking $M''$. In this marking, none of the remaining transitions in $\sigma_s$ (i.e., the transitions in $\sigma_2$) can be executed.
In other words, starting from we $\mi{Mrk}(C)$, we try to execute as much of $\sigma_s$ as possible by expediting transitions (as described in Definition~\ref{def:expedite}). $\sigma_1$ is the part that can be executed
(leading to $M''$) and $\sigma_2$ is the remaining part of $\sigma_s$. $\sigma_2$ only contains transitions that are not enabled in $M''$. Obviously, there always exist
$\sigma_1$ and $\sigma_2$ such that these requirements are met ($\mi{Exp}_{(N,M')}(\sigma_s) \neq \emptyset$ and we can add transitions to $\sigma_1$ until this is no longer possible). 
Moreover, $\sigma_1$ can also be executed starting in $M'$ because it is the prefix of an expedited sequence. Let $M'''$ be the corresponding marking, i.e., 
$(N,M')[\sigma_1\rangle (N,M''')$. From this marking, we can reach the home marking
by executing $\sigma_2$ (because $\sigma_1 \cdot \sigma_2 \in \mi{Exp}_{(N,M')}(\sigma_s)$ and Lemma~\ref{lemma:reord}).
\begin{figure}[thb!]
\centering
\includegraphics[width=5.5cm]{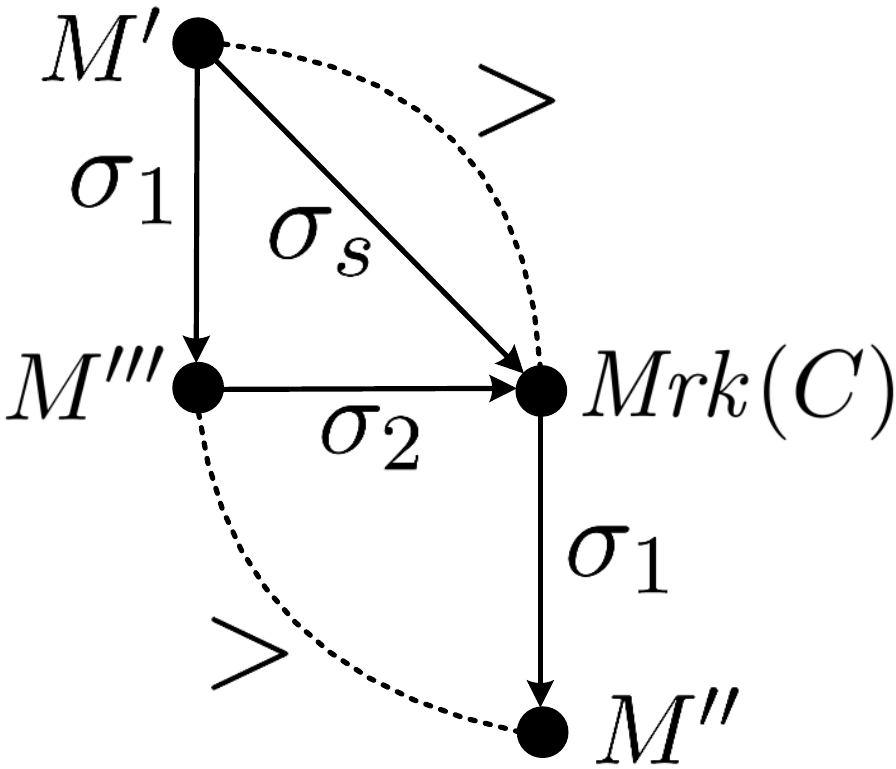}
\caption{Sketch of the construction used in Theorem~\ref{theo:dommark}.
The solid arrows denote firing sequences and the dashed lines indicate multiset domination.
 $\sigma_s$ is a firing sequence of minimal length leading from a marking $M'$ (which is larger than $\mi{Mrk}(C)$) to $\mi{Mrk}(C)$. Firing sequence $\sigma_1 \cdot \sigma_2$ is a permutation of $\sigma_s$ such that the transitions also enabled when starting from $\mi{Mrk}(C)$ are expedited leading to firing sequence $\sigma_1$. The remaining transitions in $\sigma_2$ are not enabled when starting from $\mi{Mrk}(C)$.}
\label{fig-proof-dom}
\end{figure}

To summarize, 
$\sigma_1 \cdot \sigma_2 \in \mi{Exp}_{(N,M')}(\sigma_s)$,
$(N,\mi{Mrk}(C))[\sigma_1\rangle (N,M'')$,
$(N,M')[\sigma_1\rangle (N,M''')$,
$(N,M''')[\sigma_2\rangle (N,\mi{Mrk}(C))$, and
$\mi{en}(N,M'') \cap \{t \in \sigma_2\} = \emptyset$.
Moreover, because $M' > \mi{Mrk}(C)$ also $M''' > M''$ and $\card{\sigma_s} \geq 1$.
Figure~\ref{fig-proof-dom} shows the relations between the different markings. 
To complete the proof we consider two cases ($\sigma_1 = \langle ~ \rangle$ and $\sigma_1 \neq \langle ~ \rangle$):
\begin{itemize}[noitemsep,topsep=2pt]
\item Assume $\sigma_1 = \langle ~ \rangle$. 
This implies that $M''' = M'$, $M'' = \mi{Mrk}(C)$, $\sigma_2 = \sigma_s$, $\mi{en}(N,M'')= \mi{Tr}(C)$,
and $\mi{Tr}(C) \cap \{t \in \sigma_s\} =\emptyset$.
Hence, when executing $\sigma_s$ starting from $M'$ the home cluster remains fully marked.  
However, there is at least one additional token in $M'$ that cannot ``disappear''
when executing $\sigma_s$ (the net is proper) leading to a contradiction. 
\item Assume $\sigma_1 \neq \langle ~ \rangle$. 
This implies that $M''' \not> \mi{Mrk}(C)$, otherwise there would be a shorter
sequence than $\sigma_s$, namely $\sigma_2$. (Recall that we selected $M'$ and $\sigma_s$ such that there is no shorter sequence leading to the home marking.)
There must exist an enabled cluster $C_e$ in $M''$ (the net cannot be dead), i.e., 
$\mi{Tr}(C_e) \subseteq \mi{en}(N,M'')$.
$\mi{Tr}(C_e) \cap \{t \in \sigma_2\} = \emptyset$.
If $C_e = C$, then we find a contradiction 
because this implies $M''' > \mi{Mrk}(C)$.
If $C_e \neq C$, then there is a place 
$p_e \in \mi{Pl}(C_e)$ outside of the home cluster $C$ 
that is marked in $M''$ and also $M'''$,
but $p_e \not\in \mi{Mrk}(C)$.
The token in $p_e$ is never removed by the transitions in $\sigma_2$.
However, after executing $\sigma_2$, place $p_e$ should be empty because only places in $C$ are marked, 
thus leading to a contradiction.
\end{itemize}
Hence, in all cases we find a contradiction, proving that $M' \leq \mi{Mrk}(C)$.
\end{proof}

Theorem~\ref{theo:dommark} implies boundedness.
Later, we show that marked proper free-choice nets having a home cluster are also safe.

\begin{corollary}[Boundedness]\label{corr:dommark}
Let $(N,M)$ be a marked proper free-choice net having a home cluster $C$.
For all $M_1,M_2 \in R(N,M)$: $M_1 \not> M_2$.
Hence, $(N,M)$ is also bounded.
\end{corollary}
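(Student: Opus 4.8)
The plan is to derive the corollary directly from Theorem~\ref{theo:dommark}, using the monotonicity of the token game, and then to read off boundedness as a standard consequence of the resulting antichain structure.

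First I would prove the statement $M_1 \not> M_2$ for all $M_1,M_2 \in R(N,M)$ by contradiction. Suppose $M_1 > M_2$, say $M_1 = M_2 \bplus \Delta$ with $\Delta$ a nonempty multiset of places. Since $C$ is a home cluster, $\mi{Mrk}(C)$ is a home marking, so there is a firing sequence $\sigma$ with $(N,M_2)[\sigma\rangle (N,\mi{Mrk}(C))$. The key ingredient is the monotonicity of Petri net firing: a sequence enabled in a marking is also enabled in any larger marking and produces a marking that is larger by exactly the same surplus. Hence $\sigma$ is enabled in $M_1$ and $(N,M_1)[\sigma\rangle (N,\mi{Mrk}(C) \bplus \Delta)$. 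Because $M_1 \in R(N,M)$, the marking $\mi{Mrk}(C) \bplus \Delta$ is reachable from the initial marking $M$, and since $\Delta \neq [~]$ we have $\mi{Mrk}(C) \bplus \Delta > \mi{Mrk}(C)$. This contradicts Theorem~\ref{theo:dommark}, which forbids any reachable marking from strictly dominating $\mi{Mrk}(C)$. Therefore no reachable marking strictly dominates another.

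It then follows that $R(N,M)$ is an antichain with respect to the strict domination order $<$: any two distinct reachable markings are incomparable. Identifying $\bag(P)$ with $\mathbb{N}^{\card{P}}$ under the componentwise order, Dickson's lemma guarantees that this order is a well-quasi-order and hence admits no infinite antichains. Consequently $R(N,M)$ is finite, which immediately yields boundedness. Alternatively, one can invoke the standard coverability characterization of unboundedness: a net is unbounded exactly when some reachable $M_2$ can reach a strictly larger $M_1 > M_2$; since such $M_1,M_2$ would both lie in $R(N,M)$, the first part already excludes this.

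The only subtle point worth flagging is the monotonicity argument: it is essential that firing a fixed sequence from a larger marking consumes and produces exactly the same tokens as from the smaller marking, leaving the surplus $\Delta$ untouched throughout. This is a general property of the Petri net token game (enabledness is upward closed and the firing rule is additive) and does not itself rely on the free-choice property; the free-choice and home-cluster hypotheses enter only through the already-established Theorem~\ref{theo:dommark}. Everything else is a routine appeal to that theorem and to the well-quasi-ordering of $\mathbb{N}^{\card{P}}$, so I expect no genuine obstacle beyond stating these facts cleanly.
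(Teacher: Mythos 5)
Your proposal is correct and follows essentially the same route as the paper: assume $M_1 > M_2$, fire a sequence leading $M_2$ to $\mi{Mrk}(C)$ from the larger marking $M_1$ by monotonicity, and obtain a reachable marking strictly dominating $\mi{Mrk}(C)$, contradicting Theorem~\ref{theo:dommark}. Your explicit derivation of boundedness (via Dickson's lemma, or equivalently the standard coverability characterization of unboundedness) fills in a step the paper leaves implicit, but it is routine and does not constitute a different approach.
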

\begin{proof}
Assume $M_1,M_2 \in R(N,M)$ such that $M_1 > M_2$ (first marking is strictly larger).
There exists a $\sigma$ such that
$(N,M_2)[\sigma\rangle (N,\mi{Mrk}(C))$.
Since $M_1 > M_2$ there must be another reachable marking $M_3$ such that
$(N,M_1)[\sigma\rangle (N,M_3)$ and $M_3 > \mi{Mrk}(C)$.
However, Theorem~\ref{theo:dommark} says this is impossible, leading to a contradiction.
\end{proof}

\subsection{Rooted Disentangled Paths}
\label{sec:rdespath}

We will now reason about the numbers of tokens on specific paths in the Petri net. Therefore, we first provide some standard definitions and then introduce the new notion of \emph{rooted disentangled paths}.

\begin{definition}[Elementary Paths and Circuits]\label{def:elempath}
A \emph{path} in a Petri net $N=(P,T,F)$ is a non-empty sequence of nodes $\rho = \langle x_1,x_2, \ldots ,x_n \rangle$ such that $(x_i,x_{i+1}) \in F$ for $1 \leq i < n$.
Hence, $x_{i-1} \in {\pre{x_i}}$ for $1< i \leq n$ and $x_{i+1} \in {\post{x_i}}$ for $1 \leq i < n$.
$\mi{paths}(N)$ is the set of all paths in $N$.
$\rho$ is an \emph{elementary path} if $x_i \neq x_j$ for $1 \leq i < j \leq n$ (i.e., no element occurs more than once). An elementary path is called is a \emph{circuit} if $x_1 \in {\post{x_n}}$.
\end{definition}

Next, we focus on paths that start and end with a place and that visit a cluster at most once.
Consider $(N_5,M_5)$ in Figure~\ref{fig-home-lucent}.
$\langle t8,p7,t8,p8\rangle$ is a path that is not elementary. This implies that also a cluster is visited multiple times.
$\langle p1,t1,p3,t4,p7 \rangle$ is a so-called disentangled path since each place on the path belongs to a different cluster.

\begin{definition}[(Rooted) Disentangled Paths]\label{def:rdespath}
Let $N=(P,T,F)$ be a Petri net.
$\rho = \langle p_1,t_1,p_2, \ldots , t_{n-1},p_n \rangle$ is a \emph{disentangled path} of $N$ if and only if 
$\rho$ is a path of $N$ ($\rho \in \mi{paths}(N)$),
$p_1 \in P$, 
$p_n \in P$, and
for all $1 \leq i < j \leq n$: $\cluster{p_i} \neq \cluster{p_j}$ (i.e., $\rho$ starts and ends with a place and does not contain elements that belong to the same cluster).
A disentangled path is $Q$-\emph{rooted} if $p_n \in Q$.
\end{definition}

Disentangled paths are elementary, but not all elementary paths are disentangled.
Consider $N_3$ in Figure~\ref{fig-locally-safe-not-perpetual}.
$\rho_1 = \langle p5,t3,p3,t2,p4 \rangle$ is elementary, but not disentangled because $\cluster{p5} = \cluster{p4}$.
$\rho_2 = \langle p5,t3,p3,t2,p1 \rangle$ is elementary and disentangled.
$\rho_2$ is $Q$-rooted where $Q$ can be any subset of places that includes $p1$.

In the remainder of this subsection, we reason about the existence of disentangled paths and the number of tokens on them.

\begin{lemma}[Existence of Rooted Disentangled Paths]\label{lem:exstrdespath}
Let $N=(P,T,F)$ be a free-choice net, $C$ a cluster of $N$, $p \in P$, and $q \in C \cap P$. If $N$ has a path $\rho \in \mi{paths}(N)$ starting in $p$ and ending in $q$, then there also exists a $C$-rooted disentangled path starting in $p$.
\end{lemma}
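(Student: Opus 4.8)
The plan is to prove the statement by a shortcutting argument combined with induction on the length of the given path. The crucial structural ingredient is a standard property of free-choice nets: all places in a common cluster have identical postsets. Concretely, I would first observe that in a free-choice net every cluster transition $t \in \mi{Tr}(\cluster{t})$ satisfies $\pre{t} = \mi{Pl}(\cluster{t})$, so that $\post{p} = \mi{Tr}(\cluster{p})$ for every place $p$. This follows from Definition~\ref{def:clust} together with the free-choice property (Definition~\ref{def:fcne}): for any place $p'$, all transitions in $\post{p'}$ share the input place $p'$, so by free-choice they have exactly the same preset; iterating the cluster construction from $p$ therefore adds no places beyond $\mi{Pl}(\cluster{p})$, and every cluster transition has that set as its preset. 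The consequence I actually use is the following: if $\cluster{p_a} = \cluster{p_b}$ and $t \in \post{p_b}$, then $t \in \post{p_a}$.

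Given this, write the path as $\rho = \langle r_1,s_1,r_2,s_2,\ldots,r_m \rangle$ with $r_1 = p$, $r_m = q \in C \cap P$, places $r_i \in P$, and transitions $s_i \in \post{r_i}$ with $r_{i+1} \in \post{s_i}$. If $\rho$ is already disentangled, then it is a $C$-rooted disentangled path starting in $p$ (it ends in $q \in C$) and we are done. Otherwise there are indices $a < b$ with $\cluster{r_a} = \cluster{r_b}$, and I would distinguish two cases. If $b < m$, then $s_b \in \post{r_b} = \post{r_a}$, so $\langle r_1,\ldots,r_a,s_b,r_{b+1},\ldots,r_m \rangle$ is again a valid path from $p$ to $q$, strictly shorter than $\rho$ and still ending in $q \in C$. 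If $b = m$, then $\cluster{r_a} = \cluster{q} = C$, hence $r_a \in \mi{Pl}(C)$, and the truncation $\langle r_1,\ldots,r_a \rangle$ is a strictly shorter path from $p$ ending at a place of $C$. In both cases we obtain a strictly shorter path from $p$ to a place of $C$, so by induction on the length the process terminates in a disentangled path from $p$ ending in $C$, i.e., a $C$-rooted disentangled path starting in $p$.

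I expect the main obstacle to be the endpoint case $b = m$: a shortcut that always redirects through a transition would destroy or displace the endpoint, so one must instead truncate the path at $r_a$ and argue that $r_a$ still lies in $C$. The other delicate point — and the place where the free-choice hypothesis is indispensable — is the validity of the interior shortcut: redirecting from $r_a$ straight into $s_b$ is legitimate only because $r_a$ and $r_b$, lying in the same cluster, have the same postset, which fails for general (non-free-choice) nets. Once these two points are handled, termination (strict decrease of length) and the disentangledness of the resulting path are routine.
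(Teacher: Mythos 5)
Your proposal is correct and follows essentially the same route as the paper's proof: repeatedly shortcut the path whenever two places share a cluster (using the free-choice fact that all places of a cluster are inputs of all its transitions) and truncate once a place of $C$ is reached, with termination by strict length decrease. The only cosmetic difference is that you organize this as an induction on path length with one shortcut per step and explicitly derive the structural cluster fact, whereas the paper performs a single forward pass with a pointer and asserts that fact parenthetically.
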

\begin{proof}
Let $\rho = \langle p_1,t_1,p_2, \ldots , t_{n-1},p_n \rangle \in \mi{paths}(N)$ be the path connecting $p = p_1$ and $q = p_n$.
$\rho$ can be converted into a $Q$-rooted disentangled path starting in $p$. 
This is done by removing parts of the path through shortcuts that can be taken when the same cluster is visited multiple times.
Let $i \in \{1, \ldots, n\}$ be a pointer pointing to place $p_i$ in $\rho$.
We start with $i = 1$ (i.e., $i$ points to the first place $p_1$) and move towards the end of the path $i=n$. 
\begin{itemize}[noitemsep,topsep=2pt]
\item If pointer $i$ points to place $p_i$ and $p_i \in C$, we can ignore the rest of the sequence because we already reached $C$ via a unique sequence of clusters. 
(Note that if $p=p_1$ is already in $C$, we have a sequence of length 1.)
\item If $p_i \not\in C$, then $i < n$ because $q = p_n \in C$. 
Hence, there still exists an output transition $t_i$ with output place $p_{i+1}$ in $\rho$.
\begin{itemize}[noitemsep,topsep=2pt]
\item If none of the $p_j$ with $j>i$ is in the same cluster as $p_i$, then we keep $p_i$ and $t_i$, and continue with $p_{i+1}$ (i.e, increment $i$). 
\item If there is a $p_j$ with $j>i$ that is in the same cluster as $p_i$, then we take the largest $j$ for which this holds. Also $j < n$, because $p_j \not\in C$ (here $p_i$ and $p_j$ are in the same cluster). Hence, there exists a $t_j$ and $p_{j+1}$.
$p_i$ and $p_j$ may refer to the same place or not.
However, $\{p_i,p_j\} \subseteq \pre{t_j}$ (both are in the same cluster and all transitions in the cluster consume from all places in the cluster).
Since $p_i \in \pre{t_j}$, we can remove the subsequence $\langle t_i,p_{i+1}, \ldots , t_{j-1},p_j \rangle$ and directly connect $p_i$ to $t_j$. 
Sequence 
$\langle \ldots , p_i, t_j, p_{j+1}, \ldots , t_{n-1},\allowbreak p_n \rangle$
constitutes a path in the Petri net and we continue with $p_{j+1}$ (i.e, set $i = j+1$).
In summary, $\langle p_1, \ldots , p_i, t_i, t_{i+1}, \ldots, p_j, t_j, p_{j+1}, \ldots , t_{n-1},p_n \rangle$ is transformed into
$\langle p_1, \ldots , p_i, \allowbreak \stkout{t_i, t_{i+1}, \ldots, p_j}, t_j, p_{j+1}, \ldots , t_{n-1},p_n \rangle$ and continues with $i=j+1$.
\end{itemize}
\end{itemize}
We repeat this process until we reach $C$.
Each of the elements in the resulting path is connected to the previous one and we never visit the same cluster twice. We also keep the initial place $p$. Therefore, the resulting path is a $C$-rooted disentangled path starting in $p$.
\end{proof}

Consider the path $\rho = \langle p6,t4,p5,t3,p3,t2,p4,t3,p3,t2,p1 \rangle$ in Figure~\ref{fig-locally-safe-not-perpetual}.
the path ends in the cluster $C=\{p1,t1\}$. Using the approach used in the proof of Lemma~\ref{lem:exstrdespath}, 
this path is converted into the $C$-rooted disentangled path $\langle p6,t4,p5,\stkout{t3,p3,t2,p4},t3,p3,t2,p1 \rangle = \langle p6,t4,p5,t3,p3,t2,p1 \rangle$.
We can construct a $C$-rooted disentangled path starting in any place $p$ that is not dead,
i.e., a place marked in at least one of reachable markings.

\begin{corollary}[Existence of Rooted Disentangled Paths from Marked Places]\label{corr:pathexists}
Let $(N,M)$ be a marked proper free-choice net having a home cluster $C$.
For any non-dead place $p \in P$, there exists a $C$-rooted disentangled path starting in $p$.
\end{corollary}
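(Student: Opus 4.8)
The plan is to reduce the corollary to Lemma~\ref{lem:exstrdespath}. That lemma already turns \emph{any} directed path from $p$ to a place of $C$ into a $C$-rooted disentangled path starting in $p$, so the only thing left to establish is purely structural: for every non-dead place $p$ there is a directed path in $(P\cup T,F)$ from $p$ to some $q\in C\cap P$. Once such a path exists, Lemma~\ref{lem:exstrdespath} (applied with cluster $C$, start place $p$, and end place $q$) finishes the argument immediately.

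To produce that path I would combine non-deadness with the home-cluster property. Since $p$ is non-dead, fix a reachable marking $M_p\in R(N,M)$ with $M_p(p)\geq 1$. Because $C$ is a home cluster, $\mi{Mrk}(C)$ is a home marking (Definition~\ref{def:homeclust}), so there is a firing sequence $\tau$ with $(N,M_p)[\tau\rangle(N,\mi{Mrk}(C))$. The idea is to track, along $\tau$, the number of tokens sitting on the nodes that are forward-reachable from $p$, and to argue that this quantity can never fall back to zero; since $\mi{Mrk}(C)$ marks only $C$, this forces some place of $C$ to be forward-reachable from $p$.

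Concretely, let $X\subseteq P\cup T$ be the forward closure of $\{p\}$ under $F$ (all nodes $y$ for which there is a directed path from $p$ to $y$), and set $c(M')=\sum_{q\in X\cap P} M'(q)$. Two closure facts drive the invariant: if $t\in X$ then $\post{t}\subseteq X$, and if $t\notin X$ then $\pre{t}\cap X=\emptyset$ (an input place in $X$ would pull $t$ into $X$). Consequently a transition outside $X$ never consumes from $X$ and hence cannot decrease $c$; and firing a transition $t\in X$ deposits $\card{\post{t}}\geq 1$ tokens into $X$ by properness (Definition~\ref{def:proppn}), so the resulting marking has $c\geq 1$. Hence, once $c\geq 1$ it stays $\geq 1$ under any single firing, and therefore along all of $\tau$. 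Since $p\in X$ we have $c(M_p)\geq 1$, so $c(\mi{Mrk}(C))\geq 1$; but $c(\mi{Mrk}(C))=\card{X\cap P\cap C}$, so some place $q\in C\cap P$ lies in $X$, i.e. is reachable from $p$ by a directed path. Feeding this path into Lemma~\ref{lem:exstrdespath} yields the desired $C$-rooted disentangled path.

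The main obstacle is making the bookkeeping for $c$ airtight. The delicate point is that $t\in X$ forces \emph{all} outputs of $t$ into $X$ while some of its inputs may lie outside $X$, so the net effect on $c$ is $\card{\post{t}}-\card{\pre{t}\cap X}$; one must use enabledness ($c(M')\geq\card{\pre{t}\cap X}$) together with properness ($\card{\post{t}}\geq 1$) to conclude $c\geq 1$ afterwards, and separately observe that a transition outside $X$ may still \emph{produce} into $X$, which only helps. Everything else --- the reduction to Lemma~\ref{lem:exstrdespath} and the reading $c(\mi{Mrk}(C))=\card{X\cap P\cap C}$ --- is routine. Note that no case distinction on whether $p\in C$ is needed: if $p\in C$ the conclusion $X\cap P\cap C\neq\emptyset$ holds trivially, and otherwise it is delivered by the invariant.
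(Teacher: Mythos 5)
Your proposal is correct and follows essentially the same route as the paper: reduce to Lemma~\ref{lem:exstrdespath} by exhibiting a directed path from $p$ to some place of $C$, obtained from a firing sequence leading a marking that marks $p$ to $\mi{Mrk}(C)$, using properness to argue that the "mass" originating at $p$ cannot vanish before reaching $C$. The only difference is presentational: the paper justifies this step with an informal token-colouring argument (propagating "red" tokens forward), whereas you make it airtight via the static forward-closure $X$ and the counting invariant $c\geq 1$ --- a slightly more rigorous formalization of the same idea, and the closure facts and the enabledness bound $c(M')\geq\card{\pre{t}\cap X}$ you isolate are exactly the points that need care.
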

\begin{proof}
Take an arbitrary place $p$ that can be marked in some reachable marking $M'$. If $p \in C$, then $\rho = \langle p \rangle$ is a
$C$-rooted disentangled path.
If $p \not\in C$, then there must be a path from $p$ to one of the places in $C$ (say $q$). 
This follows from the fact that the net is proper, i.e., for all $t \in T$: $\pre{t} \neq \emptyset$ and $\post{t} \neq \emptyset$.
Therefore, a token can not simply disappear and must end up in $C$.
To see this, color the token in $p$ red and then execute a firing sequence ending in $\mi{Mrk}(C)$. 
When executing a transition with at least one red token, make all produced tokens also red.
Because we cannot consume a red token without producing at least one new red token, 
we know that at least one red token will end up in $\mi{Mrk}(C)$.
This proves that there is a path $\rho$ starting in $p$ and ending in some $q \in C \cap P$ (follow back one red token in $\mi{Mrk}(C)$).
Since there is such a path $\rho$, there also exists a $C$-rooted disentangled path starting in $p$ 
(apply Lemma~\ref{lem:exstrdespath}).
\end{proof}

Dead places do not change the behavior and can be removed together with the output transitions if desired 
(but do not have to be removed, since they remain empty anyway).
The next lemma plays a key role in our analysis of nets having a home cluster $C$: 
$C$-rooted disentangled paths are \emph{safe}, i.e., 
at any time \emph{all} places on a $C$-rooted disentangled path \emph{together} contain at most one token.

\begin{lemma}[Rooted Disentangled Paths Are Safe]\label{lem:pathsafety}
Let $(N,M)$ be a marked proper free-choice net having a home cluster $C$.
For any reachable marking, $M' \in R(N,M)$ and $C$-rooted disentangled path $\rho = \langle p_1,t_1,p_2, \ldots , t_{n-1},\allowbreak p_n \rangle$: $M'(\{p_1,p_2, \ldots ,p_n\}) \allowbreak \leq 1$.
\end{lemma}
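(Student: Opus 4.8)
The plan is to argue by contradiction: assume some reachable marking $M' \in R(N,M)$ satisfies $M'(\{p_1,\ldots,p_n\}) \geq 2$, and derive a reachable marking strictly dominating $\mi{Mrk}(C)$, which contradicts Theorem~\ref{theo:dommark}. Write $X = \{p_1,\ldots,p_n\}$ for the places on $\rho$. The structural fact I would establish first is the standard free-choice cluster property: every transition in a cluster has the whole set of cluster places as its preset, so a transition is enabled iff its cluster is fully marked, and its firing removes exactly one token from each place of that cluster. Since $\rho$ is disentangled, its places lie in pairwise distinct clusters; hence every firing consumes from at most one place of $X$ and therefore removes at most one token from $X$. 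Note also that, because $\rho$ ends in $C$ and is disentangled, the only place of $X$ lying in $C$ is $p_n$, so $\mi{Mrk}(C)(X) = 1$.

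The key observation is \emph{sink safety} at $C$. The only transitions that can consume the token in $p_n$ are those of $\mi{Tr}(C)$, and by the cluster property each of them is enabled only when all of $\mi{Pl}(C)$ is marked. Consequently, if the path ever carries at least two tokens at a moment when $C$ is fully marked, then the marking is $\geq \mi{Mrk}(C)$ while differing from it on $X$ (two tokens rather than one), hence strictly larger than $\mi{Mrk}(C)$ --- contradicting Theorem~\ref{theo:dommark}. This settles the case where the excess already sits in $C$ and, more importantly, shows that excess tokens cannot be shed at $p_n$ without producing a dominating marking.

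It therefore remains to drive the excess to $p_n$. Here I would combine free-choice redirection with expediting (Lemma~\ref{lemma:reord}) and the home-cluster property, mirroring the proof of Theorem~\ref{theo:dommark}. Among all reachable markings with $M'(X)\geq 2$, choose one whose distance to $\mi{Mrk}(C)$ is minimal, and let $\sigma_s$ be a shortest firing sequence from $M'$ to $\mi{Mrk}(C)$. Minimality forces $M'(X)=2$ and forces the first transition of $\sigma_s$ to take a token off $\rho$, say from $p_a$; otherwise the marking reached after one step would still have at least two tokens on $\rho$ and be strictly closer to the home marking. If $a=n$, this transition lies in $\mi{Tr}(C)$, so $C$ is fully marked in $M'$ and sink safety gives the contradiction immediately. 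If $a<n$, then $\cluster{p_a}$ is fully marked, so by free choice the forward transition $t_a$ of $\rho$ is also enabled in $M'$; firing $t_a$ instead keeps the token on $\rho$ (now at $p_{a+1}$) and does not decrease $M'(X)$, yielding a reachable marking with at least two tokens on $\rho$ whose excess has been pushed one step closer to $p_n$.

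Iterating this forward-redirection, re-routing to $\mi{Mrk}(C)$ after each step via the home-cluster property and using expediting to bring the relevant cluster firings to the front, funnels the two tokens onto $p_n$ and hence onto a fully-marked $C$, where sink safety forces the dominating marking. The main obstacle is precisely the termination and monotonicity of this funneling: a forward firing $t_a$ may also deposit tokens on other places of $X$ (when $\post{t_a}$ meets $X$ beyond $p_{a+1}$), so a naive ``distance-to-$p_n$'' potential need not decrease. I expect to handle this exactly as in Theorem~\ref{theo:dommark} --- working with a minimal counterexample and with the expedited permutations in $\mi{Exp}_{(N,M')}(\sigma_s)$ rather than with an ad hoc potential --- so that each reduction strictly decreases the distance to the home marking until the dominating marking is forced.
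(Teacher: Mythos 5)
Your overall strategy is the same as the paper's: use the free-choice cluster property to steer tokens forward along the disentangled path, observe that each transition touches at most one path place because the path visits each cluster once, and reduce everything to the impossibility of a marking strictly dominating $\mi{Mrk}(C)$ (Theorem~\ref{theo:dommark}). All of the structural observations you make are correct, including the ``sink safety'' point that two path tokens together with a fully marked $C$ already yield a marking $> \mi{Mrk}(C)$ (and your treatment of the dead singleton cluster $\mi{Tr}(C)=\emptyset$ is in fact slightly cleaner than the paper's, which unnecessarily moves the second token into $p_n$).

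The one step you leave open --- termination of the funneling --- is exactly the step the paper itself handles informally: it argues that any walk towards the home marking must eventually enable a transition of each path cluster currently holding a token (since $\mi{Mrk}(C)$ marks only $C$), at which point the free-choice property lets one redirect to the path transition, and it then asserts that repeating this ``until we find a contradiction'' suffices. So you are not missing an idea the paper supplies. However, be aware that your proposed repair does not work as stated: if $M'$ is chosen to minimize the distance to $\mi{Mrk}(C)$ among markings with two path tokens, then after redirecting the first transition $u$ of $\sigma_s$ to the sibling path transition $t_a$ (same preset, by free choice), the resulting marking $M'''$ still carries two path tokens, and minimality of $M'$ only gives that its distance to the home marking is \emph{at least} $\card{\sigma_s}$ --- there is no reason the redirected firing brings you closer to $\mi{Mrk}(C)$, so ``each reduction strictly decreases the distance to the home marking'' fails. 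A working termination argument has to measure progress along the path (e.g., the advancement of tokens through the finitely many clusters of $\rho$, combined with boundedness from Corollary~\ref{corr:dommark}) rather than distance to the home marking; this is the part that would still need to be written out carefully, in your version and, arguably, in the paper's.
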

\begin{proof}
Assume $(N,M)$ is a marked proper free-choice net, $C$ is a home cluster,  and $\rho = \langle p_1,t_1,p_2, \ldots , t_{n-1},\allowbreak p_n \rangle$ is a $C$-rooted disentangled path.
Let $P_{\rho} =  \{p_1,p_2, \ldots ,p_n\}$ and $T_{\rho} =  \{t_1,t_2, \ldots ,t_{n-1}\}$.

Assume that the lemma does not hold, i.e., $\rho$ is not safe and
$M'(P_{\rho}) > 1$ for some $M' \in R(N,M)$. 
We show that this leads to a contradiction.

Consider the tokens (at least two) in the places $P_{\rho}$. We try to move these tokens towards $p_n \in C$.
Each place $p_i \in P_{\rho}$ corresponds to a unique cluster $C_i$ because the path is disentangled. This combined with the free-choice property, allows us to fully control the trajectories of the tokens in $P_{\rho}$.

First, we look a the case where $C$ has a transition, say $t_C$ (i.e., there are no dead markings, see Proposition~\ref{prop:cluschar}).
We start in marking $M_c = M'$. If one of the transitions in $T_{\rho}$ is enabled, then we fire this transition (in any order and perhaps also multiple times) and update the current marking $M_c$. This cannot decrease the number of tokens, i.e., we still have $M_c(P_{\rho}) > 1$. 
Note that a transition in $T_{\rho}$ consumes precisely one token
``from the path'' and produces at least one token ``on the path'' (disentangled paths are elementary).
If $t_C$ is enabled, then $M_c \geq \mi{Mrk}(C)$. 
However, given the second token in $P_{\rho}$ this implies $M_c > \mi{Mrk}(C)$. 
This leads to a contradiction using Theorem~\ref{theo:dommark}. 
If none of the transitions in $T_{\rho} \cup \{t_C\}$ is enabled in $M_c$, then
we must be able to fire a sequence of other transitions enabling a transition in $T_{\rho} \cup \{t_C\}$. 
$C$ is a home cluster and there are no deadlocks, so we can always walk towards a marking enabling one of the transitions in 
$T_{\rho} \cup \{t_C\}$. The moment one of the transitions
in $T_{\rho}$ is enabled, we can again control the choices involved.
Hence, we can continue to move tokens along the path until we find a contradiction.

Next, we look a the case where $C$ does not have a transition (i.e., the home marking is is a deadlock, see Proposition~\ref{prop:cluschar}).
We can use exactly the same strategy to move the tokens towards $p_n$ (there one case less to consider). 
The moment a token reaches $p_n$ there is at least one additional token in $P_{\rho}$ and this one can also be moved to $p_n$ 
leading to a contradiction (apply Theorem~\ref{theo:dommark} to show that there cannot be two tokens in $p_n$).
\end{proof}

The previous results can be combined to show
that the class of marked Petri nets considered is safe.

\begin{corollary}[Marked Proper Free-Choice Net Having a Home Cluster Are Safe]\label{corr:safe}
Let $(N,M)$ be a marked proper free-choice net having a home cluster $C$. $(N,M)$ is safe.
\end{corollary}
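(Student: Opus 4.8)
The plan is to reduce safeness to the path-safety result that has already been established, so that almost nothing new needs to be proved. Recall that $(N,M)$ being safe means: for every reachable marking $M' \in R(N,M)$ and every place $p \in P$, we have $M'(p) \leq 1$. I would therefore fix an arbitrary $M' \in R(N,M)$ and an arbitrary $p \in P$, and argue $M'(p) \leq 1$, distinguishing whether $p$ is dead or not.

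First I would dispose of dead places. If $p$ is dead in $(N,M)$, then by Definition~\ref{def:lb} no reachable marking marks $p$, so in particular $M'(p) = 0 \leq 1$ and there is nothing to show. This case is exactly why the restriction to non-dead places in Corollary~\ref{corr:pathexists} is harmless.

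The main case is a non-dead place $p$. Here I would invoke Corollary~\ref{corr:pathexists} to obtain a $C$-rooted disentangled path $\rho = \langle p_1,t_1, \ldots , t_{n-1},p_n \rangle$ starting in $p$, so that $p = p_1$ is one of the places in $\{p_1, \ldots ,p_n\}$. Lemma~\ref{lem:pathsafety} then gives $M'(\{p_1, \ldots ,p_n\}) \leq 1$, and since $p$ is among these places we conclude $M'(p) \leq M'(\{p_1, \ldots ,p_n\}) \leq 1$. As $M'$ and $p$ were arbitrary, every reachable marking places at most one token in every place, i.e., $(N,M)$ is $1$-bounded and hence safe.

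I do not anticipate any genuine obstacle in this step: the entire difficulty has already been absorbed into Lemma~\ref{lem:pathsafety} (the token-counting argument along $C$-rooted disentangled paths, which ultimately rests on the no-domination Theorem~\ref{theo:dommark}) and into Corollary~\ref{corr:pathexists} (the red-token argument producing a suitable disentangled path from any marked place). The only point requiring a moment of care is precisely the non-dead hypothesis of Corollary~\ref{corr:pathexists}, and as noted above that gap is covered for free by the dead-place case, since a dead place can never carry a token.
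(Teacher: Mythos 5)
Your proof is correct and follows essentially the same route as the paper's: dead places trivially carry no tokens, and for non-dead places you combine Corollary~\ref{corr:pathexists} (existence of a $C$-rooted disentangled path starting at the place) with Lemma~\ref{lem:pathsafety} (such paths carry at most one token in total). The paper's own argument is exactly this two-case reduction, so nothing further is needed.
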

\begin{proof}
Follows directly from Lemma~\ref{lem:pathsafety} and Corollary~\ref{corr:pathexists}.
If a place $p$ is dead, then it will never have a token and thus safe.
If a place $p$ is not dead, then there exists a $C$-rooted disentangled path (Corollary~\ref{corr:pathexists}) starting in $p$, and this path must be safe due to Lemma~\ref{lem:pathsafety}. Hence, all places are safe.
\end{proof}

\subsection{Conflict-Pairs}
\label{sec:cfpair}

If a marked Petri net is not lucent, then there must be two different markings enabling the same set of transitions.
We will convert such a pair of markings 
into a \emph{conflict-pair}. By showing that these do not exist, we can prove lucency. 

\begin{definition}[Conflict-Pair]\label{def:cfpair}
Let $(N,M)$ be a marked Petri net.
$(M_1,M_2)$ is called a \emph{conflict-pair} for $(N,M)$ 
if and only if
\begin{itemize}[noitemsep,topsep=2pt]
\item $M_1$ and $M_2$ are reachable markings of $(N,M)$ (i.e., $M_1,M_2 \in R(N,M)$),
\item $M_1$ and $M_2$ are not dead (i.e., $\mi{en}(N,M_1) \neq \emptyset$ and $\mi{en}(N,M_2) \neq \emptyset$),
\item $\mi{en}(N,M_1) \cap \mi{en}(N,M_2) = \emptyset$ (no transition is enabled in both markings),
\item for all $t \in \mi{en}(N,M_1)$: $M_2(\pre t)\geq 1$, and
\item for all $t \in \mi{en}(N,M_2)$: $M_1(\pre t)\geq 1$.
\end{itemize}
\end{definition}
Consider $(N_3,M_3)$ Figure~\ref{fig-locally-safe-not-perpetual} and markings $M_1 = [p2, p3, p5]$ and $M_2 = [p2, p4, p5]$.
$M_1$ can be reached by firing $t1$ and $t4$.
$M_2$ can be reached by firing $t1$, $t2$, $t1$, and $t4$.
$\mi{en}(N,M_1) = \{t2\}$, $\mi{en}(N,M_2) = \{t3\}$,
$\mi{en}(N,M_1) \cap \mi{en}(N,M_2) = \emptyset$,
$M_2(\pre t2) = 1 \geq 1$, and
$M_1(\pre t3) = 1 \geq 1$.

To better understand the above definition, let us split
each of the two markings in the conflict-pair $(M_1,M_2)$ in an ``agreement'' and a ``disagreement''  part.
$M^{\mi{agree}}$ is the maximal marking such that 
$M^{\mi{agree}} \leq M_1$ and $M^{\mi{agree}} \leq M_2$.
Now we can write 
$M_1 = M^{\mi{agree}} \bplus M^{\mi{disagree}}_1$ and
$M_2 = M^{\mi{agree}} \bplus M^{\mi{disagree}}_2$.
Obviously, all three submarkings $M^{\mi{agree}}$, $M^{\mi{disagree}}_1$, and $M^{\mi{disagree}}_2$ are non-empty.
This allows us to speak about ``agreement tokens'' (tokens in $M^{\mi{agree}}$) and ``disagreement tokens'' (tokens in $M^{\mi{disagree}}_1$ or $M^{\mi{disagree}}_2$).
For $M_1 = [p2, p3, p5]$ and $M_2 = [p2, p4, p5]$, we have $M^{\mi{agree}} = [p2,p5]$, $M^{\mi{disagree}}_1 = [p3]$, 
and $M^{\mi{disagree}}_2 = [p4]$.

Both $M_1$ and $M_2$ should enable at least one transition, but there cannot be a transition enabled by both.
This means that $\mi{en}(N,M^{\mi{agree}}) = \emptyset$.
The last two requirements in Definition~\ref{def:cfpair}
state that transitions enabled in  $M_1$ and $M_2$ should also
consume at least one agreement token. 
Hence, for any $t_1 \in \mi{en}(N,M_1)$:
$t_1 \not\in \mi{en}(N,M_2)$,
$t_1 \not\in \mi{en}(N,M^{\mi{agree}})$,
$M^{\mi{agree}}(\pre{t_1}) \geq 1$, and
$M^{\mi{disagree}}_1(\pre{t_1}) \geq 1$.
Similarly, for any $t_2 \in \mi{en}(N,M_2)$:
$t_2 \not\in \mi{en}(N,M_1)$,
$t_2 \not\in \mi{en}(N,M^{\mi{agree}})$,
$M^{\mi{agree}}(\pre{t_2}) \geq 1$, and
$M^{\mi{disagree}}_2(\pre{t_2}) \geq 1$.

%

Next, we show that the absence of conflict-pairs implies lucency.
Later, we show that a marked proper free-choice net with a home cluster cannot have a conflict-pair.
Hence, such nets are guaranteed to be lucent.
\begin{figure}[thb!]
\centering
\includegraphics[width=15.0cm]{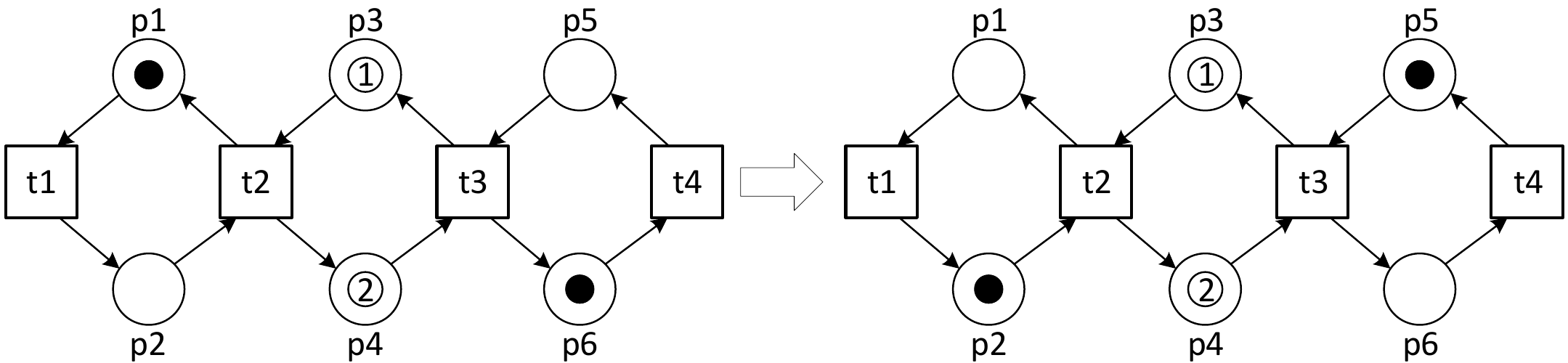}
\caption{Example showing how two markings that have the same ``footprint'' (left) in terms of enabling can be converted into a conflict-pair (right).
The left-hand side shows markings $M_1 = [p1,p3,p6]$ and $M_2 = [p1,p4,p6]$.
The right-hand side shows markings $M_1' = [p2,p3,p5]$ and $M_2' = [p2,p4,p5]$.
The ``agreement tokens'' are depicted as black dots (denoted by $\bullet$), the ``disagreement tokens'' are shown as 
$\raisebox{.5pt}{\textcircled{\raisebox{-.9pt} {1}}}$ (only in $M_1$ and $M_1'$)
or $\raisebox{.5pt}{\textcircled{\raisebox{-.9pt} {2}}}$ (only in $M_2$ and $M_2'$).
}
\label{fig-locally-safe-conflict-pair}
\end{figure}

To show that the absence of conflict-pairs implies lucency for free-choice nets having a home cluster, 
Lemma~\ref{lem:cflucent} shows that it is possible to convert two markings $M_1$ and $M_2$ that have the same ``footprint'' in terms of enabling (i.e., $\mi{en}(N,M_1)=\mi{en}(N,M_2)$) into a conflict-pair $(M_1',M_2')$.
To illustrate the construction, we consider the free-choice net $N_3$ in Figure~\ref{fig-locally-safe-not-perpetual} 
which does \emph{not} have a home cluster 
(we can find two markings having the same ``footprint'' because of this).
The left-hand side of Figure~\ref{fig-locally-safe-conflict-pair} shows 
the markings $M_1 = [p1,p3,p6]$ and $M_2 = [p1,p4,p6]$.
$M_1$ is the initial marking and $M_2$ can be reached by firing $t1$ and $t2$.
Tokens in $M_1$ but not in $M_2$ are represented by $\raisebox{.5pt}{\textcircled{\raisebox{-.9pt} {1}}}$ 
and tokens in $M_2$ but not in $M_1$ are represented by $\raisebox{.5pt}{\textcircled{\raisebox{-.9pt} {2}}}$.
Tokens in both markings are denoted by $\bullet$.
$M_1$ and $M_2$ demonstrate that the net is \emph{not} lucent because $\mi{en}(N_3,M_1)=\mi{en}(N_3,M_2)=\{t1,t4\}$.
To move to the conflict-pair $(M_1',M_2')$ with $M_1' = [p2,p3,p5]$ and $M_2' = [p2,p4,p5]$ on the right-hand side of Figure~\ref{fig-locally-safe-conflict-pair}, we do not ``touch'' the disagreement tokens denoted by 
$\raisebox{.5pt}{\textcircled{\raisebox{-.9pt} {1}}}$ and $\raisebox{.5pt}{\textcircled{\raisebox{-.9pt} {2}}}$.
This implies that no transitions in the corresponding clusters can fire and that these disagreement tokens do not move.
Hence, we can only fire transitions that only consume agreement tokens. These are depicted as normal black dots $\bullet$ in Figure~\ref{fig-locally-safe-conflict-pair}. In the example, we can fire $t1$ and $t4$ involving only agreement tokens. Such transitions consume and produce agreement tokens.
Since the net is guaranteed to be safe, no agreement tokens can be produced for places that have disagreement tokens 
(i.e., $p3$ and $p4$ in Figure~\ref{fig-locally-safe-conflict-pair}).
Hence, the $\raisebox{.5pt}{\textcircled{\raisebox{-.9pt} {1}}}$ and $\raisebox{.5pt}{\textcircled{\raisebox{-.9pt} {2}}}$ tokens cannot disappear in the process.

The main idea of Lemma~\ref{lem:cflucent} is to fire transitions that are enabled by 
agreement tokens until this is no longer possible. This leads to markings like $M_1' = [p2,p3,p5]$ and $M_2' = [p2,p4,p5]$ in Figure~\ref{fig-locally-safe-conflict-pair}. In such markings, all enabled transitions have a mix of agreement and disagreement tokens in their input places. For example, $t2$ is enabled in $M_1'$ by a $\bullet$ token in $p2$ and a $\raisebox{.5pt}{\textcircled{\raisebox{-.9pt} {1}}}$ token in $p3$, and 
$t3$ is enabled in $M_2'$ by a $\bullet$ token in $p5$ and a $\raisebox{.5pt}{\textcircled{\raisebox{-.9pt} {2}}}$ token in $p4$.
Lemma~\ref{lem:cflucent} shows that it is always possible to reach such markings using the fact that the home marking $\mi{Mrk}(C)$ is always reachable. 
Later, we will show that free-choice nets having a home cluster cannot have conflict-pairs.
Therefore, Figure~\ref{fig-locally-safe-conflict-pair} need to use an example that 
does not have a home cluster.

The proof of Lemma~\ref{lem:cflucent} can be summarized as follows.
Start from two different markings $M_1$ and $M_2$ that enable the same set of transitions. 
The tokens of both markings are split into ``agreement tokens'' denoted by $\bullet$ and ``disagreement tokens'' marked by $\raisebox{.5pt}{\textcircled{\raisebox{-.9pt} {1}}}$ or $\raisebox{.5pt}{\textcircled{\raisebox{-.9pt} {2}}}$ (as shown in Figure~\ref{fig-locally-safe-conflict-pair}).
Next, we fire transitions that consume only $\bullet$ tokens.
It is possible to do this in such a way that the process stops
and there are no such transitions enabled anymore (just try to move tokens closer to the home marking, this must stop at some stage because the disagreement tokens are needed). 
The $\raisebox{.5pt}{\textcircled{\raisebox{-.9pt} {1}}}$ and $\raisebox{.5pt}{\textcircled{\raisebox{-.9pt} {2}}}$ tokens do not move
and enabled transitions require at least one ``disagreement token'' ($\bullet$).
This way we can construct a conflict-pair $(M_1',M_2')$.
Hence, if there are no conflict-pairs, there cannot be two markings $M_1$ and $M_2$ that enable the same set of transitions, thus proving lucency.

\begin{lemma}[Nets Without Conflict-Pairs Are Lucent]\label{lem:cflucent}
Let $(N,M)$ be a marked proper free-choice net having a home cluster.
If $(N,M)$ has no conflict-pairs, then $(N,M)$ is lucent.
\end{lemma}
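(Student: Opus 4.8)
The plan is to prove the contrapositive: assume $(N,M)$ is \emph{not} lucent and construct a conflict-pair, contradicting the hypothesis. Since $(N,M)$ is not lucent, there exist two distinct reachable markings $M_1, M_2 \in R(N,M)$ with $\mi{en}(N,M_1) = \mi{en}(N,M_2)$ but $M_1 \neq M_2$. As outlined in the discussion preceding the lemma, I first split both markings relative to their common part: let $M^{\mi{agree}}$ be the maximal marking with $M^{\mi{agree}} \leq M_1$ and $M^{\mi{agree}} \leq M_2$, and write $M_1 = M^{\mi{agree}} \bplus M^{\mi{disagree}}_1$ and $M_2 = M^{\mi{agree}} \bplus M^{\mi{disagree}}_2$. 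Since $M_1 \neq M_2$, at least one of the disagreement parts is non-empty; because the net is safe (Corollary~\ref{corr:safe}), the agreement and disagreement tokens sit in disjoint sets of places, which is what lets me track them separately.

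The core construction is to fire only transitions whose input places all carry agreement tokens, thereby transforming $\bullet$ tokens while leaving the $\raisebox{.5pt}{\textcircled{\raisebox{-.9pt}{1}}}$ and $\raisebox{.5pt}{\textcircled{\raisebox{-.9pt}{2}}}$ tokens frozen in place. Concretely, I would start from $M_1$ and repeatedly fire such ``pure-agreement'' transitions, using the presence of a home cluster $C$ to guide the firing toward $\mi{Mrk}(C)$ so that the process is guaranteed to terminate: the agreement tokens cannot all reach $C$ (some transition would eventually need a disagreement token), so firing stalls at some marking $M_1'$ in which every enabled transition consumes at least one disagreement token. Since the disagreement tokens never moved, $M_1' = M_1^{\mi{agree\,'}} \bplus M^{\mi{disagree}}_1$ where $M_1^{\mi{agree\,'}}$ is the evolved agreement marking; crucially, by safety no new agreement token can land on a place already holding a disagreement token, so the disagreement tokens survive. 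I then perform the \emph{same} firing sequence of pure-agreement transitions starting from $M_2$ (this is legal because those transitions only used agreement tokens, which $M_2$ shares), reaching $M_2' = M_1^{\mi{agree\,'}} \bplus M^{\mi{disagree}}_2$. Thus $M_1'$ and $M_2'$ share exactly the evolved agreement part and differ only in their disagreement parts.

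It then remains to verify that $(M_1', M_2')$ satisfies all five clauses of Definition~\ref{def:cfpair}. Reachability is immediate. Non-deadness follows because the home cluster gives deadlock-freeness (Proposition~\ref{prop:cluschar}), so each $M_i'$ enables something; by construction every such enabled transition needs a disagreement token, and since the disagreement tokens of $M_1'$ and $M_2'$ live in disjoint clusters (they came from the disjoint $M^{\mi{disagree}}_1$ and $M^{\mi{disagree}}_2$), no transition can be enabled in both, giving $\mi{en}(N,M_1') \cap \mi{en}(N,M_2') = \emptyset$. For the last two clauses I use the free-choice property: a transition $t \in \mi{en}(N,M_1')$ has all of $\pre{t}$ marked in $M_1'$, and at least one input place $p$ carries an agreement token; since $M_1'$ and $M_2'$ agree on the agreement part, $M_2'(\pre{t}) \geq M_2'(p) \geq 1$, and symmetrically for transitions enabled in $M_2'$.

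The main obstacle I expect is the termination-and-freezing argument in the middle paragraph: I must argue rigorously that firing pure-agreement transitions while steering toward $\mi{Mrk}(C)$ \emph{necessarily halts} in a state where all remaining enabled transitions require a disagreement token, and simultaneously that safety forbids an agreement token from ever occupying a disagreement-occupied place so the two disagreement parts are genuinely preserved and remain disjoint. Getting the ordering right — choosing a firing sequence that both terminates and leaves at least one disagreement token behind in each marking — is the delicate point, and it is precisely where Corollary~\ref{corr:safe}, Theorem~\ref{theo:dommark}, and the home-cluster reachability property must be combined carefully.
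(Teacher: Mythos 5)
Your overall route is the same as the paper's: prove the contrapositive, split $M_1$ and $M_2$ into agreement and disagreement tokens, fire only transitions that consume no disagreement tokens until this stalls, replay the same sequence from $M_2$, and then check the five clauses of Definition~\ref{def:cfpair}. The clause-checking at the end is essentially right (though for clauses 4 and 5 you should spell out \emph{why} an enabled transition of $M_1'$ must have an agreement token on some input place: if it were enabled by disagreement tokens alone, those tokens were already present in $M_1$, so the transition would have been in $\mi{en}(N,M_1)=\mi{en}(N,M_2)$, contradicting that its inputs lie in places unmarked in $M_2$).

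The genuine gap is the step you yourself flag as ``the main obstacle'': you assert that greedily firing pure-agreement transitions, ``steered toward $\mi{Mrk}(C)$,'' necessarily halts in a marking where \emph{every} enabled transition needs a disagreement token. Nothing in your argument establishes this, and the naive greedy process need not terminate at all: a home cluster does not exclude infinite runs (e.g.\ in $(N_1,M_1)$ of Figure~\ref{fig-intro-fc} one can cycle forever through $t3$ and $t5$ even though $[p4]$ is a home marking), so the agreement tokens could circulate indefinitely without ever stalling. The paper resolves exactly this point with the expediting machinery (Definition~\ref{def:expedite}, Lemma~\ref{lemma:reord}): it fixes one finite firing sequence $\sigma$ with $(N,M_1)[\sigma\rangle(N,\mi{Mrk}(C))$ and only expedites those $T_{\mi{rest}}$-transitions that occur in $\sigma$, so termination is trivial because each step shortens the unprocessed suffix $\sigma_2$. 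The price is that the resulting stopping condition is \emph{weaker} than the one you rely on --- it only guarantees that no $T_{\mi{rest}}$-transition occurring in $\sigma_2$ is enabled in $M_1'$, not that no pure-agreement transition at all is enabled. Consequently the paper cannot conclude disjointness of $\mi{en}(N,M_1')$ and $\mi{en}(N,M_2')$ the way you do (via ``every enabled transition needs a disagreement token''); instead it argues that any common enabled transition $t$ would satisfy $t\in T_{\mi{rest}}\setminus\{t\in\sigma_2\}$, hence its input places survive the execution of $\sigma_2$ and must be marked in $\mi{Mrk}(C)$, forcing $t$ into the home cluster and yielding a contradiction via Theorem~\ref{theo:dommark}. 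So your proof is missing both the termination argument and, once termination is obtained the way it realistically can be, the replacement argument for clause 3; these are the heart of the proof rather than routine details.
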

\begin{proof}
Let $(N,M)$ be a marked proper free-choice net having a home cluster $C$. We need to prove that if $N=(P,T,F)$ has no conflict-pairs, then $N$ is lucent. This can be rewritten to the logically equivalent contrapositive ``if $N$ is not lucent, then $N$ has a conflict-pair''. We will construct such a conflict-pair. 

Assume $N$ is \emph{not} lucent. There must be two markings $M_1,M_2 \in R(N,M)$ such that $\mi{en}(N,M_1)=\mi{en}(N,M_2)$ and $M_1 \neq M_2$.
We will  show that, based on these markings, we can construct a conflict-pair $(M_1',M_2')$.

The only dead reachable marking is $\mi{Mrk}(C)$.
Since $\mi{en}(N,M_1)=\mi{en}(N,M_2)$ and $M_1 \neq M_2$, we conclude
that $\mi{en}(N,M_1)=\mi{en}(N,M_2) \neq \emptyset$, $M_1 \neq \mi{Mrk}(C)$, and $M_2 \neq \mi{Mrk}(C)$ (use see Proposition~\ref{prop:cluschar}).

Since $(N,M)$ is safe (see Corollary~\ref{corr:safe}), we can 
partition the tokens into three groups based on the places where they reside:
$P_{\bullet} = \{p \in P \mid  p \in M_1 \ \wedge \  p \in M_2  \}$,
$P_{1} = \{p \in P \mid  p \in M_1 \ \wedge \  p \not\in M_2  \}$, and
$P_{2} = \{p \in P \mid  p \not\in M_1 \ \wedge \  p \in M_2  \}$.
Tokens in $P_{\bullet}$ are shared by both markings 
(i.e., the ``agreement tokens'' mentioned before).
Tokens in $P_{1}$ and $P_{2}$ exist in only one of the two markings (i.e., the ``disagreement tokens'' mentioned before).
None of these three sets can be empty.
Because $\mi{en}(N,M_1)=\mi{en}(N,M_2) \neq \emptyset$, transitions enabled in both markings must agree on the marked input places. Hence, $P_{\bullet} \neq \emptyset$.
Because $M_1 \neq M_2$ and one cannot be strictly larger than the other one (Corollary~\ref{corr:dommark}), $P_{1} \neq \emptyset$ and $P_{2} \neq \emptyset$.
We also create three groups of transitions:
$T_1 = \{t \in T \mid \pre{t} \cap P_{1} \neq \emptyset \}$,
$T_2 = \{t \in T \mid \pre{t} \cap P_{2} \neq \emptyset \}$,  and
$T_{\mi{rest}} = T \setminus (T_1 \cup T_2) = \{t \in T \mid \pre{t} \cap (P_{1} \cup P_{2}) = \emptyset \}$.
Note that $T_1$ and $T_2$ may overlap in principle, but do not overlap with $T_{\mi{rest}}$ i.e.,
$(T_1 \cup T_2)$ and $T_{\mi{rest}}$ partition $T$.
Each subset (i.e., $T_1$, $T_2$ or $T$) includes all transitions of a cluster or none (i.e., clusters agree on membership).

After introducing these notations, we pick a firing sequence 
$\sigma$ starting in $M_1$ and ending in the home marking, i.e.,
$(N,M_1)\allowbreak[\sigma\rangle\allowbreak (N,\mi{Mrk}(C))$.
Such a $\sigma$ exists, because $C$ is a home cluster.

Like in the proof of Theorem~\ref{theo:dommark} we split $\sigma$ into $\sigma_1$ and $\sigma_2$.
$\mi{Exp}_{(N,M_1)}(\sigma)$ contains all permutations of firing sequence $\sigma$ that are obtained by expediting transitions.
Let $\sigma_1$ and $\sigma_2$ be such that $\sigma_1 \cdot \sigma_2 \in \mi{Exp}_{(N,M_1)}(\sigma)$, 
$(N,M_1)[\sigma_1\rangle (N,M_1')$,
$\sigma_1 \in {T_{\mi{rest}}}^*$,
and $\mi{en}(N,M_1') \cap T_{\mi{rest}} \cap \{t \in \sigma_2\} = \emptyset$.
In other words, we expedite transitions from $T_{\mi{rest}}$, until this is no longer possible.
Given $\mi{Exp}_{(N,M_1)}(\sigma)$ it is always possible to find such $\sigma_1$, $\sigma_2$, and $M_1'$.
Suppose that $\mi{en}(N,M_1') \cap T_{\mi{rest}} \cap \{t \in \sigma_2\} \neq \emptyset$,
then we take the first transition in $\sigma_2$ that is in this set and move it to $\sigma_1$
(see construction in Definition~\ref{def:expedite}).
Since $\sigma_1$ does not fire transitions possibly consuming disagreement tokens 
(recall that $T_{\mi{rest}} = \{t \in T \mid \pre{t} \cap (P_{1} \cup P_{2}) = \emptyset \}$), $\sigma_1$ is also enabled in $M_2$.
Let $M_2'$ be the marking reached after firing $\sigma_1$ in $M_2$, i.e., $(N,M_2)[\sigma_1\rangle(N,M_2')$.
Also, $(N,M_1')[\sigma_2\rangle\allowbreak (N,\mi{Mrk}(C))$ (because $\sigma_1 \cdot \sigma_2 \in \mi{Exp}_{(N,M_1)}(\sigma)$ and Lemma~\ref{lemma:reord}).
Figure~\ref{fig-proof-cfp} summarizes the different entities involved and their relationships.
\begin{figure}[thb!]
\centering
\includegraphics[width=6.0cm]{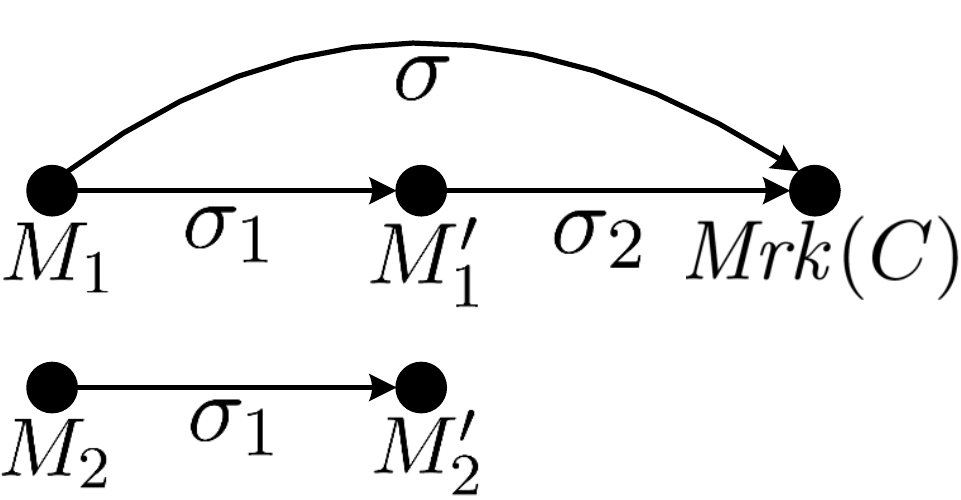}
\caption{Sketch of the construction used in Lemma~\ref{lem:cflucent}.
The elements satisfy the following relations:
$(N,M_1)\allowbreak[\sigma\rangle\allowbreak (N,\mi{Mrk}(C))$,
$\sigma_1 \cdot \sigma_2 \in \mi{Exp}_{(N,M_1)}(\sigma)$,
$(N,M_1)[\sigma_1\rangle (N,M_1')$,
$(N,M_2)[\sigma_1\rangle(N,M_2')$,
$(N,M_1')[\sigma_2\rangle (N,\mi{Mrk}(C))$,
$\sigma_1 \in {T_{\mi{rest}}}^*$, and
$\mi{en}(N,M_1') \cap T_{\mi{rest}} \cap \{t \in \sigma_2\} = \emptyset$.}
\label{fig-proof-cfp}
\end{figure}

$M_1(p) = M_1'(p)$ and $M_2(p) = M_2'(p)$ for any $p \in P_1 \cup P_2$, i.e., 
the disagreement places are unaffected by $\sigma_1$ because the $T_1$ and $T_2$ transitions did not fire
and $\sigma_1$ cannot add tokens to $P_{1}$ or $P_{2}$, because the net is safe 
(see Corollary~\ref{corr:safe}). $\sigma_1$ only produces ``agreement tokens'' and
putting such a token in a disagreement place violates the safety property in the sequence starting in $M_1$ or $M_2$.
Also $M_1(p) = M_2(p)$ and $M_1'(p) = M_2'(p)$ for any $p \in P \setminus (P_1 \cup P_2)$.
This also holds for intermediate markings when firing the transitions in $\sigma_1$. Hence, the collection of $\raisebox{.5pt}{\textcircled{\raisebox{-.9pt} {1}}}$ and $\raisebox{.5pt}{\textcircled{\raisebox{-.9pt} {2}}}$ tokens does not change (no disagreement tokens are removed and no new disagreement tokens are created). Moreover, there is a non-empty set of agreement tokens (denoted by $\bullet$) because the net is proper ($M_1'$ and $M_2'$ agree on these and each transition in $\sigma_1$ produces at least one such token).

Next, we prove that $(M_1',M_2')$ is indeed a conflict-pair for $N$.
We check the required properties listed in Definition~\ref{def:cfpair}:
\begin{itemize}[noitemsep,topsep=2pt]

\item $M_1'$ and $M_2'$ are indeed reachable markings of $(N,M)$ because $M_1,M_2 \in R(N,M)$, 
$(N,M_1)\allowbreak [\sigma_1\rangle(N,M_1')$ and $(N,M_2)[\sigma_1\rangle(N,M_2')$,
\item $M_1'$ contains at least one ``disagreement token'' $\raisebox{.5pt}{\textcircled{\raisebox{-.9pt} {1}}}$ and one ``agreement token'' $\bullet$ (see above).
$M_1'$ cannot be dead, because the only reachable marking that may be dead is $\mi{Mrk}(C)$ having a single token (apply again Proposition~\ref{prop:cluschar}).
$M_2'$ also contains at least one ``disagreement token'' $\raisebox{.5pt}{\textcircled{\raisebox{-.9pt} {2}}}$ and one ``agreement token'' $\bullet$.
Hence, neither $M_1'$ nor $M_2'$ can be dead.

\item Next, we show that $T' = \mi{en}(N,M_1') \cap \mi{en}(N,M_2') = \emptyset$ using the following observations:
\begin{itemize}[noitemsep,topsep=2pt]
\item $T' \subseteq T_{\mi{rest}}$, because the transitions in $T_1$ and $T_2$ cannot be enabled in both $M_1'$ and $M_2'$
(no tokens were added to a place in $P_1 \cup P_2$ by $\sigma_1$).
\item $\mi{en}(N,M_1') \cap T_{\mi{rest}} \cap \{t \in \sigma_2\} = \emptyset$ was used as a  criterion when splitting $\sigma$ into $\sigma_1$ and $\sigma_2$.
\item Combining the above
implies $T' \cap \{t \in \sigma_2\} = \emptyset$. Hence,
the input places of the transitions in $T'$ are still marked after executing $\sigma_2$ in $M_1'$.
\item Since $(N,M_1')[\sigma_2\rangle (N,\mi{Mrk}(C))$, the input places of $T'$ must be marked in $\mi{Mrk}(C)$. Hence, $T' \subseteq C$.
\item This implies that the transitions in the home cluster are enabled in both $M_1'$ and $M_2'$.
This is only possible if $M_1'=M_2'$ leading to a contradiction, i.e., $\mi{en}(N,M_1') \cap \mi{en}(N,M_2') = \emptyset$.
\end{itemize}
Note that in $M_1'$ and $M_2'$ all enabled transitions need to consume at least one ``disagreement token''. Hence, no transition can be enabled in both $M_1'$ and $M_2'$.
If a transition would be enabled in both, then $\sigma_1$ could have been extended.

\item For all $t \in \mi{en}(N,M_1')$: $M_2'(\pre t)\geq 1$, because each transition enabled in $M_1'$ must 
have an ``agreement token'' produced by $\sigma_1$ and a ``disagreement token'' in $P_1$.
If a transition $t$ would be enabled based on ``disagreement tokens'' only, these would have been there in $M_1$ already
(recall that $M_1(p) = M_1'(p)$ for any $p \in P_1$) leading to a contradiction because $\mi{en}(N,M_1)=\mi{en}(N,M_2)$.
Hence, any transition $t$ enabled in $M_1'$ must have an ``agreement token'' produced by $\sigma_1$ on one of it input places.
This token is also there in $M_2'$. Hence, $M_2'(\pre t)\geq 1$.

\item For all $t \in \mi{en}(N,M_2')$: $M_1'(\pre t)\geq 1$. Here the same arguments apply.
A transition cannot be enabled based on ``disagreement tokens'' only, since these would have been there in $M_2$ already
($M_2(p) = M_2'(p)$ for any $p \in P_2$).
\end{itemize}
Hence, $(M_1',M_2')$ is indeed a conflict-pair and thus completes the contrapositive proof.
\end{proof}

\subsection{Home Clusters Ensure Lucency in Free-Choice Nets}
\label{sec:mainresult}

Now we can prove the main result of this paper: Marked proper free-choice nets having a home cluster are lucent.
We use the notions of rooted disentangled paths and
conflict-pairs. The basic idea is to show that 
a conflict-pair implies that there is an unsafe rooted disentangled path which is impossible.
The absence of conflict-pairs implies lucency.
\begin{figure}[thb!]
\centering
\includegraphics[width=16.0cm]{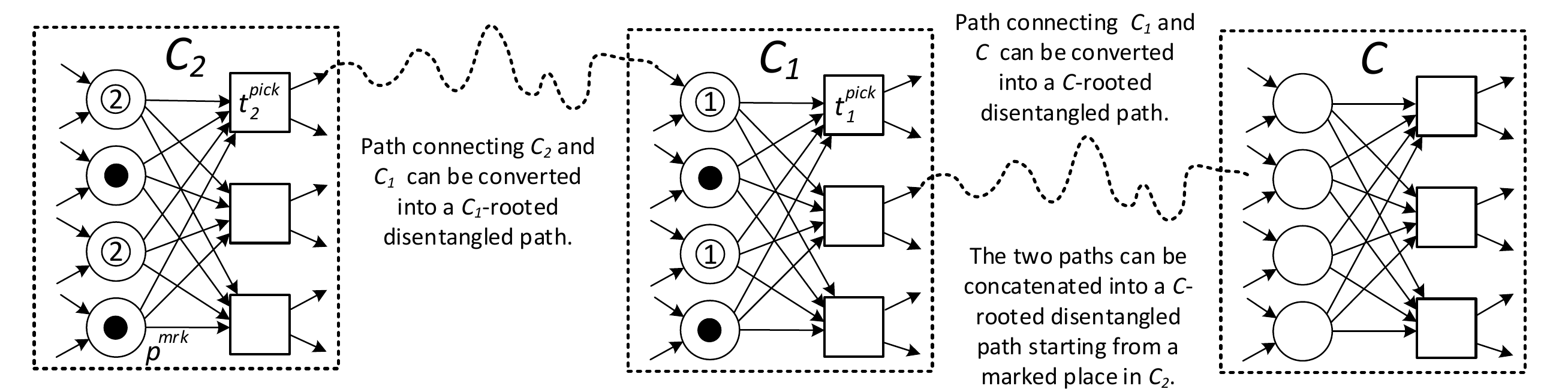}
\caption{Visualization of the three clusters considered in the proof of Theorem~\ref{theo:home-cp}. $C$ is the home cluster. $C_1$ is a cluster enabled in $M_1$ but not in $M_2$.
$C_2$ is a cluster enabled in $M_2$ but not in $M_1$.
The places labeled $\raisebox{.5pt}{\textcircled{\raisebox{-.9pt} {1}}}$ or $\raisebox{.5pt}{\textcircled{\raisebox{-.9pt} {2}}}$ contain a token in the respective marking (just in $M_1$ or just in $M_2$). The paths connecting $C_2$ and $C_1$ and $C_1$ and $C$ are converted into rooted disentangled paths.
These two rooted disentangled paths can be concatenated to create a $C$-rooted disentangled path starting in $p^{mrk}$. The proof shows that at least one of these rooted disentangled paths is not safe, proving that the net has no conflict-pairs and thus must be lucent.}
\label{fig-theo-lucent}
\end{figure}

Theorem~\ref{theo:home-cp} shows that there cannot be a conflict-pair $(M_1,M_2)$ in
a marked proper free-choice net having a home cluster.
Figure~\ref{fig-theo-lucent} sketches the main idea of the proof.
First, we assume that there exists a conflict-pair $(M_1,M_2)$.
We identify, next to the home cluster $C$, two additional clusters $C_1$ and $C_2$ based on 
the conflict-pair $(M_1,M_2)$. 
$C_1$ is enabled in marking $M_1$ and $C_2$ is enabled in marking $M_2$.
$C_1$ can be any cluster enabled in marking $M_1$.
$C_2$ is a cluster enabled in marking $M_2$ that contributes to the enabling of
cluster $C_1$ which is disabled in marking $M_2$.
As Figure~\ref{fig-theo-lucent} shows $C_1$ has a $\raisebox{.5pt}{\textcircled{\raisebox{-.9pt} {1}}}$ input token and $C_2$ has a $\raisebox{.5pt}{\textcircled{\raisebox{-.9pt} {2}}}$ input token.

Based on the selected  $C_1$ and $C_2$ clusters, we create two rooted disentangled paths: 
$\rho'$ is a $C_1$-rooted disentangled path connecting $C_2$ to $C_1$ and 
$\rho''$ is a $C$-rooted disentangled path connecting $C_1$ to $C$.
These two rooted disentangled paths are combined into a path $\rho'''$ running from $C_2$ to $C$ via $C_1$.
If $\rho'''$ is not a $C$-rooted disentangled path (i.e., the same cluster is visited multiple times along the path), then it is possible to reach a marking starting from $M_2$ which puts a token on $\rho''$
(the path connecting $C_1$ to $C$) while having an agreement token in $C_1$.
Hence, there is a $C$-rooted disentangled path connecting $C_1$ to $C$ having at least two tokens (see proof for details). Using Lemma~\ref{lem:pathsafety} this leads to a contradiction. Hence, $\rho'''$ must be a $C$-rooted disentangled path. However, considering $M_1$ (rather than a marking reached from $M_2$) path $\rho'''$ has at least two tokens.
This also leads to a contradiction using Lemma~\ref{lem:pathsafety}. 
Therefore, there cannot be a conflict-pair $(M_1,M_2)$.
The approach presented using Figure~\ref{fig-theo-lucent} is detailed in the proof below. 

\begin{theorem}[Home Clusters Ensure Absence of Conflict-Pairs]\label{theo:home-cp}
Let $(N,M)$ be a marked proper free-choice net having a home cluster. $(N,M)$ has no conflict-pairs.
\end{theorem}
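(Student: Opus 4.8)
The plan is to argue by contradiction: assume $(N,M)$ has a conflict-pair $(M_1,M_2)$ (Definition~\ref{def:cfpair}) and derive a violation of Lemma~\ref{lem:pathsafety}, the safety of $C$-rooted disentangled paths. Throughout I would use that the net is safe (Corollary~\ref{corr:safe}) and admits no strict domination (Corollary~\ref{corr:dommark}), so markings may be treated as sets of marked places. As in the proof of Lemma~\ref{lem:cflucent}, split the marked places into the agreement set $P_{\bullet}$ (marked in both $M_1$ and $M_2$) and the disagreement sets $P_1,P_2$ (marked only in $M_1$, resp.\ only in $M_2$). The free-choice property gives the two facts I would lean on: every transition of a cluster has preset $\mi{Pl}(C)$, so a cluster is enabled exactly when all its places are marked; and distinct clusters are node-disjoint. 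I would also record that in a conflict-pair no transition of the home cluster $C$ is enabled in $M_1$ or $M_2$: otherwise $M_1\geq\mi{Mrk}(C)$, hence $M_1=\mi{Mrk}(C)$ by Theorem~\ref{theo:dommark}, which (via Proposition~\ref{prop:cluschar} and the conflict-pair conditions) forces the companion marking to be dead, contradicting Definition~\ref{def:cfpair}. Thus every enabled cluster considered below differs from $C$.

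Next I would fix the three clusters of Figure~\ref{fig-theo-lucent}. Choose any cluster $C_1$ enabled in $M_1$; by disjoint enabling it is disabled in $M_2$, so $\mi{Pl}(C_1)$ contains a place $a\in P_1$, while the condition $M_2(\pre{t})\geq 1$ for $t\in\mi{Tr}(C_1)$ yields an agreement place $b\in\mi{Pl}(C_1)\cap P_{\bullet}$. To obtain $C_2$ I would trace back the token that first re-enables $C_1$ from $M_2$: firing from $M_2$ to $\mi{Mrk}(C)$ (possible since $C$ is a home cluster) must eventually carry the agreement token out of $b$, which can only happen through a transition of $C_1$; at that step $C_1$ is enabled, so $a$ is marked. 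Following the causal history of that token in $a$ backwards (every consumed token has an earlier producer unless it already lay in $M_2$) produces a directed path into $a$ whose earliest firing $t_2^{*}$ consumes only $M_2$-tokens. Hence $C_2:=\cluster{t_2^{*}}$ is enabled in $M_2$, lies upstream of $C_1$, and—being disabled in $M_1$ while satisfying $M_1(\pre{t_2^{*}})\geq 1$—carries a disagreement place in $P_2$ and an agreement place $c\in\mi{Pl}(C_2)\cap P_{\bullet}$. The three clusters $C,C_1,C_2$ are pairwise distinct.

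From the directed path $c\to t_2^{*}\to\cdots\to a$ I would build, via Lemma~\ref{lem:exstrdespath}, a $C_1$-rooted disentangled path $\rho'$ from $c$ to $C_1$, and via Corollary~\ref{corr:pathexists} a $C$-rooted disentangled path $\rho''$ from $C_1$ to $C$; concatenating yields a path $\rho'''$ running from $c$ through $C_1$ to $C$. The decisive case split is whether $\rho'''$ is disentangled, i.e.\ whether some cluster $D$ is visited both strictly before and strictly after $C_1$ along $\rho'''$. If no such $D$ exists, then disentangling $\rho'''$ (Lemma~\ref{lem:exstrdespath}) retains exactly one place of $C_1$, giving a single $C$-rooted disentangled path through the agreement place $c\in C_2$ and through a place of $C_1$. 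Since $C_1$ is enabled in $M_1$ and $c$ is an agreement place, both of these distinct places are marked in $M_1$; this places two tokens on one $C$-rooted disentangled path, contradicting Lemma~\ref{lem:pathsafety}.

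The remaining case, and the main obstacle, is when such a repeated cluster $D$ exists, for then no single disentangled path can run $C_2\!\to\!C_1\!\to\!C$ through $C_1$, and the two tokens must be manufactured dynamically. Here I would root $\rho''$ at the agreement place $b\in C_1$, so that $M_2$ already marks the start of $\rho''$. Starting from $M_2$ with $C_2$ enabled, I would route a token emitted by $C_2$ along the prefix of $\rho'$ into the shared cluster $D$ and then onto the tail of $\rho''$—legitimate because the $\rho'$-place and the $\rho''$-place of $D$ share all output transitions by freedom of choice—while never firing a transition of $C_1$, so that the agreement token at $b$ stays frozen. This produces a marking reachable from $M_2$ carrying two tokens on the $C$-rooted disentangled path $\rho''$, again contradicting Lemma~\ref{lem:pathsafety}. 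The point I expect to spend the most care on is precisely this controlled routing: showing, in the spirit of the proof of Lemma~\ref{lem:pathsafety}, that the free-choice controllability of token trajectories together with reachability of $\mi{Mrk}(C)$ lets me advance the injected token onto $\rho''$ without disturbing $b$ or the token already present on $\rho''$. As both cases are impossible, no conflict-pair can exist, which (together with Lemma~\ref{lem:cflucent}) is exactly what is needed to conclude lucency.
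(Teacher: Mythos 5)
Your overall architecture coincides with the paper's: assume a conflict-pair, pick a cluster $C_1$ enabled in $M_1$, trace back from $M_2$ the token that first re-marks the missing places of $C_1$ to obtain a cluster $C_2$ enabled in $M_2$ with an agreement place, build a $C_1$-rooted disentangled path $\rho'$ and a $C$-rooted disentangled path $\rho''$, concatenate, and split on whether the concatenation revisits a cluster; the disentangled case is settled exactly as in the paper by putting the two $M_1$-tokens at $c$ and in $C_1$ on one $C$-rooted disentangled path and invoking Lemma~\ref{lem:pathsafety}.

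The gap is in your second case. You propose to \emph{construct} a controlled routing from $M_2$ that pushes a token along $\rho'$ into the repeated cluster $D$ ``while never firing a transition of $C_1$,'' and you defer the justification to a Lemma~\ref{lem:pathsafety}-style walking argument. That argument does not transfer: in the proof of Lemma~\ref{lem:pathsafety} the auxiliary firings used to enable the next cluster on the path are completely unconstrained, whereas here you must additionally guarantee that none of them belongs to $C_1$ (otherwise the frozen agreement token at $b$ is consumed before $D$ becomes enabled). Nothing in the free-choice controllability of the path transitions themselves gives you control over those auxiliary firings. The paper avoids having to build any new routing at all: it fixes a \emph{shortest} firing sequence $\sigma_{en}$ from $M_2$ that marks a place of $C_1\cap P_1$, observes that no transition of $C_1$ can fire during $\sigma_{en}$ (since $C_1$ only becomes enabled at the very last step), takes $\rho'$ so that all its transitions occur in $\sigma_{en}$, and then simply looks at the marking of $\sigma_{en}$ just before the $D$-transition $t'$ fires: at that instant all places of $D$ are marked (so the $\rho''$-place $p_j''$ carries a token) and the agreement place of $C_1$ is still marked, giving two tokens on the re-rooted $\rho''$ and the desired contradiction. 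Your own construction of $C_2$ via the causal history already contains this sequence, so the repair is available to you; but as written, the ``controlled routing'' step is an unproven claim rather than a consequence of what you have established, and it is the step on which the whole second case rests.
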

\begin{proof}
Let $(N,M)$ be a marked proper free-choice net having a home cluster $C$. 
We assume that $(N,M)$ has a conflict-pair $(M_1,M_2)$
and show that this leads to a contradiction.

{\bf Useful notations: $P_{\bullet}$, $P_{\emptyset}$, $P_{1}$, and $P_{2}$.}
Based on the conflict-pair $(M_1,M_2)$, we partition the set of places $P$ into four sets
$P_{\bullet} = \{p \in P \mid  p \in M_1 \ \wedge \  p \in M_2  \}$,
$P_{\emptyset} = \{p \in P \mid  p \not\in M_1 \ \wedge \  p \not\in M_2  \}$,
$P_{1} = \{p \in P \mid  p \in M_1 \ \wedge \  p \not\in M_2  \}$, and
$P_{2} = \{p \in P \mid  p \not\in M_1 \ \wedge \  p \in M_2  \}$.
Transitions enabled in $M_1$ have 
input places from $P_{\bullet}$ and $P_{1}$.
$\mi{en}(N,M_1) = \{ t \in T \mid 
\pre{t} \cap P_{\bullet} \neq \emptyset \ \wedge \
\pre{t} \cap P_{\emptyset} = \emptyset \ \wedge \
\pre{t} \cap P_{1} \neq \emptyset \ \wedge \
\pre{t} \cap P_{2} = \emptyset 
\}$.
Transitions enabled in $M_2$ have 
input places from $P_{\bullet}$ and $P_{2}$.
$\mi{en}(N,M_2) = \{ t \in T \mid 
\pre{t} \cap P_{\bullet} \neq \emptyset \ \wedge \
\pre{t} \cap P_{\emptyset} = \emptyset \ \wedge \
\pre{t} \cap P_{1}  = \emptyset \ \wedge \
\pre{t} \cap P_{2} \neq \emptyset 
\}$.
This follows directly from Definition~\ref{def:cfpair}.

{\bf Selecting clusters $C_1$ and $C_2$.}
Pick an arbitrary transition enabled in $M_1$:
$t^{pick}_1 \in \mi{en}(N,M_1)$. 
Call the corresponding cluster $C_1$ (i.e., $t^{pick}_1 \in C_1$).
Cluster $C_1$ is fully marked in $M_1$, 
but has at least one unmarked place in $M_2$.
$P^{unmrk} = C_1 \cap P_{1}$ is the non-empty set of such places.
To reach the home marking from $M_2$, we need to execute a transition in cluster $C_1$ because it is partially enabled.
Hence, there needs to be a firing sequence that marks the places in $P^{unmrk}$. Let $\sigma_{en}$ be a shortest firing sequence
starting in $M_2$ and marking a place in $P^{unmrk}$.
$\sigma_{en}$ starts with a transition enabled in $M_2$ and ends with a transition putting the first token in $P^{unmrk}$ (the transition may also mark other places in $P^{unmrk}$). 
Let $t^{pick}_2 \in \mi{en}(N,M_2)$ be the first transition in this shortest sequence $\sigma_{en}$.
Given this firing sequence we can ``follow a token'' 
from $t^{pick}_2$ to a place in $P^{unmrk}$.
This provides a path starting in $t^{pick}_2$ and ending in the first place marked in $P^{unmrk}$.
This path contains a subset of transitions in $\sigma_{en}$.
Obviously, such a path must exist, but there may be many candidates.
The cluster where this path starts is called $C_2$ (i.e., $t^{pick}_2 \in C_2$).
There exists a place $p^{mrk} \in C_2 \cap P_{\bullet}$ in this cluster that is marked in both $M_1$ and $M_2$
($t^{pick}_2$ is enabled in $M_2$ and at least of the input places must also have a token in $M_1$, since $(M_1,M_2)$ is a conflict pair).

{\bf Selecting two rooted disentangled paths $\rho'$ and $\rho''$.}
We use the three clusters $C$, $C_1$, and $C_2$ to prove the contradiction. There is a path from $C_2$ to $C_1$ and a path from $C_1$ to $C$. Note that $C_1$ and $C_2$ need to be different due to the disagreement tokens.
Also $C$ is different from both $C_1$ and $C_2$, since it is not possible to mark the home cluster completely and still have tokens in other places (use Corollary~\ref{corr:dommark}).
Due to Lemma~\ref{lem:exstrdespath}
there must be a $C_1$-rooted disentangled path starting in $p^{mrk}$. Let us call this path
$\rho' = \langle p_1',t_1',p_2', \ldots , t_{n-1}',p_n' \rangle$.
$p_1' = p^{mrk}$ and $p_n'$ is a place in cluster $C_1$.
Assume that the construction described in Lemma~\ref{lem:exstrdespath} is used, i.e., 
all transitions in $\rho'$ also appear in $\sigma_{en}$ (but the reverse does not need to hold since we follow a token and take shortcuts to ensure that each cluster appears only once).
For clarity, we refer to the end place of $\rho'$ as $p^{conn}$, i.e., 
$p^{conn} = p_n'$.
Due to Corollary~\ref{corr:pathexists}
there must also be a $C$-rooted disentangled path starting in $p^{conn}$ ($p^{conn}$ is non-dead in $(N,M)$).
Let us call this path
$\rho'' = \langle p_1'',t_1'',p_2'', \ldots , t_{m-1}'',p_m'' \rangle$.
$p_1'' = p^{conn}$ and $p_m''$ is a place in cluster $C$.
For clarity, we refer to this place as $p^{end}$, i.e., 
$p^{end} = p_m''$.

Hence, we have 
a $C_1$-rooted disentangled path $\rho'$ starting in $p^{mrk}$ and ending in $p^{conn}$ and
a $C$-rooted disentangled path $\rho''$ starting in $p^{conn}$ and ending in $p^{end}$.

{\bf Creating another rooted disentangled path $\rho'''$ by combining $\rho'$ and $\rho''$.}
Consider now the 
path $\rho''' = 
\langle p^{mrk},t_1',p_2', \ldots , t_{n-1}',p^{conn},
t_1'',p_2'', \ldots , t_{m-1}'',p^{end}
 \rangle$, i.e., the concatenation of the paths $\rho'$ and $\rho''$.
We will show that $\rho'''$ is a $C$-rooted disentangled path  starting in $p^{mrk}$ and ending in $p^{end}$.

Obviously, $\rho'''$ is also a path of $N$.
However, we also need to show that $\rho'''$
does not contain elements that belong to the same cluster.
If this is not the case there must be a place $p_i'$ in $\rho'$ with $1 \leq i < n$ and a place $p_j''$ in $\rho''$ with 
$1 \leq j \leq m$ that belong to the same cluster. (Note that $\rho'$  and $\rho''$ do not visit the same cluster twice when considered separately, and  $p_n'= p^{conn} = p_1''$ is in both so should not be compared with itself.)
However, this is impossible.
Assume there would be a cluster $C'$ 
with $p_i' \in C'$ and $p_j'' \in C'$.
Then a transition of this cluster should appear in $\sigma_{en}$.
Recall that we assume that the construction described in Lemma~\ref{lem:exstrdespath} is used to create $\rho'$, i.e., all transitions in $\rho'$ also appear in $\sigma_{en}$.
$t' \in C'$ is such a transition appearing in $\sigma_{en}$
and $\rho'$ and consuming tokens from both $p_i'$ and $p_j''$.
When starting in marking $M_2$ and executing $\sigma_{en}$, transition $t'$ occurs before any transition in $C_1$.
Consider the marking $M'$ just before $t'$ occurs, i.e., starting in $M_2$ a prefix of $\sigma_{en}$ is executed enabling $t'$ without executing any transition in $C_1$.
There exists a place $p^{alt} \in C_1 \cap P_{\bullet}$, because $C_1$ is fully marked in $M_1$ and partially marked in $M_2$.
In marking $M'$, both $p_j''$ and $p^{alt}$ are marked. 
$p_j''$ is marked because $t'$ is enabled.
$p^{alt}$ is marked because no transition in $C_1$ fired yet.
However, there is also a $C$-rooted disentangled path  
starting in $p^{alt}$, namely
$\rho^{alt} = \langle p^{alt},t_1'',p_2'', \ldots ,p_j'', \ldots t_{m-1}'',p^{end} \rangle$ 
(we can start in an arbitrary place in $C_1$ and still meet all requirements, note that compared to $\rho''$, $p^{conn}$ is replaced by $p^{alt}$).
Lemma~\ref{lem:pathsafety} shows that it is impossible 
to have two marked places $p_j''$ and $p^{alt}$ in the 
$C$-rooted disentangled path $\rho^{alt}$, leading to a contradiction.
Therefore, $\rho'''$ does not visit the same cluster multiple times (if so, $\rho''$ would not be a $C$-rooted disentangled path).
Hence, $\rho'''$ is a $C$-rooted disentangled path
starting in $p^{mrk}$ and ending in $p^{end}$.

{\bf The combined rooted disentangled path $\rho'''$ is not safe leading to a contradiction.}
Now consider the just constructed $C$-rooted disentangled path $\rho'''$ and marking $M_1$.
The places $p^{mrk}$ and $p^{conn}$ are both marked in $M_1$ and must be different.
Recall that $p^{mrk} \in C_2 \cap P_{\bullet}$ (i.e., also marked in $M_1$)
and $p^{conn} \in C_1$ (all places in $C_1$ are marked in $M_1$).
Again we apply Lemma~\ref{lem:pathsafety}, which shows that it is impossible to have two marked places 
in the $C$-rooted disentangled path $\rho'''$.
Therefore, we find another contradiction, showing that 
the conflict-pair $(M_1,M_2)$ cannot exist.
\end{proof}

Our goal was to show that marked proper free-choice nets having a home cluster
are lucent and this follows directly from the previous results.

\begin{corollary}[Home Clusters Ensure Lucency]\label{corr:home-lu}
Let $(N,M)$ be a marked proper free-choice net having a home cluster. $(N,M)$ is lucent.
\end{corollary}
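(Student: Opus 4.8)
The plan is to obtain the result purely by composing the two immediately preceding statements, since all the structural and behavioral work has already been discharged in Theorem~\ref{theo:home-cp} and Lemma~\ref{lem:cflucent}. First I would fix a marked proper free-choice net $(N,M)$ having a home cluster $C$, so that the hypotheses of both prior results are satisfied verbatim. The strategy is then a two-link chain: establish that the net admits no conflict-pair, and then invoke the fact that the absence of conflict-pairs forces lucency.

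Concretely, the first step is to apply Theorem~\ref{theo:home-cp} directly to $(N,M)$. Its hypotheses (marked, proper, free-choice, possessing a home cluster) are exactly the standing assumptions, so the theorem yields that $(N,M)$ has no conflict-pairs. I would not re-derive this; the theorem already shows that any hypothetical conflict-pair $(M_1,M_2)$ produces a $C$-rooted disentangled path carrying at least two tokens, contradicting the safety of such paths established in Lemma~\ref{lem:pathsafety}.

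The second step is to feed this conclusion into Lemma~\ref{lem:cflucent}. That lemma states that a marked proper free-choice net with a home cluster that has no conflict-pairs is lucent; its contrapositive construction converts any two distinct reachable markings with identical enabled sets into a genuine conflict-pair. Since we have just ruled out conflict-pairs, no such pair of markings can exist, and therefore $\mi{en}(N,M_1)=\mi{en}(N,M_2)$ implies $M_1=M_2$ for all $M_1,M_2\in R(N,M)$. By Definition~\ref{def:lucent}, $(N,M)$ is lucent, which is exactly the assertion of the corollary.

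I expect no genuine obstacle at the level of the corollary itself: the entire difficulty has been front-loaded into the auxiliary results, principally the safety of rooted disentangled paths (Lemma~\ref{lem:pathsafety}) and the conflict-pair machinery. The only point requiring a moment of care is to confirm that the hypotheses propagate cleanly — that the same net $(N,M)$ and the same home cluster $C$ satisfy the premises of both Theorem~\ref{theo:home-cp} and Lemma~\ref{lem:cflucent} simultaneously — which they do, since both results are stated for precisely the class of marked proper free-choice nets having a home cluster. Hence the proof reduces to a one-line composition, and the substance of the argument lives entirely in the cited results.
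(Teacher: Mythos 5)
Your proposal is correct and follows exactly the paper's own argument: apply Theorem~\ref{theo:home-cp} to rule out conflict-pairs and then Lemma~\ref{lem:cflucent} to conclude lucency. No differences worth noting.
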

\begin{proof}
This follows directly from Lemma~\ref{lem:cflucent} and Theorem~\ref{theo:home-cp}.
A marked proper free-choice net having a home cluster 
has no conflict-pairs (Theorem~\ref{theo:home-cp})
and, therefore, must be lucent (Lemma~\ref{lem:cflucent}).
\end{proof}

$(N_1,M_1)$ in Figure~\ref{fig-intro-fc} and 
$(N_5,M_5)$ in Figure~\ref{fig-home-lucent} are examples of 
free-choice nets having a home cluster and these are indeed lucent.
$(N_4,M_4)$ depicted in Figure~\ref{fig-fc-nonlucid} is
not lucent and indeed has no home cluster.

\section{Relation To Perpetual Nets}
\label{sec:perpmarknets}

This paper significantly extends the results for \emph{perpetual} marked 
free-choice nets presented in \cite{lucent-PN2018}.
These nets need to be live, bounded, and have a home cluster, 
whereas in this paper, we only require the latter (but boundedness is implied).
Moreover, unlike \cite{lucent-PN2018} the setting is not limited to strongly-connected nets, 
e.g., we allow for workflow nets and other types of Petri nets typically used in 
process mining, workflow management, and business process management.

\begin{definition}[Perpetual Marked Nets \cite{lucent-PN2018}]\label{def:perpmn}
A marked Petri net $(N,M)$ is perpetual if and only if 
it is live, bounded, and has a home cluster.
\end{definition}

In this paper, we focus on marked proper free-choice nets having a home cluster. 
Since boundedness is implied, the essential difference is the liveness requirement that we dropped.
None of the lucent Petri nets shown in this paper is live,
showing that this is a substantial generalization.
For example, $(N_1,M_1)$ in Figure~\ref{fig-intro-fc} and 
$(N_5,M_5)$ in Figure~\ref{fig-home-lucent}
are lucent but not perpetual. 
Lemma~\ref{lem:cflucent} and Theorem~\ref{theo:home-cp} (combined in Corollary~\ref{corr:home-lu}) can be used to show that 
$(N_1,M_1)$ and $(N_5,M_5)$ are lucent.
\begin{table}[htb!]
\centering
\begin{tabular}{|p{1.5cm}|p{3.0cm}||c|c|}
  \hline
\multicolumn{2}{|p{4.7cm}||}{class of nets for which lucency is proven to hold} & \parbox[t]{4cm}{marked proper free-choice nets having a home cluster (this paper)} & 
\parbox[t]{4cm}{perpetual nets (free-choice, live, bounded, and having home cluster) \cite{lucent-PN2018}} \\
    \hline \hline
structural & proper & \checkmark & \checkmark ~ (implied)\\ \cline{2-4}
properties & strongly-connected & - & \checkmark  ~ (implied)\\
    \hline \hline
dynamic & bounded & \checkmark ~ (implied) & \checkmark\\ \cline{2-4}
properties & live & - & \checkmark\\ \hline
\end{tabular}
\caption{Corollary~\ref{corr:home-lu} extends the results in \cite{lucent-PN2018} to nets that may be non-live and not strongly-connected (requirements are denoted by $\checkmark$).}\label{tab:prop}
\end{table}

Theorem~3 in \cite{lucent-PN2018} states that any perpetual marked free-choice net is lucent.
Corollary~\ref{corr:home-lu} generalizes this statement, as shown in Table~\ref{tab:prop}.
In the remainder of this section, we relate both settings.

\begin{proposition}[Perpetual Nets Are a Subclass of Free-Choice Nets Having a Home Cluster]\label{lemma:implication}
Let $(N,M)$ be a marked free-choice net.
If $(N,M)$ is perpetual, then $(N,M)$ is proper and has a home cluster.
\end{proposition}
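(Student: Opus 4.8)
The plan is to treat the two conclusions separately, since one is essentially definitional and the other carries the real content. Having a home cluster is immediate: by Definition~\ref{def:perpmn} a perpetual net is live, bounded, \emph{and} has a home cluster, so there is nothing to prove for that part. The work therefore reduces to showing that a perpetual (hence live and bounded) marked net is proper in the sense of Definition~\ref{def:proppn}, i.e.\ that $\pre{t} \neq \emptyset$ and $\post{t} \neq \emptyset$ for every $t \in T$.

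My preferred route is to reuse the classical structural fact, already recalled in this paper, that a live and bounded net is strongly connected (so that well-formed nets, and in particular perpetual nets, are strongly connected). Since $(N,M)$ is live and bounded, $N$ is well-formed by Definition~\ref{def:lb}, hence strongly connected. In a strongly connected graph on at least two nodes every node lies on a directed cycle and therefore has both an incoming and an outgoing arc; applied to each transition this yields $\pre{t} \neq \emptyset$ and $\post{t} \neq \emptyset$, which is exactly properness. Note that the free-choice hypothesis is not even needed here; liveness and boundedness alone suffice.

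To keep the argument self-contained, I would alternatively derive properness directly. If some $t$ had $\pre{t} = \emptyset$, then $t$ is enabled in every reachable marking; weak connectivity of $(P \cup T, F)$ together with $P \neq \emptyset$ forces $\post{t} \neq \emptyset$, and firing $t$ arbitrarily often then pumps unboundedly many tokens into a place of $\post{t}$, contradicting boundedness. If instead some $t$ had $\post{t} = \emptyset$, then $t$ is a ``sink'': each firing strictly lowers the total token count and returns nothing, so the tokens it consumes can never be replenished through $t$ itself; combined with liveness (which forces $t$ to be re-enabled from every reachable marking) and boundedness, this should yield a contradiction.

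The main obstacle is precisely this last case, $\post{t} = \emptyset$: ruling out a sink transition in a live bounded net is the only step that is not a one-line consequence of the definitions, and a fully rigorous bespoke token-counting argument is fiddlier than it first appears. For this reason I expect the cleanest writeup to lean on the strong-connectedness theorem of the first approach rather than to reprove it by hand, and to dispatch the home-cluster conclusion in a single sentence from Definition~\ref{def:perpmn}.
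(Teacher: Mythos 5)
Your proposal is correct and follows essentially the same route as the paper: the home-cluster part is dispatched directly from the definition of perpetuality, and properness is obtained from the classical theorem that a (connected) live and bounded net is strongly connected (Theorem~2.25 in the Desel--Esparza book, which the paper cites), since in a strongly connected graph with at least two nodes every transition has a nonempty preset and postset. Your observation that the free-choice hypothesis is not needed for this step is also accurate, and wisely you defer to the cited theorem rather than the fiddlier bespoke token-counting argument you sketch as an alternative.
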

\begin{proof}
A marked free-choice net $(N,M)$ is a perpetual net if and only if 
it is live, bounded, and has a home cluster.
Hence, we only need to show that $(N,M)$ is proper.
This follows directly from the fact that well-formed nets are strongly-connected (Theorem 2.25 in \cite{deselesparza}).
\end{proof}

The reverse does not need to hold, as is demonstrated by figures~\ref{fig-intro-fc} and 
\ref{fig-home-lucent}.
The proof of Theorem~3 in \cite{lucent-PN2018} is also incomplete.
The proof in \cite{lucent-PN2018} can be repaired, but this requires reasoning over a stacked array of P-components, making things overly complicated.
It is also possible to use a different approach using a so-called T-reduction showing the absence of conflict pairs, 
see Theorem~6 in \cite{reduction-wvda-PN2021}.
In a T-reduction proper $t$-induced T-nets are ``peeled off'' until a T-net (i.e., marked graph) 
remains (this is related to the notion of CP-nets used in \cite{deselesparza}).
The reduction preserves liveness, boundedness, perpetuality, pc-safeness, and other properties.
Starting from a perpetual well-formed free-choice net and a T-reduction, it can be shown that
lucency is preserved in the ``upstream'' direction.
Since for marked graphs it is easy to show lucency, 
this implies that any perpetual marked free-choice net is lucent.

Selected results from Section~\ref{sec:arelucent} can also be used to repair the proof in \cite{lucent-PN2018} in a more direct manner 
without using existing results for well-formed free-choice nets. 
In this more limited setting, our approach can be further simplified by exploiting safeness and liveness.

For strongly-connected marked free-choice nets, having a home cluster implies perpetuality (i.e., liveness and boundedness are implied).
Moreover, such nets are also safe.

\begin{proposition}[Properties of Strongly-Connected Free-Choice Nets Having a Home Cluster]\label{prop:live-safe}
A strongly-connected marked free-choice net $(N,M)$ 
having a home cluster $C$ is live, safe, and lucent.
\end{proposition}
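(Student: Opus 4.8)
The plan is to dispatch \emph{safe} and \emph{lucent} immediately and then concentrate all the effort on \emph{live}. First I would observe that a strongly-connected net is automatically proper: every node has at least one incoming and one outgoing arc, so in particular every transition has a nonempty preset and postset. Consequently $(N,M)$ falls within the scope of Section~\ref{sec:arelucent}, and safeness follows from Corollary~\ref{corr:safe}, boundedness from Corollary~\ref{corr:dommark}, and lucency from Corollary~\ref{corr:home-lu}. Nothing new is needed for two of the three claims; the whole proposition reduces to proving liveness.

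Next I would establish \emph{deadlock-freeness} as a stepping stone. If $(N,M)$ had a dead reachable marking, then by Proposition~\ref{prop:cluschar} we would have $\mi{Tr}(C)=\emptyset$ and $C=\{p_C\}$ a singleton, forcing $\post{p_C}=\emptyset$; this contradicts strong-connectedness, which requires $p_C$ to have an outgoing arc. Hence $(N,M)$ is deadlock-free and, again by Proposition~\ref{prop:cluschar}, $\mi{Tr}(C)\neq\emptyset$. I would then reduce liveness to a statement about a single finite graph. Because $\mi{Mrk}(C)$ is a home marking it is reachable from $M$ and from every marking it can itself reach, so (using boundedness) the reachability graph on $R(N,\mi{Mrk}(C))$ is finite and \emph{strongly connected}. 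This has two consequences: a transition enabled in some marking of $R(N,\mi{Mrk}(C))$ is enabled in a marking reachable from \emph{every} marking of $R(N,\mi{Mrk}(C))$; and to prove $(N,M)$ live it suffices to show that every transition is enabled in some marking of $R(N,\mi{Mrk}(C))$ (from an arbitrary reachable marking one first walks to $\mi{Mrk}(C)$). Thus liveness is equivalent to the absence of transitions that are dead from $\mi{Mrk}(C)$.

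The hard part is the last step: ruling out dead transitions. I would argue by contradiction, assuming some transition $t$ is never enabled from $\mi{Mrk}(C)$. The free-choice property upgrades this to a cluster-level fact: since all transitions of $\cluster{t}$ share the preset $\mi{Pl}(\cluster{t})$ (Definition~\ref{def:fcne}), enabling any one of them would enable $t$, so by strong connectivity of the reachability graph \emph{no} transition of $\cluster{t}$ ever fires and $\mi{Pl}(\cluster{t})$ is never simultaneously marked. I would then turn the set of never-marked places into an unmarked \emph{siphon}: such a siphon, once token-free, stays token-free, and since it persists all the way to $\mi{Mrk}(C)$ it is disjoint from $\mi{Pl}(C)$, which I would confront with strong-connectedness and the reachability of the home cluster to force a contradiction. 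Equivalently --- and this is the natural move in a section devoted to the relation with classical theory --- one may invoke Commoner's siphon/trap characterisation of liveness for free-choice systems \cite{deselesparza}, verifying the condition at $\mi{Mrk}(C)$ with the help of strong-connectedness and the home cluster.

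I expect this liveness argument to be the main obstacle, because converting ``some cluster is never fully marked'' into a genuine \emph{structural} contradiction is exactly the point where the dynamics must be tied back to the network topology. Before falling back on the classical machinery I would try the tools already developed here: the existence of a $C$-rooted disentangled path from every non-dead place (Corollary~\ref{corr:pathexists}) and the safeness of such paths (Lemma~\ref{lem:pathsafety}) suggest routing a controlled token from $\mi{Mrk}(C)$ along a disentangled path towards the allegedly dead cluster, using the free-choice control over trajectories (as in the proof of Lemma~\ref{lem:pathsafety}) to enable one of its transitions and thereby contradict deadness. Whichever route succeeds, the payoff is that liveness, together with the already-established boundedness and home cluster, yields perpetuality, so the strongly-connected case recovers exactly the class studied in \cite{lucent-PN2018}.
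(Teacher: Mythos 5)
Your proposal is correct and follows essentially the same route as the paper: properness from strong-connectedness, then Corollary~\ref{corr:safe} for safeness and Corollary~\ref{corr:home-lu} for lucency, and for liveness the same argument of routing a token along a path from $C$ to the target transition, steering the choices via the free-choice property and never getting stuck because $\mi{Mrk}(C)$ remains reachable (the paper likewise points to the reasoning in Lemma~\ref{lem:pathsafety}). The extra scaffolding you add (deadlock-freeness via Proposition~\ref{prop:cluschar}, the reduction to markings reachable from $\mi{Mrk}(C)$, and the optional Commoner-style siphon fallback) is sound but not required by the paper's terser version of the same argument.
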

\begin{proof}
Let $(N,M)$ be a strongly-connected marked free-choice net having a home cluster $C$.
$N$ is proper because $N$ is strongly-connected.
Hence, we can apply Corollary~\ref{corr:safe} to show that $(N,M)$ is safe.
Corollary~\ref{corr:home-lu} can be used to show that $(N,M)$ is lucent.
Any transition $t$ is on a path from starting in $C$. 
It is possible to create a firing sequence starting in $\mi{Mrk}(C)$ enabling $t$ by following this path. 
This is due to the free-choice property and the fact that we cannot ``get stuck on the way'' (it is always possible to return to $\mi{Mrk}(C)$).
See the proof of Lemma~\ref{lem:pathsafety} for a similar reasoning.
Hence, $(N,M)$ is live.
\end{proof}

To explore the relationship between both settings in more detail, 
we take a proper Petri net with a safe initial marking $M$ and a selected cluster $C$.
We add a transition $t_C$ that extends the cluster and that marks all places in $M$,
i.e., $\pre{t_C} = C \cap P$ and $\post{t_C} = \{p \in M\}$.
$t_C$ short-circuits the original net in an attempt to make it strongly-connected.
To achieve this, we also need to remove the nodes for which there is no path from the initially marked places.

\begin{definition}[Short-Circuited Cleaned Nets]\label{def:short-circ}
Let $N= (P,T,F)$ be proper Petri net having a cluster $C$ and an initial marking $M$ that is safe.
\begin{itemize}[noitemsep,topsep=2pt]
\item $\mi{conn}(N,M) = \{ x_n \mid \langle x_1,x_2, \ldots ,x_n \rangle \in \mi{paths}(N) \ \wedge \ x_1 \in M \}$ are all nodes that are on a path starting in an initially marked place.
\item $\mi{clean}(N,M) = (P',T',F'\cap ((P' \times T') \cup (T' \times P')))$ with
$P' = P \cap \mi{conn}(N,M)$, and $T' = T \cap \mi{conn}(N,M)$ is the net containing all places and transitions on paths starting in an initially marked place.
\item $\mi{short\_circ}(N,C,M) = (P,T\cup \{t_C\},F \cup (\mi{Pl}(C) \times \{t_C\}) 
\cup (\{t_C\}\times \{p \in M\}) )$ is the short-circuited net (adding a ``fresh'' transition $t_C \not\in T$
with $\pre{t_C} = C \cap P$ and $\post{t_C} = \{p \in M\}$).
\item $N_{C,M} = \mi{short\_circ}(\mi{clean}(N,M),\allowbreak C,M)$ applies the two operations in sequence.
\item $\hat{C} = C \cap \{t_c\}$ is used to denote the extended cluster (note that this is only a cluster of $N_{C,M}$ if $C \subseteq \mi{conn}(N,M)$).
\end{itemize}
\end{definition}

In an attempt to create a strongly-connected net, we first remove all ``dead nodes'' and then short-circuit the net by connecting a selected cluster to the initially marked places.
If all nodes of $C$ are on a path starting in an initially marked place, then 
$\hat{C} = C \cap \{t_c\}$ is indeed a cluster of $N_{C,M}$ (otherwise not).

\begin{proposition}[Short-Circuited Cleaned Nets Are Strongly-Connected]\label{prop:scc-nets-are-cc}
Let $(N,M)$ be a safely marked proper free-choice net having a cluster $C$ such that $C \subseteq \mi{conn}(N,M)$.
The short-circuited cleaned net $N_{C,M} = \mi{short\_circ}(\mi{clean}(N,M),\allowbreak C,M)$ is
strongly-connected and free-choice, and
$\hat{C} = C \cap \{t_c\} \in \cluster{N_{C,M}}$
(i.e., $\hat{C}$ is indeed a cluster of $N_{C,M}$).
\end{proposition}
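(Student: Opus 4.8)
The plan is to handle the three assertions in turn, treating the free-choice property and the cluster identity as local structural checks and concentrating the real effort on strong connectivity. For the free-choice property I would first observe that $\mi{clean}(N,M)$ is a node-induced subnet of $N$, so every preset can only lose places. Since in a free-choice net any two presets are equal or disjoint (Definition~\ref{def:fcne}), deleting places preserves equality and cannot create a fresh overlap, so $\mi{clean}(N,M)$ is again free-choice. Adding $t_C$ is where the cluster structure enters: in a free-choice net every transition of $C$ has preset exactly $\mi{Pl}(C)$, and any transition with an input place in $\mi{Pl}(C)$ already belongs to $C$ by Definition~\ref{def:clust}. Since $\pre{t_C} = \mi{Pl}(C)$ by construction and $C \subseteq \mi{conn}(N,M)$ guarantees that all of $\mi{Pl}(C)$ survives cleaning, $\pre{t_C}$ coincides with the presets inside $C$ and is disjoint from every other preset, so $N_{C,M}$ is free-choice. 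The same local picture — each $p \in \mi{Pl}(C)$ now has $\post{p} \subseteq \mi{Tr}(C) \cup \{t_C\}$ while $\pre{t_C} = \mi{Pl}(C)$ — shows that the closure rules of Definition~\ref{def:clust}, started at any node of $\hat{C}$, produce exactly $\mi{Pl}(C) \cup \mi{Tr}(C) \cup \{t_C\}$ and nothing more, so $\hat{C} \in \cluster{N_{C,M}}$.

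For strong connectivity I would take $t_C$ as a root and show that it both reaches and is reached from every node. The forward direction is immediate: $\post{t_C} = \{p \in M\}$ links $t_C$ to every initially marked place, and by the definition of $\mi{conn}(N,M)$ every surviving node lies on a directed path out of some initially marked place, so $t_C$ reaches all nodes. The backward direction reduces to the single claim that every surviving node can reach some place of $\mi{Pl}(C)$, since the edge from such a place to $t_C$ then closes the path. I would prove this with the token-following argument of Corollary~\ref{corr:pathexists}: cleaning preserves properness (a surviving transition keeps all of its output places, which are again reachable from $M$, and it retains at least one input place), so a token placed on any markable place cannot disappear and, with $C$ acting as the drain of the net, must flow into $\mi{Pl}(C)$, tracing a directed path to $C$. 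Combining the two directions yields strong connectivity.

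The step I expect to be the main obstacle is precisely this backward reachability. Properness on its own is not enough — a sink place that does not feed into $C$ would immediately break strong connectivity — so the argument must genuinely exploit that every node drains into $\mi{Pl}(C)$, i.e.\ the home-cluster role of $C$ that underlies Corollary~\ref{corr:pathexists} and Lemma~\ref{lem:pathsafety}. I would therefore have to be careful that cleaning retains exactly the nodes to which the draining argument applies (those reachable from $M$) and that no token can be stranded before reaching $C$; everything else in the proof is comparatively routine bookkeeping about presets and the extra transition $t_C$.
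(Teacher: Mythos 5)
Your treatment of the free-choice property and of the cluster identity of $\hat{C}$ is correct and is essentially the paper's argument carried out with more care (the paper just notes that $\pre{t} = \pre{t_C}$ for $t \in \mi{Tr}(C)$ and that otherwise $C$ is a singleton place). The problem is the strong-connectivity claim, and specifically the backward direction that you yourself flag as the crux. Your proposed argument --- tokens drain into $\mi{Pl}(C)$, hence every node has a path to $C$ and thence to $t_C$ --- invokes the token-following construction of Corollary~\ref{corr:pathexists}, which is only available when $C$ is a \emph{home} cluster. Proposition~\ref{prop:scc-nets-are-cc} assumes only that $C$ is a cluster with $C \subseteq \mi{conn}(N,M)$, so your proof imports a hypothesis that is not there. (Even granting it, there is a residual mismatch: $\mi{conn}(N,M)$ is purely graph-theoretic, so a node can survive cleaning while being dead, and Corollary~\ref{corr:pathexists} speaks only of non-dead places.)

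That said, you have put your finger on a real defect rather than invented one. The paper's own proof establishes only that every node is reachable \emph{from} $t_C$ (via $\post{t_C} = \{p \in M\}$ and the definition of $\mi{conn}$) and that $t_C$ is reachable from an initially marked place; it never argues that an arbitrary node can reach $t_C$, which is exactly the half you identify as hard. Without the home-cluster assumption the statement is in fact false: in $(N_1,M_1)$ of Figure~\ref{fig-intro-fc}, take $C = \cluster{p1}$ (not the home cluster); then $C \subseteq \mi{conn}(N_1,M_1)$, the net is proper, free-choice and safely marked, yet the sink place $p4$ has no path to $t_C$ in $N_{C,M}$. So the honest verdict is that your proof does not close under the stated hypotheses, but neither does the paper's; the proposition needs the home-cluster hypothesis (which is available where it is actually used, in the step $(2)\Rightarrow(3)$ of Theorem~\ref{theo:relation}), and with that hypothesis your draining argument, extended to cover graph-reachable but dead nodes, is the right way to supply the missing half.
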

\begin{proof}
All nodes in $\mi{clean}(N,M)$ are reachable from an initially marked place
(including the nodes in $C$ because $C \subseteq \mi{conn}(N,M)$).
Hence, $t_c$ is also reachable from an initially marked place 
and $t_c$ is connected to this place. Therefore, the net is strongly-connected. 
Adding $t_c$ cannot destroy the free-choice property. 
If there is a transition $t \in C$, then $\pre{t}=\pre{t_c}$.
If not, then $C$ has just one place.
Therefore, $N_{C,M}$ is free-choice and has a new cluster $\hat{C} = C \cap \{t_c\}$.
\end{proof}

Under the assumption that cluster $C$ is preserved when short-circuiting the net,  $C$ is a home cluster of $(N,M)$ if and only if $\hat{C}$ is a home cluster of $(N_{C,M},M)$.
Moreover, this is equivalent to $(N_{C,M},M)$ being live and bounded,
and can be used to decide whether a free-choice net has a home cluster in polynomial time.

\begin{theorem}[Relating Both Settings]\label{theo:relation}
Let $(N,M)$ be a safely marked proper free-choice net having a cluster $C$ such that $C \subseteq \mi{conn}(N,M)$. The following three statements are equivalent:
\begin{itemize}[noitemsep,topsep=2pt]
\item[(1)] $C$ is a home cluster of $(N,M)$,
\item[(2)] $\hat{C}$ is a home cluster of $(N_{C,M},M)$, and
\item[(3)] $(N_{C,M},M)$ is live and bounded.
\end{itemize}
\end{theorem}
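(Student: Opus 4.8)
The plan is to establish the three-way equivalence by proving a cycle of implications, most naturally $(3) \Rightarrow (2) \Rightarrow (1) \Rightarrow (3)$, exploiting the structural fact from Proposition~\ref{prop:scc-nets-are-cc} that $N_{C,M}$ is strongly-connected and free-choice with $\hat{C}$ a genuine cluster. The cleaning operation $\mi{clean}(N,M)$ only removes nodes that are not reachable from the initially marked places; since a token can never appear on such a node, this operation changes neither $R(N,M)$ nor the reachability relation, so I would first note that we may reason with $\mi{clean}(N,M)$ in place of $N$ without loss of generality. The short-circuiting transition $t_C$ fires exactly when $C$ is fully marked (its preset is $\mi{Pl}(C)$) and regenerates the initial marking $M$ (its postset is $\{p \in M\}$), so firing $t_C$ in $\mi{Mrk}(C)$ returns the net to $M$.

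For $(3) \Rightarrow (2)$, assume $(N_{C,M},M)$ is live and bounded, hence well-formed and perpetual once we know it has a home cluster. The cleanest route is to invoke the Home Marking Theorem for well-formed free-choice nets (or the existence-of-home-markings results cited in the excerpt), but since the paper deliberately avoids that machinery, I would instead argue directly: liveness guarantees $t_C$ can always be re-enabled from any reachable marking, and enabling $t_C$ forces $\mi{Mrk}(\hat{C}) = \mi{Mrk}(C)$ to be marked, so from every reachable marking one can reach a marking covering $\mi{Mrk}(C)$; boundedness together with Theorem~\ref{theo:dommark}-style domination reasoning then pins this down to exactly $\mi{Mrk}(C)$, making $\hat{C}$ a home cluster. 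For $(2) \Rightarrow (1)$, I observe that any firing sequence of $(N,M)$ is also a firing sequence of $(N_{C,M},M)$ (we only added arcs and a transition, never removed any), and $t_C$ is the unique transition touching the new arcs. A reachable marking $M'$ of $(N,M)$ reaches $\mi{Mrk}(C) = \mi{Mrk}(\hat{C})$ in $N_{C,M}$ by hypothesis; I must verify the witnessing sequence can be chosen to avoid $t_C$ until the very end, which holds because firing $t_C$ strictly before reaching the target only resets progress, so a shortest such sequence uses $t_C$ at most once and only terminally.

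For $(1) \Rightarrow (3)$, suppose $C$ is a home cluster of $(N,M)$. Boundedness of $(N_{C,M},M)$ follows from Corollary~\ref{corr:dommark} applied to $(N_{C,M},M)$ itself, since by $(1)\Leftrightarrow(2)$ (the parts already shown) $\hat{C}$ is a home cluster of the proper free-choice net $N_{C,M}$, and the no-dominating-marking theorem yields boundedness. For liveness I would use Proposition~\ref{prop:live-safe}: $N_{C,M}$ is a strongly-connected marked free-choice net (Proposition~\ref{prop:scc-nets-are-cc}) having the home cluster $\hat{C}$, and that proposition states precisely that such nets are live and safe. This makes the liveness conclusion essentially immediate once the strongly-connected structure and the home cluster are in hand.

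The main obstacle I anticipate lies in the $(2)\Rightarrow(1)$ and $(1)\Rightarrow(2)$ directions, specifically the bookkeeping around $t_C$: reachability in $N_{C,M}$ is a priori richer than in $N$ because sequences may cycle through $t_C$ arbitrarily often, so I must carefully argue that home-cluster reachability in one net can be realized in the other by controlling when $t_C$ fires. The safe passage is the observation that firing $t_C$ from any marking $M'' \geq \mi{Mrk}(C)$ produces $M$ (or a marking dominating it), combined with the domination bound from Theorem~\ref{theo:dommark}, which forces $M'' = \mi{Mrk}(C)$ exactly and prevents spurious extra tokens; this simultaneously closes the reachability transfer and secures boundedness. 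The liveness claim, by contrast, reduces cleanly to the already-proven Proposition~\ref{prop:live-safe} and should require no new argument.
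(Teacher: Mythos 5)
Your cycle $(3)\Rightarrow(2)\Rightarrow(1)\Rightarrow(3)$ has a circularity in its last leg. To prove $(1)\Rightarrow(3)$ you invoke ``by $(1)\Leftrightarrow(2)$ (the parts already shown)'' to conclude that $\hat{C}$ is a home cluster of $(N_{C,M},M)$, but at that point you have only established $(3)\Rightarrow(2)$ and $(2)\Rightarrow(1)$; the implication $(1)\Rightarrow(2)$ is not among the parts already shown and cannot be derived from them. Since your route to both boundedness (via Corollary~\ref{corr:dommark}) and liveness (via Proposition~\ref{prop:live-safe}) of $(N_{C,M},M)$ passes through statement (2), the leg $(1)\Rightarrow(3)$ as written does not close the cycle. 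The missing piece is a direct proof of $(1)\Rightarrow(2)$, which is exactly how the paper opens its own cycle $(1)\Rightarrow(2)\Rightarrow(3)\Rightarrow(1)$: apply Theorem~\ref{theo:dommark} to $(N,M)$ with home cluster $C$ to see that no reachable marking of $(N,M)$ strictly dominates $\mi{Mrk}(C)$, hence $t_C$ can only become enabled at exactly $\mi{Mrk}(C)$ and firing it there reproduces $M$; by induction on the number of $t_C$-firings the reachable markings of $(N_{C,M},M)$ coincide with those of $(N,M)$, and since every such marking can reach $\mi{Mrk}(C)=\mi{Mrk}(\hat{C})$, $\hat{C}$ is a home cluster of $(N_{C,M},M)$. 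You already have this observation in your closing paragraph (firing $t_C$ from any marking $M''\geq \mi{Mrk}(C)$ forces $M''=\mi{Mrk}(C)$), so the repair is small, but it must be stated as the proof of $(1)\Rightarrow(2)$ rather than assumed as a consequence of the other two legs.

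Apart from this, your arguments are sound and essentially mirror the paper's, traversed in the opposite direction: your $(3)\Rightarrow(2)$ uses the paper's $(3)\Rightarrow(1)$ argument (if $t_C$ were enabled at some $M''>\mi{Mrk}(C)$, firing it would produce a marking strictly above $M$ and hence unboundedness), your $(2)\Rightarrow(1)$ truncation of the witnessing sequence at the first $t_C$-firing is a correct and somewhat more explicit version of the reachability transfer that the paper treats tersely, and your appeal to Propositions~\ref{prop:scc-nets-are-cc} and~\ref{prop:live-safe} for liveness is exactly the paper's $(2)\Rightarrow(3)$. Once the direct proof of $(1)\Rightarrow(2)$ is supplied, everything goes through.
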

\begin{proof}
Let $N= (P,T,F)$ be a proper free-choice net having a cluster $C$ 
and an initial marking $M$ that is safe.
$N_{C,M} = \mi{short\_circ}(\mi{clean}(N,M),\allowbreak C,M)$ and $t_C$ is the short-circuiting transition. 
$C \subseteq \mi{conn}(N,M)$, i.e., all cluster nodes are reachable from an initially marked place.

First, we show that (1) $\Rightarrow$ (2). 
Assume that $C$ is a home cluster of $(N,M)$.
Under this assumption, we consider the reachable markings of $(N_{C,M},M)$.
These include the markings of $(N,M)$, but nothing more.
The moment all places in $C$ are marked, the other places are empty.
In $(N_{C,M},M)$ there is an additional transition $t_C$ 
that is enabled if all places in $\hat{C}$ are enabled. 
If $t_C$ fires in $\mi{Mrk}(C) = \mi{Mrk}(\hat{C})$, 
then we reach the initial state $M$ again.
Hence, the set of reachable markings is the same and 
$\hat{C}$ is a home cluster of $(N_{C,M},M)$.

Second, we show that (2) $\Rightarrow$ (3). 
$\hat{C}$ be a home cluster of $(N_{C,M},M)$.
Proposition~\ref{prop:scc-nets-are-cc} shows that 
$N_{C,M}$ is strongly-connected and free-choice.
Using Proposition~\ref{prop:live-safe} this implies that
$(N_{C,M},M)$ is live and safe (i.e., also bounded). 

Finally, we show that (3) $\Rightarrow$ (1).
Let $(N_{C,M},M)$ be live and bounded.
This implies that also $t_C$ is live and can be repeatedly be enabled.
When $t_C$ is enabled, the places in $\hat{C}$ are marked, i.e.,
$t_C$ can only be enabled in a marking $M'$ such that $M' \geq \mi{Mrk}(C)$.
It is impossible that $M' > \mi{Mrk}(C)$. If so, it would be possible to reach a marking larger than the initial marking yielding an unbounded net by firing $t_C$.
Hence, $M' = \mi{Mrk}(C)$ is the only reachable marking enabling $t_C$.
Therefore, the set of reachable markings of $(N_{C,M},M)$ and $(N,M)$ are the same. As a result, $\mi{Mrk}(C)$ can be reached from any reachable marking starting in $(N,M)$. This implies that $C$ is a home cluster of $(N,M)$.

Combining (1) $\Rightarrow$ (2), (2) $\Rightarrow$ (3), and (3) $\Rightarrow$ (1)
shows that the three statements are equivalent.
\end{proof}

We can apply Theorem~\ref{theo:relation} to all clusters of the net. 
Therefore, the problem of deciding whether marked proper free-choice net has a home cluster
can be converted into a liveness and boundedness question, allowing us to solve the problem in polynomial time.

\begin{corollary}[Complexity of Home Cluster Detection]\label{corr:complex}
The following problem is solvable in polynomial time:
Given a marked proper free-choice net, to decide whether there is a home cluster.
\end{corollary}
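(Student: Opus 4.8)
The plan is to turn Theorem~\ref{theo:relation} into an algorithm that ranges over all clusters. First I would note that $\cluster{N}$ partitions $P \cup T$, so there are at most $\card{P}+\card{T}$ clusters, and they can be computed in polynomial time by a single pass over the place--transition adjacencies (following Definition~\ref{def:clust}). For each cluster $C$ I would compute $\mi{conn}(N,M)$ by a graph search from the initially marked places and discard $C$ unless $C \subseteq \mi{conn}(N,M)$; this discarding is harmless, because a home cluster must have a reachable $\mi{Mrk}(C)$, which forces every node of $C$ onto a token-flow path from $M$, i.e. $C \subseteq \mi{conn}(N,M)$. For each surviving candidate I would build $N_{C,M} = \mi{short\_circ}(\mi{clean}(N,M),C,M)$, which by Definition~\ref{def:short-circ} adds a single transition with its incident arcs and then restricts to the connected part (again polynomial) and is free-choice and strongly-connected by Proposition~\ref{prop:scc-nets-are-cc}.

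The heart of the reduction is that, by Theorem~\ref{theo:relation}, $C$ is a home cluster of $(N,M)$ if and only if $(N_{C,M},M)$ is live and bounded. Hence the decision problem reduces to at most $\card{P}+\card{T}$ instances of ``is this free-choice net live and bounded?''. I would then invoke the structure theory of free-choice nets: liveness and boundedness of a marked free-choice net are decidable in polynomial time \cite{deselesparza,structure-theory-ToPNoC-advanced-course2010} (e.g. via the trap/siphon characterization of liveness together with S-coverability for boundedness). The algorithm answers ``yes'' exactly when some candidate cluster passes this test; each test and each construction is polynomial and there are linearly many candidates, so the whole procedure runs in polynomial time.

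The subtle point, and the step I expect to be the main obstacle, is reconciling the hypotheses: Theorem~\ref{theo:relation} assumes that $M$ is safe, whereas the corollary only posits a marked proper free-choice net. I would discharge this using Corollary~\ref{corr:safe}: if $(N,M)$ genuinely has a home cluster then it is already safe, so the forward implication (a true home cluster makes its test succeed) follows from Theorem~\ref{theo:relation} applied to the now-safe net. For the converse I must rule out false positives on a non-safe net: if $(N_{C,M},M)$ is live and bounded then $t_C$ is live and, by boundedness, can be enabled only at exactly $\mi{Mrk}(C)$, so $\mi{Mrk}(\hat C) = \mi{Mrk}(C)$ is a home marking and $\hat C$ a home cluster of the strongly-connected $N_{C,M}$; Proposition~\ref{prop:live-safe} then yields that $(N_{C,M},M)$ is safe, and since the reachable markings of $(N,M)$ embed into those of $N_{C,M}$ (cleaning only deletes perpetually empty nodes), $(N,M)$ is safe as well, legitimizing the use of Theorem~\ref{theo:relation}. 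The remaining reliance on the external polynomial-time liveness/boundedness result is the one genuinely heavy ingredient; everything else is bookkeeping over a linear number of cheaply constructed instances.
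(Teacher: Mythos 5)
Your proposal is correct and follows essentially the same route as the paper: enumerate the (linearly many) clusters, test $C \subseteq \mi{conn}(N,M)$, and reduce each surviving candidate to a liveness-and-boundedness check on $N_{C,M}$ via Theorem~\ref{theo:relation}, invoking the known polynomial-time decidability of liveness and boundedness for free-choice nets (Corollary~6.18 in \cite{deselesparza}). Your additional discussion of the safety hypothesis of Theorem~\ref{theo:relation} actually goes beyond the paper's own proof, which applies the theorem without comment; the forward direction you justify via Corollary~\ref{corr:safe} is the clean way to discharge that hypothesis.
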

\begin{proof}
Let $(N,M)$ be a marked proper free-choice net with $N=(P,T,F)$.
There are at most $\card{P}$ clusters. For each cluster $C$, we check whether 
$C$ is a home cluster of $(N,M)$. This is the same as checking whether $C \subseteq \mi{conn}(N,M)$ and
$(N_{C,M},M)$ is live and bounded. 
The former requirement is merely a syntactical check to ensure that
cluster $C$ is preserved when short-circuiting the net.
The latter requirement is known to be solvable in polynomial time 
(see, for example, Corollary~6.18 in \cite{deselesparza}).
Hence, deciding whether there is a home cluster can also be solved in polynomial time.
\end{proof}

The above result is remarkable because it also applies to non-well-formed nets.

\section{Conclusion}
\label{sec:concl}

This paper shows that marked proper free-choice nets 
having a home cluster are lucent. 

A system is \emph{lucent} if the set of enabled actions
uniquely characterizes the state of the system.
The user interface of an information system or
the worklist provided by a workflow management system offers 
possible actions to its users.
If the system is not lucent, the system may behave differently in seemingly identical situations. 
The notion of lucency was introduced in \cite{lucent-PN2018}
and, given its foundational nature, it is surprising 
that this was not investigated before.

The paper focuses on \emph{marked proper free-choice nets 
having a home cluster} and uses novel concepts such as \emph{rooted disentangled paths} and \emph{conflict-pairs} to reason about 
the behavior of such models. Most of the work on free-choice nets 
is restricted to well-formed nets. However, the liveness requirement is unsuitable for many application domains.
Many systems and processes are \emph{terminating} and/or have an \emph{initialization} phase. These are excluded by most of the existing work. 
As shown in this paper, we can often short-circuit the net and apply existing techniques.
However, the approach used in this paper is direct without using any results for well-formed free-choice nets.

Future work aims to extend the class of systems for which lucency can be proven. However, this will not be easy since unbounded nets or nets with long-term dependencies are inherently non-lucent.
More promising is the further investigation of Petri nets with home clusters. Ideas such as rooted disentangled paths and conflict-pairs have a broader applicability and may be used to generalize some of the results known for well-formed (free-choice) Petri nets.
For example, is it possible to create reduction and synthesis rules?

The idea to look into lucency originated from challenges in the field of \emph{process mining} (where observed behavior without state information is converted into process models that have states).
What if event logs would not only show the actions executed, but also what was possible, but did not happen?
In \cite{lucent-translucent-fi-2019} the notion of translucent event logs is introduced, and a baseline discovery algorithm is given. Given such information, it is much easier to discover process models. Another direction for future research is to create process mining techniques tailored towards discovering a marked proper free-choice net having a home cluster from a standard event log. Current approaches often aim to discover workflow nets that are (relaxed) sound. 
Heuristic approaches do not ensure soundness.
Region-based techniques tend to create unreadable models.
Inductive mining techniques tend to produce underfitting models.
Therefore, there is room for exploring alternative representational biases in process mining.

~\\
{\bf Acknowledgements}: The author thanks the Alexander von Humboldt (AvH) Stiftung for supporting our research.
Special thanks go to the persistent anonymous reviewer for providing detailed comments that helped to improve the readability of the proofs.

\bibliographystyle{fundam}
\bibliography{lit}

\end{document}